\newcommand*\patchAmsMathEnvironmentForLineno[1]{%
  \expandafter\let\csname old#1\expandafter\endcsname\csname #1\endcsname
  \expandafter\let\csname oldend#1\expandafter\endcsname\csname end#1\endcsname
  \renewenvironment{#1}%
     {\linenomath\csname old#1\endcsname}%
     {\csname oldend#1\endcsname\endlinenomath}}%
\newcommand*\patchBothAmsMathEnvironmentsForLineno[1]{%
  \patchAmsMathEnvironmentForLineno{#1}%
  \patchAmsMathEnvironmentForLineno{#1*}}%
\let\doendproof\endproof
\renewcommand\endproof{~\hfill$\qed$\doendproof}
\title{The Geometry of Reachability in Continuous Vector Addition Systems with States}
\titlerunning{The Geometry of Reachability in Continuous VASS}
\author{Shaull Almagor\inst{1}
\and Arka Ghosh\inst{2}\thanks{A. Ghosh was partially supported by the NCN grant 2019/35/B/ST6/02322.} \and Tim Leys\inst{3}
\and
Guillermo A. P\'erez\inst{3}
}
\authorrunning{S. Almagor et al.}
\institute{
Technion - Israel Institute of Technology, Israel\\
\and University of Warsaw, Poland\\
\and University of Antwerp -- Flanders Make, Belgium\\
\email{\{tim.leys, guillermo.perez\}@uantwerpen.be}}
\newcommand{\calV}{\mathcal{V}}
\newcommand{\normtwo}[1]{\left\lVert #1 \right\rVert_2}
\newcommand{\Reach}{\mathrm{Reach}}
\newclass{\ETR}{ETR}
\newcommand{\interior}{\mathrm{int}}
\newcommand{\closure}{\mathrm{cl}}
\newcommand{\boundary}{\mathrm{bd}}
\renewcommand{\vec}[1]{\boldsymbol{#1}}
\newcommand{\cone}{\mathrm{cone}}
\newcommand{\linspan}{\mathrm{span}}
\newcommand{\convexhull}{\mathrm{conv}}
\newcommand{\zono}{\mathrm{zono}}
\newcommand{\adj}{\mathrm{adj}}
\newcommand{\aff}{\mathrm{aff}}
\newcommand{\relint}{\mathrm{relint}}
\newcommand*{\da@rightarrow}{\mathchar"0\hexnumber@\symAMSa 4B }
\newcommand*{\xdashrightarrow}[2][]{%
  \mathrel{%
    \mathpalette{\da@xarrow{#1}{#2}{}\da@rightarrow{\,}}{}%
  }%
}
\newcommand*{\da@xarrow}[7]{%
  \sbox0{$\ifx#7\scriptstyle\scriptscriptstyle\else\scriptstyle\fi#5#1#6\m@th$}%
  \sbox2{$\ifx#7\scriptstyle\scriptscriptstyle\else\scriptstyle\fi#5#2#6\m@th$}%
  \sbox4{$#7\dabar@\m@th$}%
  \dimen@=\wd0 %
  \ifdim\wd2 >\dimen@
    \dimen@=\wd2 %
  \fi
  \count@=2 %
  \def\da@bars{\dabar@\dabar@}%
  \@whiledim\count@\wd4<\dimen@\do{%
    \advance\count@\@ne
    \expandafter\def\expandafter\da@bars\expandafter{%
      \da@bars
      \dabar@ 
    }%
  }%
  \mathrel{#3}%
  \mathrel{%
    \mathop{\da@bars}\limits
    \ifx\\#1\\%
    \else
      _{\copy0}%
    \fi
    \ifx\\#2\\%
    \else
      ^{\copy2}%
    \fi
  }%
  \mathrel{#4}%
}
\newcommand{\coloneqq}{=}
\begin{document}

\maketitle

\begin{abstract}
  We study the geometry of reachability sets of continuous vector addition
  systems with states (VASS). In particular
  we establish that they are ``almost'' Minkowski sums of convex cones and
  zonotopes generated by the vectors labelling the transitions of the VASS. We
  use the latter to prove that short so-called linear path schemes suffice as witnesses of reachability in continuous VASS of fixed dimension. Then, we give new polynomial-time algorithms for the reachability problem for linear path schemes. Finally, we also establish that
  enriching the model with zero tests makes the reachability problem intractable
  already for linear path schemes of dimension two.
  \keywords{Vector addition systems with states \and reachability \and
  continuous approximation}
\end{abstract}

\section{Introduction}
Vector Addition Systems with States (VASS, for short)
are rich mathematical models for the description of distributed systems, as well as chemical and biological processes, and more~\cite{schmitz2016complexity}. They comprise a finite control graph with natural-valued counters, which updates the counters by adding to them integer-valued vectors that label the edges of the graph.
VASS arise naturally as an arguably-cleaner model than Petri nets, due to their reachability problem being polytime-interreducible with that of Petri nets.

While VASS 
are a very expressive model of
concurrency that admits algorithmic analysis, the complexity of several
associated decision problems is --- in the worst case --- prohibitively high.
For instance, the \emph{rechability problem}, which asks ``is a given
target configuration reachable from a given initial configuration?'' was recently proved to be Ackermann-complete~\cite{co21}. 

\emph{Continuous VASS} were introduced by Blondin and
Haase~\cite{blondin2017logics} as an alternative to \emph{continuous Petri
nets}~\cite{ad10} which 
trade off the ability to encode discrete information in favor of
computational and practical benefits. A VASS essentially consists of a finite-state machine whose transitions are labelled with integer vectors. Besides the current state, a configuration of the VASS also comprises the current values of a set of counters. When a transition is taken, the state of the machine changes and the values of the counters are updated by adding to them the vector that labels the transition. A continuous VASS additionally allows to scale the update vector by some scalar $0 < \alpha \leq 1$ before adding it to the current counter values (see \autoref{fig:example-cvass}).

\begin{figure}[t]
    \centering
    \begin{tikzpicture}[initial text={},inner sep=1pt,minimum size=0mm]
        \node[state,initial left](q0){$q_0$};
        \node[state,right= of q0](q1){$q_1$};
        \node[state,right= of q1](q2){$q_2$};
        \node[state,right= of q2](q3){$q_3$};
        \path[->,auto]
        (q0) edge node[below]{$(1,0)$} (q1)
        (q0) edge[bend left] node{$(0,0)$} (q2)
        (q1) edge node[below]{$(0,1)$} (q2)
        (q2) edge[bend left] node{$(1,2)$} (q3)
        (q3) edge[bend left] node{$(2,4)$} (q2);
    \end{tikzpicture}
    \caption{A VASS of dimension $2$: From $q_0$ and with initial counter values $\vec{0}$, the state $q_2$ can be reached with counter values $\{(3i+a,6i+b) \mid (a,b) \in \{(0,0), (1,1)\}, i \in \mathbb{N} \}$; with continuous semantics, it can be reached with counter values $\vec{x} + \vec{y}$ where $\vec{x} \in \{(0,0)\} \cup \{(a,b) \mid 0 < a,b \leq 1\}$ and $\vec{y}$ lies on the ray $\{(i,2i) \mid i \in \mathbb{N}\}$}
    \label{fig:example-cvass}
\end{figure}
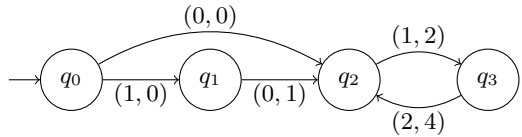

In contrast to the
situation with ``discrete'' VASS, the computational complexity of the
reachability problem for continuous VASS is rather low. Namely, in~\cite{blondin2017logics}
the reachability problem
for continuous VASS is shown to be \NP-complete while the complexity of the same problem for discrete VASS
is Ackermann-complete~\cite{co21}.

Despite the relatively low computational complexity, \NP{} is not universally
considered as tractable. 
The only subcase previously known to be in \P{} was that of cyclic reachability when counters
are allowed to hold negative values. It is also worth noting that the aforementioned \NP{} upper bound is obtained by encoding the reachability problem into
the existential fragment of the first-order theory of the reals with addition
and order. It is natural to ask whether more efficient algorithms or encodings
into ``simpler'' logics exist, e.g. linear programming, even if only for
particular subcases.

\paragraph{Reachability in fixed-dimension VASS.}
Recall that the reachability problem for discrete VASS is now known to be
Ackermann-complete. The upper bound is due to Leroux and Schmitz~\cite{ls19},
while the lower bound was independently proven by Leroux~\cite{leroux21} and
by Czerwi\'nski and Orlikowski~\cite{co21}. The latter, relatively new, lower
bound has renewed interest in what could be named the \emph{Bordeaux-Warsaw
program} --- that is, the study of the computational complexity of the
reachability problem for low-dimensional VASS and extensions thereof (see,
e.g.,~\cite{cllp19,clllm20,clo21,fls18}).  Indeed, one hopes that in such
cases, there may be efficient algorithms for the problem. Additionally, to
quote Czerwi\'nski and Orlikowski~\cite{co21}: ``it is easier to [design]
sophisticated techniques working in a simpler setting [that might] result in
finding new techniques useful in much broader generality.'' For dimensions $1$
and $2$ (and counter updates encoded in binary) the problem is known to be
\NP-complete~\cite{hkow09} and \PSPACE-complete~\cite{blondin21},
respectively.

An important structural restriction on VASS which is often used as an
intermediate step in establishing upper bounds is that of \emph{flatness},
i.e. disallowing nested cycles. In fact, the upper bounds for dimensions $1$
and $2$ mentioned above can be seen as a consequence of such VASS being
effectively flattable~\cite{ls04}. A further restriction consists of asking
that the set of all runs of the VASS can be represented by a single regular
expression $\pi_0 \chi_1^* \dots \pi_{n-1} \chi_n^* \pi_n$ over the
transitions. Such VASS are called \emph{linear path schemes} (LPS, for short).

\paragraph{VASS variants.}
In this work we study continuous VASS of fixed dimension. For complexity matters,
we assume all counter updates are encoded in binary. As decision problems, we
focus on reachability (via a path that might make the counters negative);
nonnegative reachability, i.e. reachability via a path that keeps the counters
nonnegative at all times; and zero-test reachability, corresponding to
reachability with the added constraint that some states can only be visited
with value zero for a designated counter. We summarize known and new results
regarding the complexity of these problems in \autoref{tab:summary}. Below, we give a textual account of the complexity bounds from the table.

\begin{table}[b]
\centering
\caption{Summary of computational complexity results for the reachability
problem for VASS of fixed dimension. We
write lower bounds for simpler cases and upper bounds for more general ones. New results are shown in green (upper) and red (lower bounds).}
\label{tab:summary}
\begin{tabular}{ c|c|c|c|c|c|c }
  & \multicolumn{3}{c|}{Discrete} & \multicolumn{3}{c}{Continuous} \\
  Problem & General & Flat & LPS & General & Flat & LPS \\ 
  \hline
  Reachability & in \NP & $=$ & \NP-hard & in \NP & \NP-hard & in \P\\ 
  Nonneg. reach. & in Ackermann & in \NP & \NP-hard & in \NP & \NP-hard & \textcolor{green}{in \P} \\
  Zero-test reach. & Undecidable & in \NP & \NP-hard & Undecidable & \textcolor{red}{\NP-hard} & \textcolor{red}{\NP-hard} 
\end{tabular}
\end{table}

\begin{description}
\item[(Discrete) Reachability.] The \NP-hardness bound for LPS can be shown using a simple reduction from the
\textsc{SubsetSum} problem with multiplicities, i.e. summands can be added more
than once. The latter is known to be \NP-complete, see e.g.~\cite[Proposition
4.1.1]{haase12}. The corresponding upper bound for the general case is
folklore and is proven in~\cite{hh14} even with \emph{resets}.
\item[Continuous reachability.] The \NP-hardness bound for flat VASS is stated in~\cite[Lemma 4.13(a)]{blondin2017logics} for nonnegative reachability but the reduction establishes it for reachability as well. The upper bound for the general case follows from~\cite[Corollary 4.10]{blondin2017logics}. Membership in \P{} for LPS can be derived from~\cite[Theorem 4.15]{blondin2017logics} which states that continuous cyclic reachability is in \P. In this work, we give an alternative algorithm for continuous cyclic reachability and present a full decision procedure for continuous reachability for LPS.
\item[Nonnegative reachability.] For fixed dimension $d$, no sub-Ackermannian upper bound is known. \NP-hardness for LPS follows from the same proof as for
  reachability since the construction has no negative updates. Finally, the
  \NP{} upper bound for flat VASS is folklore: (nonnegative)
  reachability in flat VASS can be encoded into existential
  Presburger-arithmetic, a theory whose decidability is \NP-complete (see, e.g.,~\cite{haase18}).
\item[Continuous nonnegative reachability.] The \NP{} upper and lower bounds for the general and flat cases follow from (the proofs of) the same results in the continuous reachability case. For the \P{} upper bound, however, one cannot rely on \cite[Theorem 4.15]{blondin2017logics}. In fact, cyclic reachability (for general dimensions) is \NP-hard in the continuous nonnegative case~\cite[Lemma 4.13(b)]{blondin2017logics}. This is, thus, the first novel complexity bound we establish.
\item[Zero-test reachability.] The \NP-hardness bound for LPS is a consequence of reachability being a subcase of zero-test reachability. The matching upper bound for flat VASS is an extension of the classical encoding\footnote{For completeness, we describe the encoding for LPS in appendix. The latter is easy to adapt for the flat case.} into PA which accounts for linear constraints imposed by the zero tests on cycles. Finally, the general model is also known as Minsky machines and its reachability problem was proven undecidable by Minsky himself~\cite{minsky67}.
\item[Continuous zero-test reachability.] The \NP-hardness for flat VASS is a
  consequence of reachability being a subcase of zero-test reachability. For
  LPS, the lower bound is novel and points to continuous zero-test
  reachability not being a suitable approximation of the discrete case. Note
  that the general case is undecidable in dimension $4$ or
  higher~\cite[Theorem 4.17]{blondin2017logics}.
\end{description}

\paragraph{Our contributions.}
Our main contribution is a geometrical understanding of the reachability sets
of continuous VASS (see \autoref{thm:cycles}, \autoref{thm:paths}, and \autoref{thm:geometry-lps}). The latter allows us (1) to prove that short LPS suffice as witnesses of (nonnegative) reachability when the dimension is fixed (see \autoref{prop:lp_scheme} and \autoref{prop:LPS-nonneg}), and (2) to give new algorithms for the reachability problem for LPS (see \autoref{thm:lpsreachp} and \autoref{thm:nonneg-lps-ptime})
via encodings of their reachability sets into tractable theories.
When possible, we stay within linear programming solutions to enable
efficient implementation of our algorithms. In some cases, we resort to
encoding problems into the Horn fragment introduced by Blondin and
Haase~\cite{blondin2017logics,blondin20}. Finally, we establish that
zero-test reachability for LPS is \NP-hard even in dimension $2$ (\autoref{thm:hard-zero}). 

For lack of space, we defer the proofs of some results to the appendix.

\section{Preliminaries}
In this work, $\mathbb{Q}$ denotes the set of all rational numbers; $\mathbb{Q}_{> 0}$, all strictly positive rational numbers; and $\mathbb{Q}_{\geq 0}$, all nonnegative ones --- including $0$. Similarly, $\mathbb{N}$ denotes the set of all natural numbers including $0$.

\subsubsection{Topology and geometry.}
Let $d$ be a positive integer.
For any $\vec{x}\in \mathbb{Q}^d$, and $r\in \mathbb{R}$, we define the \emph{open ball of radius $r$ around $x$}  as usual:
\(
    B_r(\vec{x}) = \{\vec{y} \in \mathbb{Q}^d : \normtwo{\vec{x} - \vec{y}} < r\}.
\)
Let $X \subseteq \mathbb{Q}^d$. Then, we define the \emph{interior} of $X$ as 
\(
    \interior(X) = \{\vec{x}\in X \mid \exists \varepsilon > 0, B_\varepsilon(\vec{x}) \subseteq X\}.
\) 
We define the \emph{closure} of $X$ as 
\(
    \closure(X) = \{\vec{x} \in \mathbb{Q}^d \mid \forall \varepsilon > 0,\ B_\varepsilon(\vec{x}) \cap X \neq \emptyset\}
\)
and the \emph{boundary} of $X$ as 
\(
    \boundary(X) =  \closure(X) \setminus \interior(X).
\)

Let $G \subseteq \mathbb{Q}^d$ be a set of (generating) vectors. We write $\cone(G)$
to denote the \emph{(rational convex) cone} $\{\sum_{i=0}^k a_i \vec{g_i} \mid k \in
\mathbb{N}, \vec{g_i} \in G, a_i \in \mathbb{Q}_{\geq 0}\}$.  The \emph{(linear) span} of
$G$ is defined as follows: $\linspan(G) = \{\sum_{i=0}^k a_i \vec{g_i}
\mid k \in \mathbb{N}, \vec{g_i} \in G, a_i \in \mathbb{Q}\}$. Finally, the \emph{affine
hull} $\aff(G)$ of $G$ is the set $\{\sum_{i=0}^k a_i \vec{g_i} \mid k \in
\mathbb{N}, \vec{g_i} \in G, a_i \in \mathbb{Q}, \sum_{i=0}^k a_i = 1\}$. (In
particular, note that if $\vec{0} \in G$ then $\aff(G) = \linspan(G) =
\linspan(H)$, for some $H \subseteq G$ with cardinality at most $d$.) 

Let $X \subseteq \mathbb{Q}^d$. Then, we define the \emph{relative interior of $X$},
 $\relint(X)$, as its interior with respect to its embedding into its own affine hull --- where it
has full dimension
--- as follows:
\(
    \relint(X) = \{\vec{x} \in X \mid \exists \varepsilon > 0, B_\varepsilon(\vec{x}) \cap \aff(X) \subseteq X\}.
\)

\subsection{Continuous VASS}
Let $d$ be a positive integer.
A \emph{continuous VASS} $\calV$ of dimension $d$ is a tuple $(Q,T,\ell)$ where $Q$ is a
finite set of states, $T \subseteq Q \times Q$ is a finite set of transitions,
and $\ell : T \rightarrow \mathbb{Q}^d$ assigns an \emph{update label} to
every transition.

\subsubsection{Paths and runs.}
A \emph{configuration} $c \in Q \times \mathbb{Q}^d$ is a tuple consisting of
a state and the concrete values of the $d$ \emph{counters} associated with the
VASS.
We typically denote the configuration $(p,\vec{x})$ by $p(\vec{x})$.

A \emph{path} $\pi$ is a sequence $(p_1,p_2) (p_2,p_3) \dots (p_{n-1},p_{n})
\in T^*$ of transitions. We write $|\pi|$ to denote the length of the path,
i.e. $|\pi| = n - 1$. A \emph{run} $\rho$ is a sequence $q_1(\vec{x_1})
q_2(\vec{x_2}) \dots q_n(\vec{x_n})$ of configurations such
that for all $1 \leq i < n$ we have:
\begin{itemize}
    \item $(q_i,q_{i+1}) \in T$ and
    \item $\vec{x_i} + \alpha_i \cdot \ell(q_i,q_{i+1}) = \vec{x_{i+1}}$ for
      some $\alpha_i \in \mathbb{Q}$ with $0 < \alpha_i \leq 1$.
\end{itemize}
Often, we refer to the $\alpha_i$ as the \emph{coefficients of the run}.
We say $\rho$ \emph{induces} the path $(q_1,q_2) \dots (q_{n-1},q_n)$. Conversely, we
sometimes say a run is \emph{lifted} from a path. For instance, $\pi$ can be
lifted to a run $p_1(\vec{y_1}) \dots p_n(\vec{y_n})$ by fixing
$p_1(\vec{y_1})$ as initial configuration and by choosing adequate
coefficients $\alpha_i$ for all transitions.

As a more concrete example, consider the path $(q_0,q_1),(q_1,q_2),(q_2,q_3)$ in \autoref{fig:example-cvass}, whose transitions are labelled by $(1,0)$ and $(0,1)$. Starting from the configuration $(0,0)$ and using the coefficients $\alpha_1=0.3$ and $\alpha_2=0.5$ this path lifts to the run $q_0(0,0)q_1(0.3,0)q_2(0.3,0.5)$.

\subsubsection{Classes of continuous VASS.}
Classical VASS semantics only allows nonnegative counter values. In contrast, we consider continuous VASS both in a setting where only nonnegative counter values are allowed, denoted $\mathbb{Q}_{\geq 0}$VASS, and a setting which allows negative counters, denoted  $\mathbb{Q}$VASS.


\subsection{Reachability}
Let $p(\vec{x})$ and $q(\vec{y})$ be two configurations. We
say $q(\vec{y})$ is \emph{reachable} from $p(\vec{x})$, denoted $p(\vec{x})
\xrightarrow{*} q(\vec{y})$, if there exists a run whose first and last
configurations are $p(\vec{x})$ and $q(\vec{y})$ respectively. For a path
$\pi$, we write $p(\vec{x}) \xrightarrow{\pi} q(\vec{y})$ if, additionally,
such a run exists which can be lifted from $\pi$.

Given a configuration $p_1(\vec{x})$ and a state $q$, we define the
\emph{reachability set} of a path $\pi = (p_1,p_2)\dots
(p_{n-1},p_n)$ or a set $P$ of paths below.
\[
\textstyle
  \Reach^{\vec{x}}(\pi) = \{\vec{y} \in \mathbb{Q}^2 \mid p_1(\vec{x})
  \xrightarrow{\pi} p_n(\vec{y}) \} \quad\quad
  \Reach^{\vec{x}}(P) = \bigcup_{\pi \in P} \Reach(\pi)
\]
If $\vec{x} = \vec{0}$ then we write simply $\Reach(\pi)$ and $\Reach(P)$.

\section{The Geometry of $\mathbb{Q}$VASS Reachability Sets}
\label{section:geometryRS}
In this section we discuss the geometry of the reachability sets in continuous VASS of dimension $d$. We first discuss paths and cycles separately. Then, we show that for solving the  
reachability problem, we only need to take short \emph{linear path schemes} into consideration.

\subsection{The geometry of reachability sets for cycles}
For this section, we fix a cycle $\chi = (p_1,p_2) \dots (p_{m},p_{m+1})$, with $p_1 = p_{m+1}$.  We study the geometry of the set $\Reach(\chi^*)$, where $\chi^*$ stands for $\{\chi^k \mid k \in \mathbb{N}\}$.

Intuitively, following $\chi$ allows us to add a scaled version of each transition vector along $\chi$ arbitrarily many times, with the proviso that the scaling is strictly positive (the restriction to scale up to $1$ disappears since we can repeatedly take the cycle). Thus, we can intuitively reach the interior of a cone, i.e. a positive linear combination of the vectors along $\chi$. For example, a cycle with vectors $(1,0)$ and $(0,1)$ will allow us to reach $\{(x,y) \mid x>0,y>0\}\cup\{(0,0)\}$. 
However, this intuition needs to be formalized carefully to account for linear dependencies between the vectors. This may render the cone not full-dimensional, i.e. its linear span may be a strict subspace of the vector space it is in. That would mean that the interior of the cone is empty. However, in such cases, the reachability set still is a ``flattened'' version of the interior, namely the relative interior of the cone.

\subsubsection{From cycles to cones.}\label{sec:cycles2cones}
We formalize our intuition by proving that $\Reach(\chi^*)$ is the relative interior of the cone generated by
%
$G = \{\ell(p_{i},p_{i+1}) \mid 1 \leq i \leq m \}$.



Indeed, all points $\vec{x} \in \Reach(\chi^*)$ can be obtained as positive linear combinations of generators. To any such $\vec{x}$, we can add or subtract any generating vector and stay within $\cone(G)$, as long as it is sufficiently scaled down. Conversely, if one can add and subtract suitably scaled versions of all generating vectors to a point $\vec{x} \in \cone(G)$, and remain within $\cone(G)$, then it must be in the (relative) interior of $\cone(G)$.

Below, we write 
$\relint(G)$
for $\relint(\cone(G))$.
\begin{restatable}{theorem}{thmcycles}\label{thm:cycles}
    Let $G$ be as defined above. Then,
    \(\Reach(\chi^*) =
    \relint(G) \cup \{\vec{0}\}.\)
\end{restatable}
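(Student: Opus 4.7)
The plan is to prove both inclusions by relying on the following characterization of the relative interior of a finitely generated cone: $\vec{x} \in \relint(\cone(G))$ if and only if $\vec{x} = \sum_{i=1}^{m} \lambda_i \vec{g_i}$ for some strictly positive rationals $\lambda_i > 0$, where $\vec{g_i} = \ell(p_i, p_{i+1})$. Once this characterization is in hand, the theorem reduces to matching strictly positive combinations of the $\vec{g_i}$ with lifts of $\chi^k$.

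I would justify the characterization as follows. Since $\vec{0} \in \cone(G)$, we have $\aff(\cone(G)) = \linspan(G)$. For the ``if'' direction, given $\vec{x} = \sum_i \lambda_i \vec{g_i}$ with all $\lambda_i > 0$ and any $\vec{y} \in \linspan(G)$ written as $\vec{y} = \sum_i \mu_i \vec{g_i}$, the point $\vec{x} + \varepsilon(\vec{y} - \vec{x})$ has coefficient $(1-\varepsilon)\lambda_i + \varepsilon\mu_i$ on $\vec{g_i}$, which is positive for small enough $\varepsilon > 0$, exhibiting a relative neighbourhood of $\vec{x}$ inside $\cone(G)$. For the converse, for each $i$ there is $\varepsilon_i > 0$ with $\vec{x} - \varepsilon_i \vec{g_i} \in \cone(G)$, giving $\vec{x} = \varepsilon_i \vec{g_i} + \sum_j \nu_{ij} \vec{g_j}$ with $\nu_{ij} \geq 0$; averaging these $m$ equalities over $i$ yields a representation of $\vec{x}$ with strictly positive coefficients on \emph{every} generator simultaneously.

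For the forward inclusion $\Reach(\chi^*) \subseteq \relint(G) \cup \{\vec{0}\}$, I would observe that $\chi^0$ produces the empty run at $\vec{0}$, while any lift of $\chi^k$ with $k \geq 1$ assigns strictly positive coefficients in $(0,1]$ to all $km$ transitions; collecting contributions by generator gives a displacement of the form $\sum_i \beta_i \vec{g_i}$ with $\beta_i > 0$, hence an element of $\relint(\cone(G))$ by the characterization. Conversely, given $\vec{x} \in \relint(\cone(G))$, write $\vec{x} = \sum_i \lambda_i \vec{g_i}$ with $\lambda_i > 0$ and pick any integer $k \geq \max_i \lambda_i$; lifting $\chi^k$ with every occurrence of the $i$-th transition assigned coefficient $\lambda_i / k \in (0,1]$ gives a run whose total displacement is exactly $\vec{x}$. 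Intermediate configurations may cross into negative counter values, which is permissible in the $\mathbb{Q}$VASS setting.

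The main subtlety lies in the relative-interior characterization itself, specifically in ensuring that every $\vec{x} \in \relint(\cone(G))$ admits a representation with \emph{all} coefficients strictly positive rather than merely some; the averaging trick above resolves this cleanly. Everything else is routine bookkeeping between cycle iterations and coefficient choices.
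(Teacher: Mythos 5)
Your proposal follows essentially the same route as the paper's proof: both reduce the theorem to the equivalence ``$\vec{x}$ is a strictly positive rational combination of the generators'' $\Leftrightarrow$ ``$\vec{x}\in\relint(\cone(G))$'', with the bridge to $\Reach(\chi^*)\setminus\{\vec{0}\}$ being exactly the paper's \autoref{lem:simple-basis} (which you re-derive inline via the coefficient bookkeeping on lifts of $\chi^k$). The only differences are cosmetic: for $\relint(\cone(G))\Rightarrow$ strictly positive coefficients you average the $m$ perturbations $\vec{x}-\varepsilon_i\vec{g_i}$ whereas the paper perturbs once by $\delta\sum_i\vec{g_i}$, and in the ``if'' direction your $\varepsilon$ a priori depends on $\vec{y}$, so you still owe a uniformity argument to get an actual relative ball (the paper supplies this via an explicit change-of-basis bound; alternatively one can appeal to the standard convex-analysis fact that algebraic and topological relative interior coincide in finite dimension), but the underlying ideas coincide.
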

Note that we consider the set $G$ of all labels of transitions from $\chi$,
ignoring the fact that multiple transitions can have the same label. This is justified by the following lemma.

\begin{restatable}{lemma}{simplebasis}\label{lem:simple-basis}
    Let $\vec{x} \in \mathbb{Q}^d$. Then, there exists $(a_1,\dots,a_n) \in
    \mathbb{Q}_{>0}^n$ such that $\vec{x} = \sum_{i=1}^n a_i \vec{g_i}$ if and
    only if $\vec{x} \in
    \Reach(\chi^k)$ for some $k \in \mathbb{N}$ with $k \geq 1$.
\end{restatable}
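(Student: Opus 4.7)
The plan is to prove both directions by a direct bookkeeping argument that exploits the fact that $\chi^k$ is exactly $k$ concatenated copies of $\chi$. For each $\vec{g_i} \in G$, let $m_i \geq 1$ denote the number of transitions of $\chi$ whose label is $\vec{g_i}$, so that $\vec{g_i}$ appears exactly $k m_i$ times along $\chi^k$.

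For the ``if'' direction, suppose $\vec{x} \in \Reach(\chi^k)$ for some $k \geq 1$. Then $\vec{x}$ arises from a run lifted from $\chi^k$ with coefficients $\alpha_j \in (0,1]$ for the $j$-th transition, so
\[
  \vec{x} \;=\; \sum_{j=1}^{km} \alpha_j\, \ell(t_j),
\]
where $t_1,\dots,t_{km}$ are the transitions of $\chi^k$ in order. I would then group the summands by label: for each $i$, collect every index $j$ with $\ell(t_j) = \vec{g_i}$ and set $a_i$ to be the sum of the corresponding $\alpha_j$. Since each $\vec{g_i}$ occurs at least once (indeed $k m_i \geq 1$ times) and every $\alpha_j$ is strictly positive, each $a_i$ is a strictly positive rational, and by construction $\vec{x} = \sum_{i=1}^n a_i \vec{g_i}$.

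For the ``only if'' direction, suppose $\vec{x} = \sum_{i=1}^n a_i \vec{g_i}$ with $a_i \in \mathbb{Q}_{>0}$. I would choose $k$ large enough that the uniform split of each $a_i$ across the $k m_i$ occurrences of $\vec{g_i}$ in $\chi^k$ is a legal coefficient; explicitly, pick any integer
\[
  k \;\geq\; \max\bigl(1,\, \max_i \lceil a_i / m_i \rceil\bigr),
\]
and then assign to every transition of $\chi^k$ labelled $\vec{g_i}$ the coefficient $\alpha := a_i/(k m_i)$. By the choice of $k$ we have $0 < \alpha \leq 1$, so this is a valid lifting of $\chi^k$. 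The resulting run produces counter change
\[
  \sum_{i=1}^n (k m_i) \cdot \frac{a_i}{k m_i}\, \vec{g_i} \;=\; \sum_{i=1}^n a_i \vec{g_i} \;=\; \vec{x},
\]
so $\vec{x} \in \Reach(\chi^k)$.

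There is no serious obstacle: the content is purely combinatorial distribution of weights. The only point that requires a moment of care is ensuring the coefficients fall in $(0,1]$, which is handled by taking $k$ large enough; strict positivity of each $a_i$ then guarantees the uniform split is strictly positive as well.
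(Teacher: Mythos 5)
Your proof is correct and follows essentially the same strategy as the paper's: for the forward direction you distribute each $a_i$ uniformly across the $k m_i$ occurrences of $\vec{g_i}$ in $\chi^k$, taking $k$ large enough to keep each coefficient in $(0,1]$, and for the converse you aggregate coefficients by label. The paper's choice of $k$ (any $D\geq 1$ with $\max_i a_i/D < 1$) and yours ($k\geq\max(1,\max_i\lceil a_i/m_i\rceil)$) differ cosmetically but both ensure the same bound.
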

%
%
\subsubsection{A concrete case: dimension 2.}
For intuition, we state a consequence of \autoref{thm:cycles} in dimension $2$. For $d = 2$, a cone $C$ is \emph{trivial} if there exists $\vec{v} \in \mathbb{Q}^d$ such that $C$ is a subset of the line $\{r \cdot \vec{v}\ \mid\ r \in \mathbb{Q}\}$, and otherwise it is full-dimensional. For a trivial cone, its relative interior is either the entire cone (if it is the entire line), or the cone without $\vec{0}$, if it is a ray. In either case, $\relint(G)\cup\{\vec{0}\}=\cone(G)$, and it is easy to see that for a cycle $\chi$ whose vector labellings $G$ are co-linear to $\vec{v}$, the reachability set of $\chi^*$ in the continuous semantics is $\cone(G)$ (where $\vec{0}$ is added by taking $\chi$ zero times).
For full dimensional cones, we can take either $\vec{0}$ or any any positive combination of the generators, but since no element of the generators can be taken zero times, the reachability set excludes the boundary (see \autoref{fig:cone_zono_example} (left) for a visualization). We thus have the following simple statement, where we write $\interior(G)$ and
$\boundary(G)$ for $\interior(\cone(G))$ and
$\boundary(\cone(G))$ respectively.

\begin{corollary}\label{cor:cycles}
    Let $G$ be as above. In dimension $d=2$, one of the following holds.
    \begin{itemize}
        \item Either $\cone(G)$ is trivial and $\Reach(\chi^*) =
          \cone(G)$; \label{test}
        \item or $\cone(G)$ is nontrivial and $\Reach(\chi^*) \setminus \{\vec{0}\}
           = \interior(G) \setminus \boundary(G)$.
    \end{itemize}
\end{corollary}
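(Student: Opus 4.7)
My plan is to derive \autoref{cor:cycles} directly from \autoref{thm:cycles}, which already gives \(\Reach(\chi^*) = \relint(G) \cup \{\vec{0}\}\). All that remains is to unpack $\relint(\cone(G))$ in the special geometry of dimension two. The convenient feature of $d=2$ is that $\aff(\cone(G))$ can only be $\{\vec{0}\}$, a line through the origin, or the entire plane $\mathbb{Q}^2$, so there are only a handful of sub-cases to check.

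First I would handle the trivial case, where by definition every generator lies on a single line through the origin. I would argue that $\cone(G)$ is then one of: the singleton $\{\vec{0}\}$ (all generators zero), a ray $\{r\vec{v} : r \geq 0\}$ (every nonzero generator is a nonnegative multiple of a fixed direction $\vec{v}$), or the full line $\{r\vec{v} : r \in \mathbb{Q}\}$ (both signs appear among the generator coefficients along $\vec{v}$). A short check in each sub-case shows that $\relint(\cone(G)) \cup \{\vec{0}\} = \cone(G)$: the singleton and the full line each equal their own relative interior, while the relative interior of the ray is the open ray $\{r\vec{v} : r > 0\}$, whose union with $\{\vec{0}\}$ recovers the closed ray. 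Together with \autoref{thm:cycles}, this yields the first bullet.

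Next I would handle the nontrivial case, where $G$ contains two non-collinear vectors, forcing $\aff(\cone(G)) = \mathbb{Q}^2$. Then the relative interior coincides with the topological interior $\interior(\cone(G))$, and \autoref{thm:cycles} reads $\Reach(\chi^*) = \interior(G) \cup \{\vec{0}\}$. Removing $\vec{0}$ leaves $\interior(G)$, which equals $\interior(G) \setminus \boundary(G)$ since the interior and boundary of any subset of $\mathbb{Q}^d$ are disjoint by the definition $\boundary(X) = \closure(X) \setminus \interior(X)$. This gives the second bullet.

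The argument is mostly bookkeeping about relative interiors and I do not expect a serious obstacle. The one subtle point to keep in mind is that the affine hull of a ray is the \emph{entire line} it spans, not the ray itself; this is precisely what makes the relative interior of a ray exclude $\vec{0}$, and what makes re-adding $\{\vec{0}\}$ necessary to recover the full cone. Once that is kept straight, the corollary is essentially a dimension-2 reformulation of the topological content already encoded in \autoref{thm:cycles}.
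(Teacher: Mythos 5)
Your derivation is correct and takes essentially the same approach as the paper, which only gives an informal proof sketch in the paragraph preceding the corollary: split on whether $\cone(G)$ is trivial, show $\relint(G)\cup\{\vec{0}\}=\cone(G)$ in the trivial case, and identify $\relint$ with $\interior$ in the full-dimensional case. You spell out the trivial sub-cases (singleton, ray, line) a bit more carefully than the paper does.

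One small point, which is really a blemish in the corollary's statement rather than in your argument, but which your write-up glosses over: in the nontrivial case you assert that ``removing $\vec{0}$ leaves $\interior(G)$.'' This is $(\interior(G)\cup\{\vec{0}\})\setminus\{\vec{0}\}=\interior(G)$, which holds only when $\vec{0}\notin\interior(\cone(G))$, i.e. only when $\cone(G)\neq\mathbb{Q}^2$. If the generators positively span the plane (e.g. $(1,0),(0,1),(-1,-1)$), then $\cone(G)=\interior(G)=\mathbb{Q}^2$, $\boundary(G)=\emptyset$, and $\Reach(\chi^*)\setminus\{\vec{0}\}=\mathbb{Q}^2\setminus\{\vec{0}\}\neq\interior(G)\setminus\boundary(G)$. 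So both your proof and the corollary as printed silently assume the cone is a proper subset of $\mathbb{Q}^2$. Worth noting explicitly, even if only to record that the degenerate case is excluded.
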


\begin{figure}[t]
    \centering
    \begin{tikzpicture}[scale=0.5]
        
        \draw[fill=orange!30, color=orange!30] (0, 0)--(2, 5)--(5,5)--(5, 3)--(0, 0);
        \draw[help lines, color=gray!30, dashed] (-1.9,-1.9) grid (4.9,4.9);
        \draw[->, thick] (-2,0)--(5,0) node[right]{$x$};
        \draw[->, thick] (0,-2)--(0,5) node[above]{$y$};
        
        \draw[ultra thick,color=blue] (0, 0)--(2, 5);
        \draw[ultra thick,color=blue] (0, 0)--(5,3);
        
    \end{tikzpicture}
    \qquad \qquad
    \begin{tikzpicture}[scale=0.5]
        
        \draw[fill=orange!30] (3, 0)--(1.5, -1)--(0,0)--(0, 2)--(1.5, 3)--(3,2)--(3,0);
        \draw[help lines, color=gray!30, dashed] (-1.9,-1.9) grid (4.9,4.9);
        \draw[->, thick] (-2,0)--(5,0) node[right]{$x$};
        \draw[->, thick] (0,-2)--(0,5) node[above]{$y$};
        
        \draw[ultra thick,color=orange] (3, 0)--(1.5, -1)--(0,0)--(0, 2)--(1.5, 3);
        \draw[ultra thick,color=blue] (3, 0)--(3,2)--(1.5, 3);
        \draw[blue,fill=white] (3,0) circle (3pt);
        \draw[blue,fill=white] (1.5,3) circle (3pt);
        
        \node[] at (5, 2)  (c)     {$\sigma_G = (3,2)$};
        
    \end{tikzpicture}
    \caption{On the left: a cone with its boundary in blue; on the right: a zonotope with $G=\{(1.5, 1), (1.5, -1), (0, 2)\}$, where $\adj(G)$ is drawn in blue}
    \label{fig:cone_zono_example}
\end{figure}
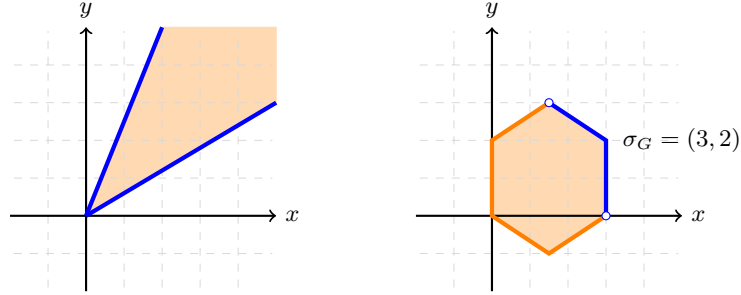

\subsection{The geometry of reachability sets for paths}

For this section, we fix a path 
$\pi = (p_1, p_2) \dots (p_m, p_{m+1})$.
Similarly to the case of cycles, we establish a connection
between $\Reach(\pi)$ and a particular type of polytope known as a
zonotope. 

Intuitively, since we now have a path that is taken once (rather than a cycle), the restriction that the scaling is at most $1$ comes into play, and limits us. Moreover, multiplicities of linearly dependent vectors along the path also matter. This renders the reachability set somewhat more complicated.

\subsubsection{Zonotopes.}
Let $G = \{\vec{g_1},\dots,\vec{g_k}\} \subseteq \mathbb{Q}^d$ be a finite set of
(generating) vectors. We write $\zono(G)$ to denote the \emph{zonotope}\footnote{Zonotope is the standard term, but since we do not use any of its special properties, the reader may view this as a standard polytope.}
$\{\sum_{j=1}^k a_i \vec{g_i} \mid a_i \in \mathbb{Q}, 0 \leq a_i \leq 1\}$.

Following our interior-based approach for cycles, we study the reachable part of the boundary of a zonotope. We define $\sigma_G$ as the sum of all vectors in $G$, or $\sigma_G \coloneqq \sum_{\vec{g}\in G} \vec{g}$.
Additionally, we define the set $\adj(Z)$ of \emph{faces} of $\zono(G)$ that are \emph{adjacent to $\sigma_G$} below. A face of $\zono(G)$ is any non-empty intersection of $\zono(G)$ with a half-space $H$ such that none of the interior points of $\zono(G)$ lies on the boundary of $H$.
\begin{definition}[Adjacent Sets]
We define $\adj(G)$ as the union of $\{\sigma_G\}$ and all $\vec{x} \in \mathbb{Q}^d$ on the relative interior of a face of $\zono(G)$ that contains $\sigma_G$.
\end{definition}
Observe that $\adj(G) = \emptyset$ whenever $\sigma_G \in \interior(\zono(G))$. In \autoref{fig:cone_zono_example} we illustrate this definition.

\subsubsection{From paths to zonotopes.}\label{sec:paths2zonotopes} 
We show $\Reach(\pi)$ has a close relation with a zonotope whose generator is derived from the \emph{multiset} $M \coloneqq [\ell(p_i,p_{i+1}) \mid 1
\leq i \leq m]$. 
Intuitively, we obtain from $M$ a generator $G_M$ by summing together co-linear vectors that are in the same ``orientation''. For example, the vectors $(1,0)$ and $(2,0)$ along a single path have the same effect as $(3,0)$, but $(1,0)$ and $(-1,0)$ have distinct effects. Technically, this is captured by grouping together vectors that are in the cones of each other. More formally, 
choose $M' \subseteq M \setminus \{\vec{0}\}$ such that for all $\vec{u} \in M \setminus \{\vec{0}\}$ there is a unique vector $\vec{u}' \in M'$ such that $\vec{u} \in \cone{(\vec{u}')}$. Then,
define $G_M$ as follows:
\(
G_M \coloneqq
\left\{ \sum_{\vec{u} \in M \cap \cone{(\vec{u}'})} \vec{u}
\:\middle|\:
\vec{u}' \in M'
\right\}.
\)

We show that the reachability set of $\pi$ can be computed by taking $\zono(G_M)$, and removing its boundary except for faces adjacent to $\sigma_G$. The intuition behind this is similar to that of cycles: one can add strictly positive-scaled versions of the generating vectors, and therefore boundary elements that are obtained using 0-scales are unreachable. However, in zonotopes there are additional boundary faces that are obtained by capping the scale at $1$ on some elements, and these are the faces adjacent to $\sigma_G$ (with $\sigma_G$ itself being the reachable vector where all elements are scaled to $1$).

Below, we write $\relint(G_M)$ to
denote $\relint(\zono(G_M))$.
\begin{restatable}{theorem}{pathstozonotopes}\label{thm:paths}
  Let $G_M$ be as above. Then,
  \(
    \Reach(\pi) = \relint(G_M) \cup \adj(G_M).
  \)
\end{restatable}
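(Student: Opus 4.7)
My plan is to prove \autoref{thm:paths} in two stages: first, reduce $\Reach(\pi)$ combinatorially to a representation set over $G_M$; second, identify that set geometrically with $\relint(\zono(G_M)) \cup \adj(G_M)$. The reduction is straightforward. Writing $\vec{v}_i = \ell(p_i,p_{i+1})$, one has $\Reach(\pi) = \{\sum_{i=1}^m \alpha_i \vec{v}_i : \alpha_i \in (0,1]\}$; zero vectors drop out trivially, and for each collinear group $c_1 \vec{u}', \ldots, c_k \vec{u}'$ (with $c_j > 0$) summing to $\vec{g} \in G_M$, the joint contribution is $\beta \vec{g}$ with $\beta := (\sum_j \alpha_j c_j)/(\sum_j c_j)$ a convex combination of the $\alpha_j$, therefore ranging over exactly $(0,1]$. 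Hence $\Reach(\pi) = R := \{\sum_{\vec{g} \in G_M} \beta_{\vec{g}} \vec{g} : \beta_{\vec{g}} \in (0,1]\}$.

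The geometric core proves $R = \relint(\zono(G_M)) \cup \adj(G_M)$ via two structural facts. Fact (a): $\vec{x} \in \relint(\zono(G_M))$ iff $\vec{x}$ has a representation with every coefficient strictly in $(0,1)$. The nontrivial direction uses the basepoint $\vec{x}_0 := \tfrac{1}{2} \sigma_{G_M}$ (with its all-halves representation): since $\vec{x} \in \relint(\zono(G_M))$, for small $\epsilon > 0$ the extrapolation $\vec{y} := (1{+}\epsilon)\vec{x} - \epsilon \vec{x}_0$ still lies in $\zono(G_M)$ and so has a $[0,1]$-representation; rewriting $\vec{x} = \tfrac{1}{1+\epsilon}\vec{y} + \tfrac{\epsilon}{1+\epsilon}\vec{x}_0$ as a convex combination of the two representations yields one for $\vec{x}$ with every coefficient strictly between $0$ and $1$. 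Fact (b): every face of $\zono(G_M)$ has the form $\sigma_{G^+} + \zono(G^0)$ for the sign partition $(G^+, G^0, G^-)$ of $G_M$ induced by some supporting normal $\vec{n}$, and such a face contains $\sigma_{G_M}$ exactly when $G^- = \emptyset$.

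With these facts in hand, for $R \subseteq \relint(\zono(G_M)) \cup \adj(G_M)$, let $F_{\min}$ be the unique face whose relative interior contains $\vec{x}$: for each supporting normal $\vec{n}$ of $F_{\min}$, every $[0,1]$-representation of $\vec{x}$ must set $\beta_{\vec{g}} = 0$ whenever $\vec{n} \cdot \vec{g} < 0$, but $\vec{x} \in R$ admits a representation with all $\beta > 0$; hence no such $\vec{g}$ exists, and $\sigma_{G_M} \in F_{\min}$. So either $F_{\min} = \zono(G_M)$ (and $\vec{x} \in \relint(\zono(G_M))$), or $F_{\min}$ is a proper face through $\sigma_{G_M}$ (and $\vec{x} \in \adj(G_M)$). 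Conversely, if $\vec{x} \in \relint(F)$ for $F = \sigma_{G^+} + \zono(G^0)$ a face through $\sigma_{G_M}$ (so $G^- = \emptyset$), then $\vec{x} - \sigma_{G^+} \in \relint(\zono(G^0))$ gets a strictly interior representation via Fact (a) applied to $G^0$; combined with $\beta_{\vec{g}} = 1$ for $\vec{g} \in G^+$, this places $\vec{x}$ in $R$. The case $\vec{x} \in \relint(\zono(G_M))$ is immediate from Fact (a).

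The main obstacle is making Fact (a) rigorous: when the generators in $G_M$ are linearly dependent, the map $\vec{\beta} \mapsto \sum \beta_{\vec{g}} \vec{g}$ is many-to-one, so one cannot read off strictly interior coefficients of $\vec{x}$ directly; the basepoint trick circumvents this. A secondary subtlety is Fact (b), the face-as-translated-zonotope structure, particularly when $\zono(G_M)$ is not full-dimensional.
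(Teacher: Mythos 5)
Your proof is correct, and while it follows the same overall strategy as the paper's --- first reducing $\Reach(\pi)$ to the set of points admitting a coefficient representation in $(0,1]$ over $G_M$ (the paper isolates this as \autoref{lem:simple-basis-path}), then translating that condition into relative-interior geometry --- the technical core is organised differently. For the inclusion $\relint(\zono(G_M))\subseteq\Reach(\pi)$, the paper fixes a $[0,1]$-representation minimising the set $I$ of zero coefficients and derives a contradiction by averaging the representations of $\vec{p}\pm\varepsilon\vec{g_k}$; your basepoint extrapolation through $\tfrac12\sigma_{G_M}$ achieves the same in one step and even yields coefficients strictly in $(0,1)$, which you then reuse on the sub-generator $G^0$ for the face case. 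Your Fact~(b), the explicit sign-partition description $F=\sigma_{G^+}+\zono(G^0)$ of zonotope faces, is left implicit in the paper; making it explicit is what lets you handle the direction the paper's written proof does not actually carry out, namely that a \emph{reachable} boundary point must lie in $\adj(G_M)$ (your $F_{\min}$ argument: an all-positive representation forces $G^-=\emptyset$ for every supporting normal of the minimal face containing $\vec{x}$, hence $\sigma_{G_M}\in F_{\min}$). The paper's proof of $\adj(G_M)\subseteq\Reach(\pi)$, extrapolating $\vec{y}_\lambda=\lambda\vec{x}+(1-\lambda)\sigma_{G_M}$ past $\vec{x}$ inside the face, is essentially your basepoint trick with basepoint $\sigma_{G_M}$. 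In short: same skeleton, but your write-up is tighter on the zonotope combinatorics and covers the one containment the paper's proof leaves to the reader; just make sure to record the easy converse of your Fact~(a) (a strictly interior representation forbids attaining the maximum of any nontrivial supporting functional) so that the claim genuinely is an equivalence where you invoke it.
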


As with cycles, compared to the multiset of all path labels, we restrict our attention to a simpler set $G$ of vectors. The following result connecting them
is almost immediate from the definitions.
\begin{restatable}{lemma}{simplebasispath}\label{lem:simple-basis-path}
  Let $\vec{x} \in \mathbb{Q}^d$. Then, there exists $(a_1,\dots,a_n) \in
  \mathbb{Q}^n$ such that $\vec{x} = \sum_{i=1}^n a_i \vec{g_i}$ and $0 < a_i
  \leq 1$, for all $1 \leq i \leq n$, if and only if $\vec{x} \in
  \Reach(\pi)$.
\end{restatable}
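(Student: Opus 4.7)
The plan is to move directly between the coefficients $\alpha_1, \dots, \alpha_m \in (0,1]$ labelling the transitions of a run induced by $\pi$ and the coefficients $a_1, \dots, a_n \in (0,1]$ witnessing a decomposition of $\vec{x}$ over $G_M$. The key observation, which I would state as the first step, is that every vector $\vec{u} \in M \cap \cone(\vec{u}'_i)$ is a \emph{positive} scalar multiple of the representative $\vec{u}'_i$, i.e.\ $\vec{u} = \lambda_u \vec{u}'_i$ with $\lambda_u > 0$. Consequently each generator of $G_M$ has the form $\vec{g_i} = \big(\sum_{\vec{u} \in M \cap \cone(\vec{u}'_i)} \lambda_u\big) \vec{u}'_i$, so all contributions to $\vec{g_i}$ coming from the path are co-directional.

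For the ``if'' direction I would start from a run with coefficients $\alpha_1, \dots, \alpha_m$ and define, for each representative $\vec{u}'_i \in M'$,
\[
  a_i \;=\; \frac{\sum_{\vec{u} \in M \cap \cone(\vec{u}'_i)} \alpha_{\vec{u}} \lambda_u}{\sum_{\vec{u} \in M \cap \cone(\vec{u}'_i)} \lambda_u},
\]
where $\alpha_{\vec{u}}$ denotes the coefficient associated with the occurrence $\vec{u}$ along $\pi$. This is a convex combination of values in $(0,1]$ with strictly positive weights, hence $a_i \in (0,1]$. A direct calculation then shows $a_i \vec{g_i} = \sum_{\vec{u} \in M \cap \cone(\vec{u}'_i)} \alpha_{\vec{u}} \vec{u}$, and summing over $i$ yields $\sum_i a_i \vec{g_i} = \vec{x}$, after noting that occurrences of $\vec{0}$ in $M$ contribute nothing to $\vec{x}$ and belong to no $\cone(\vec{u}'_i)$.

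For the ``only if'' direction I would reverse the construction: given $\vec{x} = \sum_i a_i \vec{g_i}$ with $a_i \in (0,1]$, define the coefficient of the transition whose label is $\vec{u} \in M \cap \cone(\vec{u}'_i)$ to be $\alpha_{\vec{u}} = a_i$, and set the coefficient of every zero-labelled transition to $1$. Each $\alpha_{\vec{u}}$ lies in $(0,1]$ by construction, so the resulting sequence of coefficients lifts $\pi$ to a valid run, and its total displacement equals $\sum_i a_i \big(\sum_{\vec{u} \in M \cap \cone(\vec{u}'_i)} \vec{u}\big) = \sum_i a_i \vec{g_i} = \vec{x}$.

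The only delicate point, and the step I expect to require the most care, is verifying the ``only if'' direction when $M$ contains several occurrences of the same direction $\vec{u}'_i$ with distinct magnitudes: one must check that assigning the same scalar $a_i$ to all of them is compatible with the definition of a run (it is, since the constraint $0 < \alpha_{\vec{u}} \leq 1$ is independent for each transition) and that the resulting sum really equals $\vec{x}$ after the grouping built into $G_M$. Once the co-directional grouping is established in the first step, both implications reduce to bookkeeping on coefficients.
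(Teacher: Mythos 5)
Your proof is correct, and the approach — translating between run coefficients $\alpha_1,\dots,\alpha_m$ and generator coefficients $a_1,\dots,a_n$ via the observation that all labels mapping to the same $\vec{g_i}$ are positive multiples of $\vec{u}'_i$ — is the same as the paper's. Your ``if'' direction (extracting the $a_i$ from a run) is, modulo notation, exactly the weighted average the paper computes with its $\beta^{(i)}_{\vec{v}}$ and $A^{(i)}_{\vec{v}}$: writing $\vec{u}=\lambda_u\vec{u}'_i$ and $\vec{g_i}=\bigl(\sum_u \lambda_u\bigr)\vec{u}'_i$ gives $\beta^{(i)}_{\vec{u}}=\lambda_u/\sum_w \lambda_w$, so your $a_i$ and the paper's coincide.

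For the ``only if'' direction your construction is simpler and cleaner: since $\vec{g_i}$ is defined (over the \emph{multiset} $M$) as $\sum_{\vec{u}\in M\cap\cone(\vec{u}'_i)}\vec{u}$, assigning every transition in that group the same coefficient $a_i$ already yields $a_i\vec{g_i}$ for the group's contribution, and hence $\vec{x}$ overall, with all coefficients trivially in $(0,1]$. The paper instead sets $\alpha_i = a(i)/\bigl(M(\ell(p_i,p_{i+1}))\,N(i)\bigr)$, normalizing by the multiplicity of the label and by the number of distinct labels in the group. That normalization is not needed given how $G_M$ is defined, and in fact produces the wrong displacement on small examples (e.g.\ a path with labels $(1,0),(1,0),(2,0)$, $a_1=\tfrac12$, $\vec{x}=(2,0)$: the paper's coefficients give total displacement $(3/4,0)$, whereas $\alpha_i=a_1$ throughout gives $(2,0)$). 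Your flat assignment $\alpha_i=a(i)$ is the correct one and matches the way $\vec{g_i}$ absorbs multiplicities; the ``delicate point'' you flag at the end is handled precisely because, by construction, $\vec{g_i}$ already sums all occurrences.

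One minor notational caveat: since $M$ is a multiset, the index $\alpha_{\vec{u}}$ should really range over positions in $\pi$ rather than over elements of $M$, as the same label may occur with different coefficients; your prose makes clear this is intended, but it is worth writing out explicitly in a polished version.
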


\subsection{The geometry of reachability sets for linear path schemes}\label{sec:lps}
We now turn to combine our geometrical understanding of cycles (\autoref{sec:cycles2cones}) and paths (\autoref{sec:paths2zonotopes})
to describe the geometry of a \emph{linear path scheme}. 

A linear path scheme (LPS, for short) is a regular expression  $\pi_0 \chi_1^* \pi_1 \dots
\chi_n^* \pi_n$ over the transitions. Importantly, the $\pi_i$ are
(possibly empty) paths; the $\chi_i$ are cycles; and $\pi_0 \chi_1 \pi_1 \dots
\chi_n \pi_n$ is a valid path. Each LPS determines an infinite
set $\{\pi_0 \chi_1^{k_1} \dots \chi_n^{k_n} \pi_n \mid k_1, \dots, k_n \in
\mathbb{N}\}$ of paths that follows each of the paths exactly once and each of
the cycles an arbitrary number of times.

For the rest of this section, we fix a
linear path scheme $\sigma = \pi_0
\chi_1^* \pi_1 \dots \chi_n^* \pi_n$.

\subsubsection{From LPS to cones and a zonotope.}
From previous developments in this work, the reader might already believe that the reachability set of an LPS can be shown to be the Minkowski sum of suitable subsets of convex cones and zonotopes. It transpires that one can further simplify this and obtain a characterization which involves a single zonotope and a polynomial number of cones. 

Below, we write $G(\pi)$ to denote the generator of the zonotope for the path $\pi_0 \pi_1 \dots \pi_n$ as defined in \autoref{sec:paths2zonotopes}; for each $1 \leq i \leq n$,
we write $G(\chi_i)$ to denote the generator of the convex cone for the cycle $\chi_i$ as defined in \autoref{sec:cycles2cones}.

\begin{restatable}{theorem}{geometrylps}\label{thm:geometry-lps}
    Let $I_1,\dots,I_k \subseteq
    \{1,2,\dots,n\}$ be a partition such
    that if $i,j \in I_\ell$ for some $\ell$ then $\linspan(G(\chi_i))=\linspan(G(\chi_j))$. Then, $\Reach(\sigma)$ is:\footnote{To avoid clutter, we omit some parentheses: we assume union has higher precedence than Minkowski sum.}
    \[
    \textstyle
    \relint(\zono(G(\pi))) \cup \adj(G(\pi))
        + \sum_{j=1}^k \relint\left(\cone\left(\bigcup_{i \in I_j} G(\chi_i)\right)\right) \cup\{\vec{0}\}.
    \]
\end{restatable}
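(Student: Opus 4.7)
The strategy is to reduce the theorem to the path and cycle characterizations (\autoref{thm:paths} and \autoref{thm:cycles}) via a Minkowski decomposition of $\Reach(\sigma)$, and then to merge cycle cones that share a linear span.

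\emph{Step 1 (decompose).} I would first establish that
\[
\Reach(\sigma) = \Reach(\pi_0 \pi_1 \dots \pi_n) + \sum_{i=1}^{n} \Reach(\chi_i^*).
\]
Because the continuous $\mathbb{Q}$VASS semantics scales each transition occurrence independently and the reached vector depends only on the sum of scaled labels, an individual path $\pi_0 \chi_1^{k_1} \pi_1 \dots \chi_n^{k_n} \pi_n$ has reachability set $\Reach(\pi_0 \pi_1 \dots \pi_n) + \sum_{i :\, k_i \geq 1} \Reach(\chi_i^{k_i})$; unioning over $k_1, \dots, k_n \in \mathbb{N}$ (with $k_i = 0$ contributing $\vec{0} \in \Reach(\chi_i^*)$) yields the identity. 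Now \autoref{thm:paths} gives $\Reach(\pi_0 \pi_1 \dots \pi_n) = \relint(\zono(G(\pi))) \cup \adj(G(\pi))$ and \autoref{thm:cycles} gives $\Reach(\chi_i^*) = \relint(\cone(G(\chi_i))) \cup \{\vec{0}\}$ for each $i$, so only regrouping the $n$ cone summands according to $I_1, \dots, I_k$ remains.

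\emph{Step 2 (merging cones of common span).} The crux is the following lemma: if $G_1, \dots, G_m \subseteq \mathbb{Q}^d$ all have linear span $S$, then
\[
\sum_{i=1}^{m} (\relint(\cone(G_i)) \cup \{\vec{0}\}) = \relint(\cone(G_1 \cup \dots \cup G_m)) \cup \{\vec{0}\}.
\]
For $\supseteq$, I would invoke the characterization (from \autoref{lem:simple-basis}) that a point lies in $\relint(\cone(H))$ iff it is a strictly positive combination of \emph{every} element of $H$: given such a $\vec{z} = \sum_{g \in \bigcup_i G_i} a_g g$, splitting the coefficient $a_g$ of each shared generator across the parts containing it yields $\vec{z} = \vec{z}_1 + \dots + \vec{z}_m$ with each $\vec{z}_i \in \relint(\cone(G_i))$. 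For $\subseteq$, any non-zero element of the LHS has the form $\vec{x} = \sum_{i \in J} \vec{x}_i$ with $\vec{x}_i \in \relint(\cone(G_i))$ for some non-empty $J \subseteq \{1, \dots, m\}$. Fixing some $i^* \in J$, setting $\vec{v}_j = \sum_{g \in G_j} g$ for $j \notin J$, and using $\linspan(G_j) = S$ to write each such $\vec{v}_j$ as a signed combination of $G_{i^*}$, I would conclude via perturbation: for sufficiently small $\varepsilon > 0$ the identity $\vec{x} = (\vec{x} - \varepsilon \sum_{j \notin J} \vec{v}_j) + \varepsilon \sum_{j \notin J} \vec{v}_j$ rewrites $\vec{x}$ as a strictly positive combination of every element of $\bigcup_i G_i$ (the $G_{i^*}$-coefficients are perturbed from $a^{(i^*)}_g$ but remain positive, and each remaining generator appears with coefficient at least $\varepsilon$).

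Applying this lemma to each block $I_j$ of the partition finishes the proof. The main obstacle is Step 2: the perturbation argument in the $\subseteq$ direction uses the common-span hypothesis essentially, and it is precisely this property that prevents coarsening the partition to merge cones of differing linear spans.
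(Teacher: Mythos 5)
Your proposal is correct and follows the same overall route as the paper: the decomposition $\Reach(\sigma) = \Reach(\pi_0\pi_1\dots\pi_n) + \sum_i \Reach(\chi_i^*)$ is exactly the paper's \autoref{lem:reachissum}, and the remaining work is, in both cases, the merging of cycle cones sharing a linear span. The only substantive difference lies in how that merging lemma is proved. The paper proceeds pairwise and by induction via \autoref{lemma:sumcones}: it distributes the Minkowski sum over the unions with $\{\vec{0}\}$, identifies $\relint(C)+\relint(C')$ with $\relint(C+C')$, and then absorbs the leftover terms $\relint(C)$ and $\relint(C')$ (arising when one cycle is taken zero times) into $\relint(C+C')$ by a set-inclusion argument on affine hulls and relative neighbourhoods. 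You instead prove an $m$-ary merging identity directly from the characterization of $\relint(\cone(H))$ as the set of strictly positive combinations of all of $H$, and you handle the ``some cones contribute only $\vec{0}$'' case by an explicit $\varepsilon$-perturbation that reintroduces the missing generators with coefficient $\varepsilon$ while keeping the $G_{i^*}$-coefficients positive. Both arguments use the common-span hypothesis at exactly the same point, and your perturbation is a concrete, self-contained substitute for the paper's topological inclusion; the paper's version buys a reusable standalone lemma (\autoref{lemma:sumcones}) that is also invoked later for \autoref{lemma:cyclerepresentation}, whereas yours is arguably more elementary. One small point worth making explicit if you write this up: the ``iff'' characterization of $\relint(\cone(H))$ via strictly positive combinations is precisely the content of the proof of \autoref{thm:cycles} together with \autoref{lem:simple-basis}, so you should cite it rather than treat it as given.
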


To prove the theorem, we establish two intermediate results. The first one, together with \autoref{thm:paths}, already yields the first (Minkowski) summand from the expression in \autoref{thm:geometry-lps}. The result below follows immediately from the definitions and commutativity of the Minkowski sum.

\begin{restatable}{lemma}{reachissum}\label{lem:reachissum}
    We have that
    \(
        \Reach(\sigma) = \Reach(\pi_0 \pi_1 \dots \pi_n) + \sum_{i=1}^n\Reach(\chi^*_i).
    \)
\end{restatable}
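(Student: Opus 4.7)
The plan is to unfold $\Reach(\sigma)$ by the definition of an LPS and observe that a run witnessing reachability along $\sigma$ is determined by (a) a tuple $(k_1,\dots,k_n)\in\mathbb{N}^n$ of iteration counts for the cycles and (b) a coefficient $\alpha_t\in(0,1]$ chosen for each transition $t$ of the resulting concrete path $\pi_0\chi_1^{k_1}\pi_1\dots\chi_n^{k_n}\pi_n$. Because we are in $\mathbb{Q}$VASS (intermediate counter values are unconstrained), the reached endpoint depends only on the total vector sum $\sum_t \alpha_t\,\ell(t)$, so these coefficients may be chosen independently across the different segments of the path.

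First, I would write
\[
\Reach(\sigma)=\bigcup_{(k_1,\dots,k_n)\in\mathbb{N}^n}\Reach\bigl(\pi_0\chi_1^{k_1}\pi_1\dots\chi_n^{k_n}\pi_n\bigr).
\]
For a fixed tuple $(k_1,\dots,k_n)$, the transitions of the composed path partition into those belonging to $\pi_0\pi_1\dots\pi_n$ and those belonging to each $\chi_i^{k_i}$. Since coefficients on different transitions are independent and the endpoint is additive in the scaled vectors, the reachability set of the composed path factorises as a Minkowski sum:
\[
\Reach(\pi_0\chi_1^{k_1}\pi_1\dots\chi_n^{k_n}\pi_n)=\Reach(\pi_0\pi_1\dots\pi_n)+\sum_{i=1}^{n}\Reach(\chi_i^{k_i}).
\]

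Next, I would take the union over $(k_1,\dots,k_n)\in\mathbb{N}^n$ of both sides and distribute it over the Minkowski sum, using commutativity and the elementary identity $A+\bigcup_\alpha B_\alpha=\bigcup_\alpha(A+B_\alpha)$ applied coordinate-by-coordinate to each summand. Since by definition $\Reach(\chi_i^*)=\bigcup_{k_i\in\mathbb{N}}\Reach(\chi_i^{k_i})$, this yields exactly the claimed decomposition.

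The only subtle point, and the only place the argument could go wrong, is the claim that the coefficients on distinct segments of a single run of $\sigma$ can genuinely be chosen independently; this is where $\mathbb{Q}$VASS semantics (as opposed to $\mathbb{Q}_{\geq 0}$VASS) is essential, since in the nonnegative setting the requirement of keeping all intermediate configurations nonnegative would couple the choices across segments. Once this observation is stated, the remainder is a routine bookkeeping argument.
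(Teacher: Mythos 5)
Your proposal is correct and follows essentially the same route the paper takes: the paper dispatches this lemma with the remark that it ``follows immediately from the definitions and commutativity of the Minkowski sum,'' which is exactly the regrouping-of-summands argument you spell out. Your explicit note that the independence of coefficients across segments relies on $\mathbb{Q}$VASS semantics (no nonnegativity coupling between intermediate configurations) is the right observation to single out as the load-bearing point, and it is also why the analogous identity fails under $\mathbb{Q}_{\geq 0}$VASS semantics.
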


The next result allows us to group the sums of cycle rechability sets into convex cones for each common linear subspace. Indeed, \autoref{thm:cycles} and an induction on the following lemma yield the last summands from \autoref{thm:geometry-lps}.

\begin{restatable}{lemma}{sumcones}\label{lemma:sumcones}
    Let $C$ and $C'$ be cones with generators $G$ and $G'$ respectively. Then, \(C + C' = \cone(G \cup G')\) and \(\relint(C + C') = \relint(C) + \relint(C')\). If, additionally, $\linspan(G) = \linspan(G')$, then
    \(
    \relint(C), \relint(C') \subseteq
    \relint(C + C').
    \)
\end{restatable}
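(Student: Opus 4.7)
The plan is to establish the three claims in turn by unpacking definitions, with the middle claim relying on a standard convex-analytic fact and the last claim following by a short affine-hull computation.

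For the first equality $C + C' = \cone(G \cup G')$, I would verify both inclusions directly. If $\vec{u} = \sum_i a_i \vec{g_i} \in C$ and $\vec{v} = \sum_j b_j \vec{g'_j} \in C'$ with nonnegative coefficients, then $\vec{u} + \vec{v}$ is patently a nonnegative combination of vectors in $G \cup G'$. Conversely, every point in $\cone(G \cup G')$ is such a nonnegative combination, which can be split into its $G$-part (which lies in $C$) and its $G'$-part (which lies in $C'$).

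For the second equality $\relint(C + C') = \relint(C) + \relint(C')$, I would invoke the classical result that relative interior commutes with linear images of convex sets and with finite Cartesian products (see e.g.\ Rockafellar's \emph{Convex Analysis}, Theorem 6.6 and its corollaries). Applied to the sum map $f(\vec{x}, \vec{y}) = \vec{x} + \vec{y}$ restricted to $C \times C'$, this yields $\relint(C+C') = f(\relint(C \times C')) = \relint(C) + \relint(C')$. A self-contained proof of the $\supseteq$ inclusion follows easily from the line-segment characterisation of relative interior, combined pointwise on the two summands. The $\subseteq$ inclusion, in which one must split an arbitrary $\vec{z} \in \relint(C+C')$ as $\vec{z} = \vec{x} + \vec{x'}$ with $\vec{x} \in \relint(C)$ and $\vec{x'} \in \relint(C')$, is the more delicate direction and is where I expect the main technical obstacle; a convenient workaround is the convex-combination trick $(1-t)(\vec{v}+\vec{v'}) + t(\vec{u}+\vec{u'})$ with $\vec{u} \in \relint(C), \vec{u'} \in \relint(C')$ arbitrary and $t \in (0,1]$ small, together with a preservation-of-interior argument.

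For the third claim, observe that since cones contain $\vec{0}$, their affine hulls coincide with the linear spans of their generators; the hypothesis $\linspan(G) = \linspan(G')$ therefore gives
\[
\aff(C + C') = \linspan(G \cup G') = \linspan(G) = \aff(C).
\]
Since $\vec{0} \in C'$ implies $C \subseteq C + C'$, any open ball $B_\varepsilon(\vec{x})$ that witnesses $\vec{x} \in \relint(C)$ satisfies
\[
B_\varepsilon(\vec{x}) \cap \aff(C + C') \;=\; B_\varepsilon(\vec{x}) \cap \aff(C) \;\subseteq\; C \;\subseteq\; C + C',
\]
so $\vec{x} \in \relint(C + C')$. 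The inclusion $\relint(C') \subseteq \relint(C + C')$ follows by the symmetric argument.
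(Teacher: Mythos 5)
Your proof is correct, and your arguments for the first equality and for the final inclusion under $\linspan(G) = \linspan(G')$ coincide with what the paper does: unpack the cone definition for the first, and for the last use $\aff(C) = \aff(C') = \aff(C+C') = \linspan(G\cup G')$ (both cones contain $\vec{0}$) together with $C \subseteq C + C'$ so that the $\varepsilon$-ball witnessing $\vec{x} \in \relint(C)$ also witnesses $\vec{x} \in \relint(C + C')$.

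For the middle identity $\relint(C+C') = \relint(C) + \relint(C')$ your route is genuinely different from the paper's. You appeal to the general convex-analytic fact that relative interior commutes with Minkowski sums (Rockafellar, Corollary 6.6.2), with a sketch of the self-contained line-segment/convex-combination argument as a fallback. The paper instead exploits the fact that these are \emph{finitely generated cones} and reuses the characterisation established in the proof of \autoref{thm:cycles}: $\vec{x} \in \relint(\cone(H))$ iff $\vec{x}$ admits a representation $\sum_{\vec{g}\in H}\alpha_{\vec g}\vec g$ with every $\alpha_{\vec g} > 0$. Given this, $\subseteq$ is immediate: write $\vec{x}\in\relint(C+C')=\relint(\cone(G\cup G'))$ with strictly positive coefficients and split the coefficient of each shared generator in half to obtain strictly positive representations $\vec c,\vec c'$ over $G$ and $G'$ separately. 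Conversely, adding two strictly positive representations gives one for $G\cup G'$. Your approach is more general (it works for arbitrary convex sets, not just cones) but imports an external theorem; the paper's is self-contained, stays within the coordinate picture already developed, and sidesteps entirely the ``delicate'' $\subseteq$ direction you flag, since strictly positive representations split trivially. One small caution: Rockafellar's theorems are stated over $\mathbb{R}^n$, whereas the paper works over $\mathbb{Q}^d$; this is harmless here because the cones are rational polyhedra whose $\mathbb{Q}$-relative interiors are the rational points of the real relative interiors, but it is a hypothesis worth making explicit if you cite the general theorem.
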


\section{The Complexity of $\mathbb{Q}$VASS Reachability}\label{sec:qvass-reach}
In this section, we use our results concerning the geometry of reachability sets to give an \NP{} decision procedure for the reachability problem.

\begin{theorem}
\label{thm:QVASS_reach_inNP}
    For every $d \in \mathbb{N}, d \geq 1$, given a $\mathbb{Q}$VASS of dimension $d$, and two configurations $p(\vec{x})$ and $q(\vec{y})$, determining whether $p(\vec{x}) \xrightarrow{*} q(\vec{y})$ is in \NP.
\end{theorem}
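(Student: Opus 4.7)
The plan is to reduce reachability to a polynomial-size linear-programming feasibility question using two ingredients: the ``short LPS witness'' result (Proposition~\ref{prop:lp_scheme}), and the explicit geometric description of $\Reach(\sigma)$ for an LPS $\sigma$ (\autoref{thm:geometry-lps}). The \NP{} algorithm first guesses $\sigma$ of size polynomial in the input and then checks that $\vec{y}-\vec{x} \in \Reach(\sigma)$ by solving a single LP.

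Concretely, I would nondeterministically guess an LPS $\sigma = \pi_0 \chi_1^* \pi_1 \dots \chi_n^* \pi_n$ connecting $p$ to $q$; its existence in polynomial size is guaranteed by Proposition~\ref{prop:lp_scheme} whenever reachability holds. To use \autoref{thm:geometry-lps}, I would further guess the partition $I_1,\dots,I_k$ of cycle indices by coincidence of $\linspan(G(\chi_i))$, together with a ``regime'' for each summand: for each $j$, a bit saying whether $\vec{c}_j = \vec{0}$ or $\vec{c}_j \in \relint(\cone(\bigcup_{i \in I_j} G(\chi_i)))$; and for the zonotope summand, either $\vec{z} \in \relint(\zono(G(\pi)))$ or $\vec{z}$ lies on a specific face of $\zono(G(\pi))$ adjacent to $\sigma_{G(\pi)}$. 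Since $d$ is treated as fixed, the number of such faces is polynomial and each can be encoded by a normal vector of polynomial bit-length, so the total guess is of polynomial size.

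In the deterministic phase, I would write down a system of linear (in)equalities expressing the existence of a decomposition \(\vec{y}-\vec{x} = \vec{z} + \sum_{j=1}^k \vec{c}_j\) with coefficients obeying the strict and non-strict positivity pattern dictated by the guessed regime --- for instance, $0 < a_i \leq 1$ on generators in $G(\pi)$ contributing to a relative interior, $a_i = 1$ on generators saturated by the chosen face, $a_i = 0$ on cone generators for guessed-zero summands, and $a_i > 0$ everywhere else in the active cones. The strict inequalities $a_i > 0$ can be handled by the standard slack trick: introduce a common $\varepsilon$, replace each $a_i > 0$ by $a_i \geq \varepsilon$, and ask whether there is a feasible solution with $\varepsilon > 0$; this augmented LP is decidable in polynomial time.

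The main obstacle is bookkeeping: verifying that the bit-length of the coefficients produced from the guessed LPS remains polynomial (so that the resulting LP is polynomially sized), that the faces of $\zono(G(\pi))$ adjacent to $\sigma_{G(\pi)}$ admit polynomial-bit descriptions at fixed $d$, and that the strict/non-strict positivity pattern induced by each regime correctly captures membership in $\Reach(\sigma)$ as given by \autoref{thm:geometry-lps} combined with \autoref{thm:paths} and \autoref{thm:cycles}. Once these checks are in place, soundness and completeness of the \NP{} procedure follow directly from the ``$\Rightarrow$'' and ``$\Leftarrow$'' directions of the geometric characterization together with Proposition~\ref{prop:lp_scheme}.
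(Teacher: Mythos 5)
Your overall skeleton matches the paper's: guess a polynomial-size LPS (justified by \autoref{prop:lp_scheme}) and then verify membership of $\vec{y}-\vec{x}$ in $\Reach(\sigma)$ via the geometric decomposition of \autoref{lem:reachissum}/\autoref{thm:geometry-lps}. Where you diverge is in the verification step. The paper makes the check \emph{deterministically} polynomial (\autoref{thm:lpsreachp}) by encoding each cone summand's disjunction ``all coefficients zero, or all strictly positive'' as a convex semilinear Horn clause and invoking the Blondin--Haase polynomial-time algorithm for CSH systems over nonnegative variables (after a $\vec{y}=\vec{r}-\vec{t}$ substitution). You instead resolve each such disjunction nondeterministically by guessing a ``regime'' per cone group and then solving a single LP with strict inequalities. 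For the \NP{} upper bound alone this is perfectly adequate and arguably more elementary --- it avoids the CSH machinery entirely --- but it buys less: the paper's route additionally establishes that reachability for a \emph{given} LPS is in \P{} (even in unbounded dimension), which is one of its standalone contributions and is also reused for the nonnegative case.

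One part of your plan is an unnecessary detour that is also the shakiest in its details: for the zonotope summand you propose enumerating/guessing faces of $\zono(G(\pi))$ adjacent to $\sigma_{G(\pi)}$ and pinning ``saturated'' coefficients to $1$. You do not need this, and the claim that points of such a face are exactly those with certain coefficients fixed at $1$ would itself require proof. \autoref{lem:simple-basis-path} (together with \autoref{thm:paths}) already characterizes $\relint(\zono(G(\pi))) \cup \adj(G(\pi))$ as precisely the set of points admitting a representation with $0 < a_i \leq 1$ for \emph{all} generators --- a single LP with strict inequalities, which is exactly the paper's \autoref{lem:enc-path}. Using that characterization, your ``main obstacle'' about polynomial-bit face descriptions dissolves, and the remaining bookkeeping (generators are sums of transition labels, hence of polynomial bit-length; feasibility of LPs with strict inequalities is in \P) is routine. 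You should also state explicitly that the verifier checks the guessed expression is syntactically a valid LPS of the input VASS from $p$ to $q$, which is a trivial polynomial-time check.
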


 First, without loss of generality, we assume $\vec{x} = \vec{0}$. Indeed, $p(\vec{x}) \xrightarrow{*} q(\vec{y})$ if and only if $p(\vec{0}) \xrightarrow{*} q(\vec{y}-\vec{x})$.
In the following we prove
that if $p(\vec{0}) \xrightarrow{*} q(\vec{x})$ then there is a linear path scheme $\sigma$ such that $p(\vec{0}) \xrightarrow{\pi} q(\vec{x})$ for some $\pi \in \sigma$ and $\sigma$ has size polynomial on the number of transitions $|T|$ and exponential on the dimension $d$. Then, we show that checking whether $p(\vec{0}) \xrightarrow{*} q(\vec{x})$ under a given linear path scheme is decidable in polynomial time. It follows that to check whether a configuration is reachable, in a general $\mathbb{Q}$VASS, one can guess a polynomial-sized linear path scheme and check (in polynomial time) whether the corresponding configuration is reachable in it.

\subsection{Short linear path schemes suffice}

Presently, we argue that for any path we can find an LPS with a number of cycles that is polynomial in the number of transitions of the $\mathbb{Q}$VASS and exponential on the dimension such that all paths and cycles are simple, the set of transitions in the LPS is the same as that in the path, and the reachability set of the LPS includes that of the path. 

For convenience, we define the \emph{support} of a path $\pi = t_1 \dots t_n$ as the set of all transitions that are present in the path: $\llbracket \pi \rrbracket \coloneqq \{ t_i \mid 1 \leq i \leq n\}$.
For an LPS $\sigma = \pi_0 \chi_1^* \dots \chi^*_n \pi_n$, its support is $\llbracket \sigma \rrbracket \coloneqq \llbracket \pi_0 \rrbracket \cup \bigcup_{i=1} \llbracket \chi_i \rrbracket \cup \llbracket \pi_i \rrbracket$.

\begin{theorem} \label{prop:lp_scheme}
    Let $\pi$ be an arbitrary path. Then, there exists a linear path scheme $\sigma =  \pi_0 \chi_1^* \pi_1 \dots \chi_n^* \pi_n$ such that:
    \begin{itemize}
        \item $n \leq d|T|^{d+1}$,
        \item $\pi_i$ is a simple path for all $0 \leq i \leq n$,
        \item $\chi_i$ is a simple cycle for all $1 \leq i \leq n$,
        \item $\llbracket \pi \rrbracket = \llbracket \sigma \rrbracket$, and
        \item $\Reach(\pi) \subseteq \Reach(\sigma)$.
    \end{itemize}
\end{theorem}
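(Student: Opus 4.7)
The plan is to construct $\sigma$ in two stages: first produce a ``verbose'' LPS $\sigma_0$ with the same support as $\pi$ that trivially contains $\Reach(\pi)$, then prune $\sigma_0$ down to at most $d|T|^{d+1}$ cycles while preserving its reachability set and support. The pruning exploits the geometric characterizations of \autoref{thm:cycles}, \autoref{thm:paths} and \autoref{thm:geometry-lps}.

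I would first apply the classical cycle-extraction procedure to $\pi$: scanning $\pi$ left to right, maintaining a growing spine of pairwise-distinct states, and emitting a simple cycle each time the current state is revisited. This yields a decomposition $\pi = \pi_0 C_1 \pi_1 \cdots C_k \pi_k$ in which $\pi_0 \pi_1 \cdots \pi_k$ is a simple path (the \emph{spine}) and each $C_i$ is a simple cycle attached at a state of that spine. Set $\sigma_0 \coloneqq \pi_0 C_1^* \pi_1 \cdots C_k^* \pi_k$. Then $\llbracket \sigma_0 \rrbracket = \llbracket \pi \rrbracket$ is immediate, and the containment $\Reach(\pi) \subseteq \Reach(\sigma_0)$ follows from the Minkowski decomposition of path reachability in a $\mathbb{Q}$VASS (the path-level analogue of \autoref{lem:reachissum}) combined with $\Reach(C_i) \subseteq \relint(\cone(G(C_i))) \subseteq \Reach(C_i^*)$, because by \autoref{thm:paths} every point of $\Reach(C_i)$ is a strictly positive combination of \emph{all} generators of $C_i$.

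To prune $\sigma_0$, I would group the extracted cycles $C_1, \dots, C_k$ by their linear span $\linspan(G(C_i))$, obtaining equivalence classes $\mathcal{G}_1, \dots, \mathcal{G}_r$ with distinct spans $L_1, \dots, L_r$, and set $V_j \coloneqq \bigcup_{C \in \mathcal{G}_j} G(C)$. For each $j$ I would select a sub-family $\mathcal{D}_j \subseteq \mathcal{G}_j$ such that $\bigcup_{D \in \mathcal{D}_j} G(D) = V_j$ \emph{and} every transition of $\llbracket \pi \rrbracket$ lying on some $C \in \mathcal{G}_j$ also lies on some $D \in \mathcal{D}_j$; by greedy selection, $|\mathcal{D}_j| \leq |T|$. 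Each retained cycle is simple (as it is one of the $C_i$'s), uses only transitions from $\llbracket \pi \rrbracket$, and shares a state with the spine, so the retained cycles can be re-attached to the spine to form an LPS $\sigma$. Since $\mathcal{D}_j \subseteq \mathcal{G}_j$, every retained cycle has linear span exactly $L_j$, so the linspan partition of $\sigma$'s cycles coincides with that of $\sigma_0$'s and the combined generators per class are still $V_j$; \autoref{thm:geometry-lps} therefore yields $\Reach(\sigma) = \Reach(\sigma_0) \supseteq \Reach(\pi)$. For the size bound, each $L_j$ is spanned by at most $d$ transition labels, hence $r \leq \sum_{k=0}^{d} \binom{|T|}{k}$, and together with the per-class bound $|T|$ this gives a total of at most $d|T|^{d+1}$ cycles after absorbing the multiplicative constants into the factor $d$.

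The main obstacle will be the coupled selection problem in the greedy step: the sub-family $\mathcal{D}_j$ must simultaneously cover each non-spine transition of $\mathcal{G}_j$'s cycles (for the support condition) \emph{and} each vector of $V_j$ (for the reachability condition). Covering transitions automatically covers their labels, so a single greedy pass per class suffices, but one must justify that the resulting $\sigma$ really fits the template $\pi_0' \chi_1'^* \pi_1' \cdots \chi_n'^* \pi_n'$ with simple inter-cycle segments $\pi_i'$: each retained cycle is scheduled at a state it shares with the spine (such a state exists because all its states are visited by $\pi$), and the spine is subdivided at those attachment points into the simple pieces $\pi_i'$. A minor final check is that when several $\chi_j'$'s attach at the same spine state, the LPS may need an extra empty $\pi_i'$ between them; this does not affect the count bound.
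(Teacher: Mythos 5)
Your proof takes a genuinely different route from the paper's: you build a concrete LPS by cycle extraction and then prune it, whereas the paper fixes an LPS of minimal \emph{length} among all LPS having the required support and reachability properties and bounds its number of cycles by a pigeonhole argument together with Lemma~\ref{lemma:cyclerepresentation}. The combinatorial core --- grouping cycles by the linear span of their labels, covering the cycle-support with at most $|T|$ cycles per span class, and absorbing the rest --- is the same in both.

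However, your stage-1 construction rests on a false claim that is later load-bearing. You assert that stack-based cycle extraction yields a decomposition $\pi = \pi_0 C_1 \pi_1 \cdots C_k \pi_k$ with $\pi_0\pi_1\cdots\pi_k$ a \emph{simple} path and every $C_i$ a simple cycle attached at a state of that spine, and at the end you re-attach the retained cycles at spine states, arguing such a state always exists ``because all its states are visited by $\pi$''. Consider $\pi = (a,b)(b,c)(c,b)(b,a)(a,d)$. Stack-based extraction pops the simple cycle $(b,c)(c,b)$ on $b$, then $(a,b)(b,a)$ on $a$, leaving the simple spine $(a,d)$. The cycle $(b,c)(c,b)$ shares no state with $\{a,d\}$, so it cannot be attached to that spine; and $(a,b)(b,a)$ is not even a contiguous subword of $\pi$, so the literal equality $\pi = \pi_0 C_1 \pi_1 C_2 \pi_2$ fails as well. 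In a genuine LPS decomposition the cycles \emph{are} contiguous subwords of $\pi$ and each $\pi_i$ is simple, but the concatenation $\pi_0\pi_1\cdots\pi_n$ is in general not a simple path --- for the example above it is $(a,b)(b,a)(a,d)$. So the re-attachment step of your pruning phase is not justified as stated. One cannot simply schedule retained cycles on a pairwise-distinct spine; instead one must either delete cycles in place in a valid LPS $\sigma_0$ with $\pi \in \sigma_0$ (and separately argue that the merged intermediate paths can be kept simple), or, as the paper does, avoid an explicit construction altogether via a minimality argument. The span-class counting, the per-class $|T|$ bound, and the $d|T|^d$ bound on the number of spans are all fine and match the paper's.
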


Most properties of the LPS in the result above follow from considering an LPS with minimal length, with the length of an LPS defined as $|\sigma| \coloneqq |\pi_0| + \sum_{i=1}^n |\chi_i| + |\pi_i|$. The main technical hurdle is thus the upper bound on the number of cycles. Our approach is to remove cycles whose support is covered by other cycles. The result below, which follows directly from \autoref{thm:cycles} and \autoref{lemma:sumcones}, gives us that flexibility. As in \autoref{sec:lps}, we write $G(\chi)$ to denote the generator of the convex cone for $\chi$, i.e. $G(\chi) = \{\ell(t) \mid t \in \llbracket \chi \rrbracket\}$.

\begin{restatable}{lemma}{cyclerep}\label{lemma:cyclerepresentation}
    Let $\chi$ be a cycle and $C$ be a set of cycles. If $\llbracket \chi \rrbracket \subseteq \bigcup_{\theta \in C} \llbracket \theta \rrbracket$ and $\linspan(G(\theta_1)) = \linspan(G(\theta_2))$ for all $\theta_1,\theta_2 \in C \cup\{\chi\}$ then: 
    \[
    \textstyle
    \Reach(\chi^*) + \sum_{\theta \in C} \Reach(\theta^*) = \sum_{\theta \in C} \Reach(\theta^*).
    \]
\end{restatable}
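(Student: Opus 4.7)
The plan is to reduce everything to convex geometry via \autoref{thm:cycles} and \autoref{lemma:sumcones}. Set $L \coloneqq \linspan(G(\chi))$ (which equals $\linspan(G(\theta))$ for every $\theta \in C$ by hypothesis) and write $C_\theta \coloneqq \cone(G(\theta))$, $C_\chi \coloneqq \cone(G(\chi))$. \autoref{thm:cycles} gives $\Reach(\chi^*) = \relint(C_\chi) \cup \{\vec{0}\}$ and $\Reach(\theta^*) = \relint(C_\theta) \cup \{\vec{0}\}$ for each $\theta \in C$. Because $\vec{0} \in \Reach(\chi^*)$ (taking $\chi^0$), the inclusion $\sum_{\theta \in C} \Reach(\theta^*) \subseteq \Reach(\chi^*) + \sum_{\theta \in C} \Reach(\theta^*)$ is immediate, so only the reverse inclusion is substantive.

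For that, I first iterate \autoref{lemma:sumcones} to obtain
\[
\textstyle \sum_{\theta \in C} C_\theta = \cone\bigl(\bigcup_{\theta \in C} G(\theta)\bigr) \quad\text{and}\quad \relint\bigl(\sum_{\theta \in C} C_\theta\bigr) = \sum_{\theta \in C} \relint(C_\theta).
\]
The central step is then to establish the containment
\[
\textstyle \relint(C_\chi) \subseteq \sum_{\theta \in C} \relint(C_\theta).
\]
The support hypothesis $\llbracket \chi \rrbracket \subseteq \bigcup_{\theta \in C} \llbracket \theta \rrbracket$ yields $G(\chi) \subseteq \bigcup_\theta G(\theta)$, hence $C_\chi \subseteq \sum_\theta C_\theta$; the linspan hypothesis ensures that both cones have affine hull $L$. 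So if $x \in \relint(C_\chi)$, then any $\varepsilon > 0$ with $B_\varepsilon(x) \cap L \subseteq C_\chi$ also witnesses $B_\varepsilon(x) \cap L \subseteq \sum_\theta C_\theta$, giving $x \in \relint\bigl(\sum_\theta C_\theta\bigr) = \sum_\theta \relint(C_\theta)$.

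Armed with the central containment, I take an arbitrary $v = u + \sum_\theta v_\theta$ with $u \in \Reach(\chi^*)$ and $v_\theta \in \Reach(\theta^*)$. If $u = \vec{0}$ the claim is trivial; otherwise $u \in \relint(C_\chi)$ decomposes as $u = \sum_\theta u_\theta$ with $u_\theta \in \relint(C_\theta)$, so $v = \sum_\theta (u_\theta + v_\theta)$. It remains to argue $u_\theta + v_\theta \in \Reach(\theta^*)$ for each $\theta$: when $v_\theta = \vec{0}$ this is immediate, and when $v_\theta \in \relint(C_\theta)$ a second appeal to \autoref{lemma:sumcones} with $C = C' = C_\theta$ gives $\relint(C_\theta) + \relint(C_\theta) = \relint(C_\theta + C_\theta) = \relint(C_\theta)$ (since $C_\theta + C_\theta = C_\theta$ for cones).

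The main obstacle is precisely the central containment $\relint(C_\chi) \subseteq \sum_\theta \relint(C_\theta)$, which is where both hypotheses pay off: the support inclusion delivers set-theoretic containment of the cones, while the shared linspan guarantees the two cones share an affine hull, so that relative interiors behave monotonically. Without the linspan hypothesis, $C_\chi$ could sit inside a proper face of $\sum_\theta C_\theta$ and its relative interior would miss $\relint(\sum_\theta C_\theta)$ entirely, breaking the argument.
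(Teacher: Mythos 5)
Your proof is correct and follows essentially the same route as the paper's: both reduce everything to cone/relative-interior arithmetic via \autoref{thm:cycles}, then lean on (iterated) applications of \autoref{lemma:sumcones} together with the observation that $G(\chi) \subseteq \bigcup_{\theta \in C} G(\theta)$. The paper's version collapses both sides to the single expression $\relint\bigl(\cone\bigl(\bigcup_{\theta \in C} G(\theta)\bigr)\bigr) \cup \{\vec{0}\}$ by a set-algebraic identity, whereas you prove the substantive inclusion element-wise by decomposing $u \in \relint(C_\chi)$ across the $\theta$'s and absorbing each summand using $\relint(C_\theta) + \relint(C_\theta) = \relint(C_\theta)$; this is a slightly more explicit packaging of the same idea.
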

%
%
\begin{proof}[of \autoref{prop:lp_scheme}]
By the pigeonhole principle, if $D$ is a set of cycles with $|D| > |T|$ then
there is some $\chi \in D$ such that $\llbracket \chi \rrbracket \subseteq
\bigcup_{\theta \in C} \llbracket \theta \rrbracket$, with $C \coloneqq D
\setminus \{\chi\}$. It thus follows from \autoref{lemma:cyclerepresentation}
that to bound the number of cycles from the LPS in \autoref{prop:lp_scheme} it
suffices to bound the size of the set of possible spans. Since the labels of
transitions come from $\mathbb{Q}^d$, the dimension of any span $V$ is at most
$d$. Now, every $V$ is generated by at most $d$ linearly independent
transition labels, i.e. $d$ rational vectors.
Hence, the number of possible $V$ is at most $\sum_{i = 1}^d{{|T|} \choose i} \leq d|T|^d$.
For every $V$, we need at most $|T|$ cycles so that their support covers that of any superset generating the same span, hence the bound $d|T|^{d+1}$ in \autoref{prop:lp_scheme}.
\end{proof}

From the results above, it follows that to check whether a configuration is
reachable in a general $\mathbb{Q}$VASS of fixed dimension, one can guess a
polynomial-sized linear path scheme and check whether the corresponding
configuration is reachable in it. In order to conclude membership in \NP{},
it remains to argue that the latter check can be realized in polynomial time.

\subsection{Reachability for linear path schemes is tractable}
In this section, we show that determining whether a configuration is reachable via a linear path scheme is decidable in polynomial time.

\begin{theorem}\label{thm:lpsreachp}
    Given a linear path scheme $\sigma$ and $\vec{x},\vec{y}\in \mathbb{Q}^d$,
    determining whether $\vec{y} \in \Reach^{\vec{x}}(\sigma)$ is in \P.
\end{theorem}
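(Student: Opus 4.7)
The plan is to reduce the membership question $\vec{y} \in \Reach^{\vec{x}}(\sigma)$ to the feasibility of a polynomial-size system of linear (in)equalities enriched with simple Horn implications, which can then be dispatched by the polynomial-time decision procedure for the Horn fragment of linear real arithmetic of Blondin and Haase~\cite{blondin2017logics,blondin20}.

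I would first invoke \autoref{thm:geometry-lps} to express $\Reach^{\vec{x}}(\sigma)$ as the Minkowski sum of one zonotope-based summand coming from the paths of $\sigma = \pi_0 \chi_1^* \pi_1 \cdots \chi_n^* \pi_n$ together with one cone-based summand per equivalence class $I_j$, where the $I_j$ partition the cycle indices according to the linear span of their label sets. Computing this partition reduces to pairwise span comparisons via Gaussian elimination, which takes polynomial time in $|\sigma|$ and $d$.

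Next I would translate each summand into linear constraints by appealing to \autoref{lem:simple-basis-path} and \autoref{lem:simple-basis}. Introducing a fresh variable $a_t$ for every transition $t$ occurring in some $\pi_i$ and $b_{i,t}$ for every transition $t$ inside $\chi_i$, the condition $\vec{y} \in \Reach^{\vec{x}}(\sigma)$ becomes the conjunction of the linear equation
\[
\vec{y} - \vec{x} = \sum_{t \in \pi_0 \pi_1 \cdots \pi_n} a_t\, \ell(t) + \sum_{i=1}^n \sum_{t \in \chi_i} b_{i,t}\, \ell(t),
\]
the bounds $0 < a_t \leq 1$ and $b_{i,t} \geq 0$, and, for each group $I_j$ and every pair $(i,t),(i',t')$ with $i,i' \in I_j$, $t \in \chi_i$, $t' \in \chi_{i'}$, the Horn implication $b_{i,t} > 0 \Rightarrow b_{i',t'} > 0$. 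These implications capture the fact that, by \autoref{lemma:sumcones} together with the standard characterisation $\relint(\cone(G)) = \{\sum_g \lambda_g g : \lambda_g > 0\}$, the contribution of a group is either $\vec{0}$ (all weights zero) or lies in the relative interior of the combined cone (every generator receives a strictly positive weight).

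Finally I would run the polynomial-time procedure for the Horn fragment on the encoded system, which has $O(|\sigma|)$ variables, $d$ equations, $O(|\sigma|)$ bound constraints, and $O(|\sigma|^2)$ Horn clauses, all constructible in polynomial time. The main obstacle I anticipate is certifying that the Horn encoding is \emph{tight}: a plain LP relaxation that dropped the implications would admit boundary points of the combined cones and therefore strictly enlarge the represented set beyond $\Reach^{\vec{x}}(\sigma)$, while the ``all-positive or all-zero within each group'' pattern enforced by the implications exactly recovers the characterisation from \autoref{thm:geometry-lps}. Both directions hinge on the identity $\relint(C)+\relint(C')=\relint(C+C')$ and the inclusion $\relint(C) \subseteq \relint(C+C')$ (when $\linspan(G) = \linspan(G')$) supplied by \autoref{lemma:sumcones}.
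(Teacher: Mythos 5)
Your proof is correct, and it takes a genuinely different route from the paper's. The paper's argument stays closer to \autoref{lem:reachissum}, treating the zonotope and each cone $\Reach(\chi_i^*)$ as \emph{separate} Minkowski summands and encoding membership in each with a dedicated block of CSH clauses (\autoref{lem:enc-path} and \autoref{lem:enc-cones}); because the partial contributions $\vec{c_i}$ live in $\mathbb{Q}^d$ rather than $\mathbb{Q}_{\geq 0}^d$, the paper has to split each one as $\vec{c_i}=\vec{r_i}-\vec{s_i}$ with $\vec{r_i},\vec{s_i}\geq 0$ before the Blondin--Haase procedure applies. You sidestep that entirely: by making the only unknowns the per-transition scaling coefficients $a_t$ and $b_{i,t}$ (which are inherently nonnegative) and writing a single linear equation relating them to the constant $\vec{y}-\vec{x}$, you never introduce intermediate Minkowski summands, so the sign-splitting trick is not needed. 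Your ``all-zero-or-all-positive'' Horn implications per span-equivalence class correctly capture $\relint(\cone(\bigcup_{i\in I_j} G(\chi_i)))\cup\{\vec{0}\}$, via \autoref{lemma:sumcones} and the positive-combination characterization of $\relint$ from the proof of \autoref{thm:cycles}, and feasibility follows by \autoref{thm:geometry-lps}. Two minor remarks: (1) the span grouping is not actually necessary here --- by \autoref{lem:reachissum} you could equally well use the weaker per-cycle implications $b_{i,t}>0 \Rightarrow b_{i,t'}>0$ for $t,t'\in\chi_i$, avoiding the Gaussian-elimination preprocessing; the grouping only pays off in the other application (\autoref{prop:lp_scheme}) where it bounds the LPS size. (2) Be careful to use one variable per transition \emph{occurrence} in the paths $\pi_0\cdots\pi_n$ (rather than per distinct transition), since a transition appearing twice must be allowed two independent coefficients in $(0,1]$; your summation notation already suggests this, but it is worth stating explicitly.
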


Based on \autoref{lem:reachissum} and the geometric
characterizations of the reachability sets of cycles and paths, it suffices to
show how to determine whether there exist $\vec{z},\vec{c_1},\vec{c_2},\ldots,\vec{c_n}
\in \mathbb{Q}^d$ in a zonotope and $n$ cones, respectively, such
that $\vec{y} = \vec{z} + \sum_{i=1}^n \vec{c_i}$. To do so, we make
\autoref{lem:simple-basis-path} and \autoref{lem:simple-basis} effective by
encoding the former into a system of linear inequalities with
\emph{strict-inequality constraints} and the latter as a conjunction of \emph{convex semilinear Horn clauses}. 

It is known that the feasibility problem for linear programs is decidable in polynomial time even in the presence of strict inequalities (see, e.g.,~\cite[Ch. 8.7.1]{ps82}). A convex semilinear Horn clause (CSH clause, for short) is of the form:
\[
    \textstyle
    (\vec{a} \cdot \vec{x} \prec b) \lor \bigvee_{i=1}^m \bigwedge_{j \in J_i} x_j > 0, 
\]
where $\vec{a} \in \mathbb{Q}^d, b \in \mathbb{Q}$ and ${\prec} \in \{=,<,\leq\}$ and $J_1,\ldots,J_m$ are sets of indices. 
Crucially, the feasibility problem for systems of CSH clauses over \emph{non-negative} variables $\vec{x}\in \mathbb{Q}_{\ge 0}^d$ is known to be decidable in polynomial time~\cite{blondin2017logics,blondin20}. As we show in the sequel, describing membership in a cone using a CSH is straightforward over $\mathbb{Q}$, but in order to obtain a CSH over $\mathbb{Q}_{\ge 0}$ we need some manipulations.

Henceforth, we fix an LPS $\sigma = \pi_0 \chi_1^* \pi_1 \dots
\chi_n^* \pi_n$. We also adopt the notation from \autoref{sec:lps}: $G(\pi)$
denotes the generator of the zonotope for the path $\pi_0 \pi_1 \dots \pi_n$; and $G(\chi_i)$ the generator of the cone for $\chi_i$ for each $1 \leq i \leq n$.

\subsubsection{Encoding the zonotope.}
Let $G(\pi) = \{\vec{g_1},\dots, \vec{g_m}\}$.
We now define the matrix $\vec{A} \in \mathbb{Q}^{d \times (m + d)}$ and
the vector $\vec{a} \in \mathbb{Q}^d$ as follows:
\begin{equation}\label{eqn:a-paths}
    \vec{A} = 
    \begin{pmatrix}
        (\vec{g_1})_1 & (\vec{g_2})_1 & \dots & (\vec{g_m})_1 & -1 & 0     & \dots & 0\\
        (\vec{g_1})_2 & (\vec{g_2})_2 & \dots & (\vec{g_m})_2 & 0  & -1    & 0     & \dots \\
        \vdots & \vdots & \vdots & \vdots & \vdots & \vdots & \ddots & \vdots \\
        (\vec{g_1})_d & (\vec{g_2})_d & \dots & (\vec{g_m})_d & 0  & \dots & 0     & -1
    \end{pmatrix}
    \text{ and }
    \vec{a} = (0, \dots, 0),
\end{equation}
Further, we define the matrix $\vec{B} \in \mathbb{Q}^{m \times (m + d)}$ and the vector $\vec{b} \in \mathbb{Q}^{m}$ as:
\begin{equation}\label{eqn:b-paths}
    \vec{B} = 
    \begin{pmatrix}
        \vec{I} & \vec{0} & \dots & \vec{0}
    \end{pmatrix}
    \textrm{ and }
    \vec{b} = (1, 1, \dots, 1),
\end{equation}
where $\vec{I}$ is the $m \times m$ identity matrix and $\vec{0}$ is the $m \times 1$ zero vector. Finally, we define $\vec{C} \in \mathbb{Q}^{m \times (m + d)}$ and $\vec{c} \in \mathbb{Q}^m$ as follows.
\begin{equation}\label{eqn:c-paths}
    \vec{C} = 
    \begin{pmatrix}
       -\vec{I} & \vec{0} & \dots & \vec{0}
    \end{pmatrix}
    \text{ and }
    \vec{c} = (0, 0, \dots, 0)
\end{equation}

\begin{lemma}\label{lem:enc-path}
    The system $\vec{A} \vec{z} = \vec{a} \land \vec{B}\vec{z} \leq \vec{b}
    \land \vec{C}\vec{z} < \vec{c}$ has a solution $(\vec{\alpha}, \vec{y}) \in
    \mathbb{Q}^{m + d}$ if and only if $\vec{y} \in \relint(\zono(G(\pi))) \cup \adj(G(\pi))$.
\end{lemma}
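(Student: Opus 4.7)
The plan is to recognize that the linear system is nothing more than a direct matrix-form encoding of the condition ``$\vec{y}$ is a strictly-positive, at-most-one combination of the generators in $G(\pi)$''. Once that is made explicit, the lemma reduces immediately to a chain of two previously established results: \autoref{lem:simple-basis-path} (which converts the existence of such coefficients into membership in $\Reach(\pi)$) and \autoref{thm:paths} (which identifies $\Reach(\pi)$ with $\relint(\zono(G(\pi))) \cup \adj(G(\pi))$).

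First I would split the candidate solution as $\vec{z} = (\vec{\alpha}, \vec{y}) \in \mathbb{Q}^{m} \times \mathbb{Q}^d$ and unpack the three matrix constraints separately. Reading off the definition of $\vec{A}$ and $\vec{a}$ in \eqref{eqn:a-paths}, the equation $\vec{A}\vec{z} = \vec{a}$ says exactly that $\sum_{i=1}^m \alpha_i \vec{g_i} - \vec{y} = \vec{0}$, i.e.\ $\vec{y} = \sum_{i=1}^m \alpha_i \vec{g_i}$. Reading off $\vec{B}$ and $\vec{b}$ in \eqref{eqn:b-paths}, the inequality $\vec{B}\vec{z} \le \vec{b}$ reduces to $\alpha_i \le 1$ for each $1 \le i \le m$, because the last $d$ columns of $\vec{B}$ are zero and its first $m$ columns form the identity. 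Finally, reading off $\vec{C}$ and $\vec{c}$ in \eqref{eqn:c-paths}, the strict inequality $\vec{C}\vec{z} < \vec{c}$ becomes $-\alpha_i < 0$, i.e.\ $\alpha_i > 0$, for each $1 \le i \le m$.

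Combining the three unpacked conditions, a solution $(\vec{\alpha},\vec{y})$ exists if and only if there are coefficients $\alpha_1,\dots,\alpha_m \in \mathbb{Q}$ with $0 < \alpha_i \le 1$ and $\vec{y} = \sum_{i=1}^m \alpha_i \vec{g_i}$. By \autoref{lem:simple-basis-path} applied to the generator set $G(\pi) = \{\vec{g_1},\dots,\vec{g_m}\}$ associated with the path $\pi_0 \pi_1 \dots \pi_n$, this is equivalent to $\vec{y} \in \Reach(\pi_0 \pi_1 \dots \pi_n)$. \autoref{thm:paths} then rewrites this reachability set as $\relint(\zono(G(\pi))) \cup \adj(G(\pi))$, yielding the desired equivalence.

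There is really no substantial obstacle here: the proof is essentially a bookkeeping exercise verifying that the block structure of $\vec{A}$, $\vec{B}$, $\vec{C}$ encodes precisely the three conditions ``$\vec{y}$ is a linear combination of the $\vec{g_i}$'', ``$\alpha_i \le 1$'' and ``$\alpha_i > 0$''. The only mildly delicate point is to be careful that the coefficients are required to be strictly positive (hence the use of strict inequalities in the $\vec{C}\vec{z} < \vec{c}$ block, which matches exactly the strict-positivity hypothesis in \autoref{lem:simple-basis-path}), so that the resulting system is a genuine linear program with strict inequalities rather than a weak LP — this is what will later let us invoke polynomial-time feasibility for the complexity bound.
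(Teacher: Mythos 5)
Your proof is correct and takes essentially the same approach as the paper, which also derives the lemma immediately from \autoref{thm:paths}, \autoref{lem:simple-basis-path}, and the direct reading of the block structure of $\vec{A}$, $\vec{B}$, $\vec{C}$. If anything, you are slightly more careful than the paper's one-sentence justification, which elides the $\alpha_i \le 1$ constraint coming from $\vec{B}\vec{z}\le\vec{b}$ even though it is needed to invoke \autoref{lem:simple-basis-path}.
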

This follows immediately from \autoref{thm:paths}, \autoref{lem:simple-basis-path}, and the fact that, by construction, the system has a solution if and only if there exists $(\alpha_1,\dots,\alpha_m) \in \mathbb{Q}^m_{> 0}$ such that $\vec{y} = \sum_{i=1}^m \alpha_i \vec{g_i}$.

\subsubsection{Encoding any cone.}
Let $1 \leq i \leq n$ and $G(\chi_i) = \{\vec{g_1},\dots,\vec{g_m}\}$. We define the matrix $\vec{A} \in \mathbb{Q}^{d \times (m+d)}$ and vector $\vec{a}\in \mathbb{Q}^d$ as in \autoref{eqn:a-paths}; and $\vec{C} \in \mathbb{Q}^{m \times (m+d)}$ and $\vec{c} \in \mathbb{Q}^d$ as in \autoref{eqn:c-paths}. Consider the following system of CSH clauses:
\begin{equation}
\label{eq:CSH_cones}
\textstyle
    (\vec{A} \vec{z} = \vec{a}) \land
    \left((\vec{C}\vec{z} = \vec{c}) \lor \bigwedge_{i=1}^m \vec{z_i} > 0\right) 
\end{equation}
   
It is easy to see that the system has a solution $(\vec{\alpha}, \vec{y}) \in \mathbb{Q}^{m + d}$ if and only if $\vec{y} \in \relint(\cone(G(\chi_i))) \cup \{0\}$, $\vec{\alpha}\in \mathbb{Q}^m$ is the coefficient vector for the cone generator, and $\vec{y}$ is the target vector (cf. \autoref{thm:cycles} and \autoref{lem:simple-basis}). Unfortunately, such a solution need not have $\vec{y}\in \mathbb{Q}_{\ge 0}^d$, so we cannot use the polynomial-time algorithm of \cite{blondin2017logics,blondin20}, which requires quantification over nonnegatives.
Note that we can assume $\vec{\alpha}\in \mathbb{Q}_{\ge 0}^m$, as this is enforced by the system.

To circumvent this, we encode $\vec{y}$ as a difference vector $\vec{y}=\vec{r}-\vec{s}$, so that $\vec{r},\vec{s}\in \mathbb{Q}_{\ge 0}^m$, as follows. We modify the matrix $\vec{A}$ to a matrix $\vec{A'} = 
    \begin{pmatrix}
       \vec{A} & \vec{I}
    \end{pmatrix}\in \mathbb{Q}^{d\times (m+2d)}$
where $\vec{I}$ is the $d\times d$ identity matrix. We also define $\vec{C'}=\begin{pmatrix}
       \vec{C} & \vec{0}\ldots \vec{0}
    \end{pmatrix} \in \mathbb{Q}^{d\times (m+2d)}$.
\begin{lemma}\label{lem:enc-cones}
    The following system of CSH clauses:
    \begin{equation}
    \label{eq:CSH_cones_pos}
    \textstyle
    (\vec{A'} \vec{z} = \vec{a}) \land
    \left((\vec{C'}\vec{z} = \vec{c}) \lor \bigwedge_{i=1}^m \vec{z_i} > 0\right)
    \end{equation}
    has a solution $(\vec{\alpha},\vec{r},\vec{t})\in \mathbb{Q}_{\ge 0}^{m+2d}$ if and only if $\vec{r}-\vec{t}\in \relint(\cone(G(\chi_i))) \cup \{\vec0\}$.
\end{lemma}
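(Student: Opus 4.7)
The plan is to reduce both directions of the lemma to \autoref{thm:cycles} and \autoref{lem:simple-basis} (exactly as was done informally for \eqref{eq:CSH_cones}) and handle the one genuinely new ingredient: the sign-splitting encoding of the sign-unrestricted target vector as $\vec y = \vec r - \vec t$ with $\vec r,\vec t \in \mathbb{Q}_{\ge 0}^d$, which is needed because the CSH feasibility algorithm of~\cite{blondin2017logics,blondin20} is confined to nonnegative variables. Writing $\vec z = (\vec\alpha,\vec r,\vec t)\in\mathbb{Q}_{\ge 0}^{m+2d}$ and unfolding the block structure, the equation $\vec A'\vec z = \vec a$ reads $\sum_{j=1}^m \alpha_j \vec g_j - \vec r + \vec t = \vec 0$, i.e.\ $\vec r - \vec t = \sum_j \alpha_j \vec g_j$; the equation $\vec C'\vec z = \vec c$ simplifies to $-\vec\alpha = \vec 0$ since the last $d$ columns of $\vec C'$ are zero; and the conjunction $\bigwedge_{i=1}^m \vec z_i > 0$ reads $\alpha_j > 0$ for every $1 \leq j \leq m$.

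For the $(\Rightarrow)$ direction, given such a solution I would set $\vec y := \vec r - \vec t$. If the first disjunct holds then $\vec\alpha = \vec 0$ and hence $\vec y = \vec 0$. Otherwise every $\alpha_j$ is strictly positive and $\vec y = \sum_j \alpha_j \vec g_j$, so \autoref{lem:simple-basis} places $\vec y$ in $\Reach(\chi_i^k)$ for some $k \geq 1$, and \autoref{thm:cycles} then places $\vec y$ in $\relint(\cone(G(\chi_i)))$. For the $(\Leftarrow)$ direction, given $\vec y \in \relint(\cone(G(\chi_i))) \cup \{\vec 0\}$, I would decompose $\vec y$ componentwise into its positive and negative parts $\vec r,\vec t \in \mathbb{Q}_{\ge 0}^d$ so that $\vec y = \vec r - \vec t$. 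When $\vec y = \vec 0$, take also $\vec\alpha = \vec 0$; then $\vec A'\vec z = \vec 0$ and $\vec C'\vec z = \vec 0$, so the first disjunct is satisfied. When $\vec y \neq \vec 0$, then $\vec y \in \relint(\cone(G(\chi_i))) \subseteq \Reach(\chi_i^*)$ by \autoref{thm:cycles}, and \autoref{lem:simple-basis} yields $\alpha_1,\dots,\alpha_m > 0$ with $\vec y = \sum_j \alpha_j \vec g_j$, satisfying the second disjunct.

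No step presents a real difficulty; the argument is essentially bookkeeping once the sign-splitting trick is identified. The only subtlety to flag is that the degenerate case $\vec y = \vec 0$ must be routed through the first disjunct rather than the second, because \autoref{lem:simple-basis} requires strictly positive coefficients and cannot in general represent $\vec 0$. The design of $\vec C'$ is exactly what makes this work cleanly: the trailing zero columns ensure that the first disjunct imposes a constraint purely on $\vec\alpha$, leaving $\vec r$ and $\vec t$ free (so one can take $\vec r = \vec t = \vec 0$) whenever $\vec\alpha = \vec 0$ is chosen.
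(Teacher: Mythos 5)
Your proposal is correct and follows essentially the same route as the paper: the paper reduces the lemma to the already-asserted correctness of the system in \autoref{eq:CSH_cones} (itself justified via \autoref{thm:cycles} and \autoref{lem:simple-basis}) together with the observation that $\vec{r}=\max\{\vec{y},\vec{0}\}$, $\vec{t}=-\min\{\vec{y},\vec{0}\}$ translates solutions back and forth, which is exactly your sign-splitting bookkeeping. Your explicit handling of the $\vec{y}=\vec{0}$ case through the first disjunct is a detail the paper leaves implicit, but it matches the intended argument.
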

The lemma follows by observing that every solution $(\vec{\alpha},\vec{y})$ for \autoref{eq:CSH_cones} induces a solution $(\vec{\alpha},\vec{r},\vec{t})$ for \autoref{eq:CSH_cones_pos} by setting e.g. $\vec{r}=\max\{\vec{y},\vec{0}\}$ and $\vec{t}=-\min\{\vec{y},\vec{0}\}$. Conversely, every solution $(\vec{\alpha},\vec{r},\vec{t})$ induces the solution $(\vec{\alpha},\vec{r}-\vec{t})$.

\begin{proof}[of \autoref{thm:lpsreachp}]
The result follows from the fact that \autoref{lem:reachissum} can be made effective by encoding it into a master system of CSH clauses conjoining the subsystems for the zonotope and all cycles, and using the fact that solutions for the system in~\autoref{lem:enc-path} can be assumed to be positive using the same methods by which we obtained \autoref{lem:enc-cones}.
\end{proof}

To conclude this section we remark that for the case of fixed dimension $d=2$ there is a simpler algorithm than the one described above.
\begin{remark}[Linear inequalities suffice in dimension $2$]
    It follows from \autoref{cor:cycles} that we can classify cones as trivial and nontrivial. Observe that the Minkowski sum of two trivial cones is trivial, and similarly for nontrivial cones. This means that \autoref{thm:geometry-lps} gives us the reachability set as a sum of a zonotope and two cones. Using our encoding of a zonotope from \autoref{lem:enc-path}, that of \autoref{eq:CSH_cones} for nontrivial cones, and its natural relaxation for trivial cones, we obtain a disjunction of a constant number of linear programs. 
    This yields a simpler polynomial-time algorithm as it avoids the use of systems of CSH clauses.
\end{remark}

\section{The Complexity of $\mathbb{Q}_{\geq 0}$VASS Reachability}\label{sec:qposreach}
We now use results from previous sections to give an \NP{} decision procedure for the reachability problem for $\mathbb{Q}_{\geq 0}$VASS.
\begin{restatable}{theorem}{pronpposreach}
    For every $d \in \mathbb{N}, d \geq 1$, given a $\mathbb{Q}_{\geq 0}$VASS of dimension $d$, and configurations $p(\vec{x})$ and $q(\vec{y})$, determining whether $p(\vec{x}) \xrightarrow{*} q(\vec{y})$ is in \NP.
\end{restatable}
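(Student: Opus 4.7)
The plan is to mirror the structure of the proof of \autoref{thm:QVASS_reach_inNP}: guess a linear path scheme of polynomial size and then verify in polynomial time that the target configuration is reachable via this scheme while keeping counters nonnegative throughout. As in the unrestricted case, I would first reduce to $\vec{x} = \vec{0}$ (translating by $\vec{x}$ preserves nonnegativity of the run if and only if no counter drops below $-x_i$ along the translated run, so we must keep the endpoints after translation and argue at the level of the LPS itself). The work then splits into (i) a short-LPS result stating that whenever $p(\vec{0}) \xrightarrow{*} q(\vec{y})$ in the nonnegative semantics, some LPS $\sigma$ of size polynomial in $|T|$ with dependence on $d$ only in the exponent has a path $\pi \in \sigma$ realizing the nonnegative run (this is the announced \autoref{prop:LPS-nonneg}); and (ii) a polynomial-time decision procedure for LPS-reachability in the $\mathbb{Q}_{\geq 0}$VASS setting (the announced \autoref{thm:nonneg-lps-ptime}).

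For (i), I would follow the strategy of \autoref{prop:lp_scheme} while paying attention to the nonnegativity constraint at intermediate configurations. The cone/zonotope decomposition from \autoref{thm:geometry-lps} remains valid as a set-theoretic statement, and \autoref{lemma:cyclerepresentation} still lets us discard cycles whose support and linear span are already covered. The subtlety is that removing or replacing a cycle could in principle push some intermediate counter value below $0$. To avoid this, I would keep as the skeleton a sub-path of the original nonnegative witness and only prune redundant simple cycles attached to it. Since cycles can be fired with arbitrarily small scalar, the kept cycles' contributions may be made as small as we like on any prefix, so nonnegativity at every skeleton configuration is preserved; the cone inclusions guaranteed by \autoref{lemma:cyclerepresentation} then let us absorb the pruned cycles' total contribution into the remaining ones without altering the endpoint.

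For (ii), I would extend the CSH-clause encoding from \autoref{thm:lpsreachp} with a polynomial number of extra linear inequalities enforcing nonnegativity at every skeleton configuration, i.e.\ the counter vector at the beginning and end of each $\pi_i$ (which are the points at which cycles are entered and exited). For cycles themselves it suffices to impose nonnegativity at the entry configuration: by \autoref{thm:cycles} the reachable cone interior can be attained by chopping the total contribution into arbitrarily small increments, keeping intermediate counter values arbitrarily close to the entry vector. Inside each $\pi_i$, nonnegativity at every prefix is a linear constraint on the zonotope coefficients, and these are compatible with the trick used in \autoref{lem:enc-cones} to recode a possibly negative target vector as a difference $\vec{r}-\vec{t}$ of nonnegative ones. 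The resulting system is still a polynomial-size conjunction of CSH clauses over $\mathbb{Q}_{\geq 0}$, hence decidable in polynomial time by \cite{blondin2017logics,blondin20}.

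The hard part will be step (i): unlike in the $\mathbb{Q}$VASS case, where one can freely replace a witness run by any LPS whose support it covers, here one must ensure the resulting LPS admits a nonnegative lifting. I expect this to work by choosing the LPS skeleton as a sub-path of the original nonnegative run and bounding the number of distinct cycles needed using the same span/support pigeonhole as in \autoref{prop:lp_scheme}; but making the ``absorb pruned cycle contributions into remaining ones while preserving nonnegativity along every prefix'' argument precise is the technical heart of the proof and must interact carefully with the partition into linear spans from \autoref{thm:geometry-lps}.
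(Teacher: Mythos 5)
Your high-level plan (short LPS witness plus a poly-time CSH/LP encoding of LPS reachability) matches the paper's architecture, but both halves of the plan contain the same genuine gap: nonnegativity on cycles is not governed solely by the entry counter value, and ``chop the cycle into tiny increments'' does not fix this. Concretely, take a cycle $\chi$ whose two transitions are labelled $(-1,0)$ and $(1,0)$ and enter it with counter $(0,0)$: the first transition pushes coordinate $1$ below zero no matter how small its coefficient is. So $\Reach^{(0,0)}_{\ge 0}(\chi^*)=\{(0,0)\}$, whereas $\Reach^{(0,0)}(\chi^*)=\relint(\cone\{(-1,0),(1,0)\})\cup\{\vec 0\}$ is the entire $x$-axis. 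Thus ``impose nonnegativity at the entry configuration'' is false as a characterization, and the same issue undermines the ``absorb the pruned cycles' contributions into remaining ones with tiny per-iteration coefficients'' argument in your step (i): scaling iterations of a cycle does not protect a coordinate that is at value $0$ on entry when some prefix of the cycle decrements it.

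What is missing is the structural ``wiggle room'' criterion that the paper imports from Blondin and Haase's Proposition~4.5 (\autoref{lemma45}): a cycle from $q(\vec x)$ to $q(\vec y)$ can be taken nonnegatively iff besides unconstrained reachability one also has a nonnegative path $\pi_1$ from $q(\vec x)$ and a nonnegative path $\pi_3$ into $q(\vec y)$, both with the \emph{same support} as the cycle. This is exactly the condition your counterexample violates (there is no nonnegative way to take the $(-1,0)$ transition from $(0,0)$). The paper's proof of the theorem hinges on this: \autoref{lem:LPS-cover} bounds the $\pi_1,\pi_3$ components (via a scaling argument that is more delicate than ``iterate with small coefficients''), \autoref{lem:LPS-cycle} and \autoref{lem:flatten-lps} bound the $\pi_2^*$ component, and in \autoref{thm:nonneg-lps-ptime} the CSH encoding explicitly adds subsystems for conditions~\ref{itm:45itm2} and~\ref{itm:45itm3} of \autoref{lemma45} with $\pi_2=\pi_3=\chi$, rather than just constraining the entry. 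Your proposal needs to incorporate this characterization; without it, step (ii) is wrong and step (i) is at best unproven.
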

The structure of our argument is similar to the one presented in \autoref{sec:qvass-reach}: we first prove that short LPS suffice and then we prove reachability is tractable for LPS. 
The first part is considerably more complicated for $\mathbb{Q}_{\geq 0}$VASS and it is based on the fact that short LPS exist for $\mathbb{Q}$VASS. For this reason, we need additional notation: 
We write $p(\vec{x}) \xrightarrow{*} q(\vec{y})$ to denote that $q(\vec{y})$ is reachable from $p(\vec{x})$ in a $\mathbb{Q}_{\geq 0}$VASS and instead use $p(\vec{x}) \xdashrightarrow{*} q(\vec{y})$ to denote that $q(\vec{y})$ is reachable from $p(\vec{x})$ with respect to $\mathbb{Q}$VASS semantics (i.e. when allowing negative counter values). Similarly, we write $\Reach^{\vec{x}}(\cdot)$ for reachability sets w.r.t. $\mathbb{Q}$VASS and $\Reach^{\vec{x}}_{\geq 0}(\cdot)$ for that w.r.t. $\mathbb{Q}_{\geq 0}$VASS.

We make repeated use of the following result by Blondin and Haase.
\begin{lemma}[From {\cite[Proposition 4.5]{blondin2017logics}}]\label{lemma45}
    There exists a path $\pi$ such that $q(\vec{x}) \xrightarrow{\pi}
    q(\vec{y})$ if and only if there exist paths $\pi_1,\pi_2,\pi_3$ such
    that:
    \begin{enumerate}
      \item $q(\vec{x}) \xdashrightarrow{\pi_2} q(\vec{y})$; \label{itm:45itm1}
        \item $q(\vec{x}) \xrightarrow{\pi_1} q(\vec{x'})$, for some $\vec{x'} \in \mathbb{Q}_{\geq 0}^d$; \label{itm:45itm2}
        \item $q(\vec{y'}) \xrightarrow{\pi_3} q(\vec{y})$, for some $\vec{y'} \in \mathbb{Q}_{\geq 0}^d$; and \label{itm:45itm3}
        \item $\llbracket \pi \rrbracket = \llbracket \pi_1 \rrbracket =
          \llbracket \pi_2 \rrbracket = \llbracket \pi_3 \rrbracket$.
          \label{itm:45itm4}
    \end{enumerate}
\end{lemma}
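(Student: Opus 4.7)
My plan is to prove each direction of the biconditional separately.

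For the forward direction $(\Rightarrow)$, the implication is immediate: if a path $\pi$ witnesses $q(\vec{x}) \xrightarrow{\pi} q(\vec{y})$ in the $\mathbb{Q}_{\geq 0}$VASS, then setting $\pi_1 = \pi_2 = \pi_3 = \pi$ satisfies every condition. Item~\ref{itm:45itm1} holds because every $\mathbb{Q}_{\geq 0}$ run is, a fortiori, a $\mathbb{Q}$ run; items~\ref{itm:45itm2} and~\ref{itm:45itm3} hold with $\vec{x'} = \vec{y}$ and $\vec{y'} = \vec{x}$ (the endpoints of $\pi$ itself); and the support condition~\ref{itm:45itm4} is trivial.

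For the backward direction $(\Leftarrow)$, the plan is to build $\pi$ as a concatenation of (suitably many copies of) $\pi_1$, $\pi_2$, and $\pi_3$ as a sequence of transitions; this is a valid path because each of the three is a loop at $q$, so they can be composed freely. The task then reduces to choosing coefficients $\alpha_i \in (0, 1]$ for each transition so that (a) every prefix of the scaled update sequence, added to $\vec{x}$, remains in $\mathbb{Q}_{\geq 0}^d$, and (b) the total net update equals $\vec{y} - \vec{x}$, placing us exactly at $q(\vec{y})$ at the end.

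The key observation is a scaling-based continuity argument. If the original run of $\pi_1$ from $\vec{x}$ to $\vec{x'}$ uses coefficients $\beta_i$, then using $\varepsilon \beta_i$ for any $\varepsilon \in (0,1]$ yields a $\mathbb{Q}_{\geq 0}$ run, because every intermediate value lies on the segment between $\vec{x}$ and the corresponding (nonnegative) intermediate value of the original $\pi_1$ run. An analogous property holds for $\pi_3$. Scaling $\pi_2$'s coefficients by $\varepsilon$ bounds the deviation from the current configuration by $\varepsilon$ times a constant depending only on $\pi_2$. The construction will thus traverse a compound path such as $\pi_1^N \pi_2 \pi_3^N$ for suitably large $N$: first use $\pi_1^N$ to build up a strictly positive ``cushion'' in every counter that appears in the support, then execute $\pi_2$ with a small scaling so that its excursion does not cross zero, and finally use $\pi_3^N$ to cancel the excess and land on $\vec{y}$.

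The main obstacle will be to simultaneously satisfy the net-effect equation of the form $s_1(\vec{x'}-\vec{x}) + s_2(\vec{y}-\vec{x}) + s_3(\vec{y}-\vec{y'}) = \vec{y}-\vec{x}$ (summed appropriately over the $N$ copies) and keep all intermediate counter values nonnegative: generically the $\pi_2$ contribution must be close to $\vec{y}-\vec{x}$ in magnitude, which conflicts with making $\pi_2$'s excursion small enough to avoid negative values. I expect to resolve this by exploiting the support condition~\ref{itm:45itm4} crucially: since all three paths use exactly the same transitions, the counters raised by $\pi_1^N$ on each coordinate are precisely those that $\pi_2$ may temporarily decrease, so a large enough $N$ guarantees a cushion matched to the excursion of $\pi_2$. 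A Carathéodory-style counting argument on transitions can then be used to bound $N$.
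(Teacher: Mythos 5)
This lemma is not proved in the paper at all --- it is imported verbatim from Blondin and Haase (Proposition 4.5 of~\cite{blondin2017logics}), so there is no in-paper argument to compare against. Judged as a self-contained proof, your forward direction ($\Rightarrow$) is correct and complete: taking $\pi_1=\pi_2=\pi_3=\pi$ works. The backward direction, however, which is the entire content of the lemma, is only a plan, and it has two genuine gaps. First, the claim that ``the counters raised by $\pi_1^N$ on each coordinate are precisely those that $\pi_2$ may temporarily decrease'' does not follow from the support condition: $\pi_1$ and $\pi_2$ use the \emph{same} transitions, so a transition that decreases counter $j$ in $\pi_2$ also decreases it in $\pi_1$. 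What the existence of the nonnegative run over $\pi_1$ actually gives is that such a counter is strictly positive at \emph{some intermediate point} of that run; converting this into ``strictly positive at the end'' requires a suffix-rescaling argument (of the kind the paper uses in the proof of \autoref{lem:LPS-cover}), which you do not carry out.

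Second, and more seriously, the obstacle you yourself flag --- reconciling the exact net-effect equation with the need to scale $\pi_2$ down --- is not resolved by the proposed $\pi_1^N\pi_2\pi_3^N$ ansatz. If $\pi_2$ is scaled by $\mu<1$ to keep its excursion inside the cushion, its contribution becomes $\mu(\vec{y}-\vec{x})$, and the missing $(1-\mu)(\vec{y}-\vec{x})$ must then be supplied by nonnegative multiples of the effects $\vec{x'}-\vec{x}$ and $\vec{y}-\vec{y'}$ of $\pi_1$ and $\pi_3$; there is no reason such a decomposition exists, and a ``Carath\'eodory-style counting argument'' does not address this. The actual proof in Fraca--Haddad/Blondin--Haase circumvents this by an induction on the support $T'$ (or, equivalently, by re-deriving a \emph{different} $\mathbb{Q}$-run between the shifted endpoints $\vec{x}+\mu(\vec{x'}-\vec{x})$ and $\vec{y}-\mu(\vec{y}-\vec{y'})$ with the same support), which is precisely the step missing from your sketch. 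As written, the $\Leftarrow$ direction is not established.
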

Intuitively, this means that $q(\vec{x}) \xrightarrow{\pi}
    q(\vec{y})$ if the following conditions hold: first, we obviously need $q(\vec{x}) \xdashrightarrow{\pi_2} q(\vec{y})$, and second, we need $q(\vec{x})$ to allow some ``wiggle room'' using the same transitions as $\pi$ and while keeping the counters nonnegative. Similarly, there should be wiggle room to reach $q(\vec{y})$ with nonnegative counters. 
The lemma also shows that these conditions are necessary.

\subsection{Short linear path schemes suffice}
We start by proving that short LPS suffice. 
\begin{theorem} \label{prop:LPS-nonneg}
  Let $\pi$ be an arbitrary path such that $p(\vec{x}) \xrightarrow{\pi}q(\vec{y})$.
  Then, there exists a linear path scheme
  $\sigma =  \pi_0 \chi_1^* \pi_1 \dots \chi_n^* \pi_n$ such that:
  \begin{itemize}
    \item $n \leq |Q|$,
    \item $\pi_i$ is a simple path for all $0 \leq i \leq n$,
    \item $|\chi_i| \leq 4|Q|(|T| + d + 2)(d|T|^{d+1}+1)$ for all $1 \leq i \leq n$, and
    \item $\vec{y} \in \Reach^{\vec{x}}_{\geq 0}(\sigma)$.
  \end{itemize}
\end{theorem}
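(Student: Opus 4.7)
The plan is to combine Lemma~\ref{lemma45} with Theorem~\ref{prop:lp_scheme} via a structural decomposition of $\pi$ into a simple skeleton plus cycles. First, iteratively extract cycles from $\pi$, splitting it into a simple skeleton path from $p$ to $q$ of length at most $|Q|-1$ together with cycles $\theta_0, \ldots, \theta_n$ (with $n \leq |Q|$) based at the successive skeleton states $r_0 = p, r_1, \ldots, r_n = q$. This fixes the outer shape of the target LPS: the $\pi_i$ are the simple skeleton segments, and each $\chi_i$ will be constructed from~$\theta_i$.

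Next, for each $\theta_i$ apply Lemma~\ref{lemma45} to obtain wiggle-up, work, and wiggle-down paths $\pi_1^{(i)}, \pi_2^{(i)}, \pi_3^{(i)}$, all sharing the support of $\theta_i$, where $\pi_2^{(i)}$ witnesses the same cycle in $\mathbb{Q}$VASS semantics. Then apply Theorem~\ref{prop:lp_scheme} to $\pi_2^{(i)}$ to obtain a short $\mathbb{Q}$VASS LPS $\sigma_i'$ with at most $d|T|^{d+1}$ simple cycles and simple paths, same support, and $\Reach(\pi_2^{(i)}) \subseteq \Reach(\sigma_i')$. The desired large cycle $\chi_i$ is then formed by concatenating $\pi_1^{(i)}$, a single linearized traversal of $\sigma_i'$'s skeleton (each sub-cycle visited exactly once), and $\pi_3^{(i)}$; this is a cycle at $r_i$ by construction. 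By \autoref{thm:geometry-lps}, $\Reach(\chi_i^*)$ then contains the required counter update of $\theta_i$: choose coefficients so that the sub-cycles of $\sigma_i'$ are pumped enough to realise the update, while $\pi_1^{(i)}$ and $\pi_3^{(i)}$ are scaled small enough to provide the positive slack guaranteed by Lemma~\ref{lemma45}, keeping all counters nonnegative along the run.

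The length bound follows by summing simple-piece lengths. Simple cycles and simple paths have length at most $|Q|$; the LPS $\sigma_i'$ contains at most $2(d|T|^{d+1}+1)$ such pieces, giving $|\sigma_i'| \leq 2|Q|(d|T|^{d+1}+1)$; and $\pi_1^{(i)}, \pi_3^{(i)}$ admit bounds of the form $O(|Q|(|T|+d+2))$ from the short-witness analysis underlying Lemma~\ref{lemma45}. Combined, these yield the claimed $|\chi_i| \leq 4|Q|(|T|+d+2)(d|T|^{d+1}+1)$.

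The main obstacle I expect to be the \emph{simultaneous} coefficient choice across the whole LPS $\sigma$: one must pick scalings for every transition of every $\pi_i$ and every traversal of every sub-piece of every $\chi_i$ so that the final counter value is exactly $\vec{y}$ and all intermediate counter values are nonnegative, including at the delicate junctions where the skeleton transitions meet the big cycles. The geometric results (in particular \autoref{thm:geometry-lps} and \autoref{lem:reachissum}) give a principled way to convert these local nonnegativity constraints into a single feasibility check by expressing $\Reach_{\ge 0}^{\vec{x}}(\sigma)$ as a Minkowski sum of a zonotope and cones; the technical heart is then checking that the wiggle-room coefficients can be made uniformly small relative to the pumping factors without destroying the equality $\vec{y} = \vec{x} + \text{(net update)}$.
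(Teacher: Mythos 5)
Your overall architecture matches the paper's: decompose $\pi$ into a simple skeleton interleaved with at most $|Q|$ cycles, handle each cycle via Lemma~\ref{lemma45} together with the short-LPS result for $\mathbb{Q}$VASS (\autoref{prop:lp_scheme}), and merge the resulting pieces into one big starred cycle per skeleton state; the length accounting is also essentially right. However, there is a genuine gap at the step where you justify that the single flattened cycle $\chi_i = \pi_1^{(i)}\,\overline{\sigma_i'}\,\pi_3^{(i)}$, taken under a star, still reaches the required value \emph{with nonnegative counters}. You appeal to \autoref{thm:geometry-lps} and \autoref{lem:reachissum}, but those results characterize $\Reach$ under $\mathbb{Q}$VASS semantics (negative counters allowed). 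The set $\Reach^{\vec{x}}_{\geq 0}(\sigma)$ is \emph{not} a Minkowski sum of a zonotope and cones: intermediate nonnegativity constraints couple the pieces, so \autoref{lem:reachissum} fails for $\Reach_{\geq 0}$. Consequently, ``pump the sub-cycles enough and scale $\pi_1^{(i)},\pi_3^{(i)}$ small'' does not follow from the geometry results you cite.

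What is actually needed --- and what the paper supplies as \autoref{lem:flatten-lps} --- is a dedicated argument that $\Reach^{\vec{z}}_{\geq 0}(\sigma^*) \subseteq \Reach^{\vec{z}}_{\geq 0}(\overline{\sigma}^*)$, i.e., that repeating an inner sub-cycle $M_i$ times with coefficient $\alpha_i$ inside one iteration can be simulated by several iterations of the flattened cycle. The paper proves this by halving all coefficients and observing that the drop (nadir) of a half-scaled run is half the original drop, so nonnegativity is preserved while the products $M_i\alpha_i$ are driven below $1$. Note also that your construction places $\pi_1^{(i)}$ and $\pi_3^{(i)}$ inside the starred cycle, so when $\chi_i$ is iterated they are traversed repeatedly; without the flattening lemma you have no guarantee that their repeated, scaled-down contributions can be arranged to sum to exactly the single contribution Lemma~\ref{lemma45} prescribes while every intermediate configuration stays nonnegative. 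Supplying \autoref{lem:flatten-lps} (or an equivalent halving argument) closes the gap; the rest of your proposal then goes through along the lines of the paper's \autoref{lem:LPS-cover} and \autoref{lem:LPS-cycle}.
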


The crux of our strategy lies
in getting an effective version of the ``only if'' direction of
\autoref{lemma45}. 
We establish intermediate results towards proving that $\pi_1$, $\pi_2^*$, and $\pi_3$ all have short LPS and that, together, they form a short
LPS as desired.
To begin, we argue that $\pi_1$ and $\pi_3$ admit short LPS.

\begin{lemma} \label{lem:LPS-cover}
  Let $\pi$ be an arbitrary path such that $p(\vec{x}) \xrightarrow{\pi}
  q(\vec{y})$. Then, there exists a linear path scheme
  $\sigma =  \pi_0 \chi_1^* \pi_1 \dots \chi_n^* \pi_n$ such that:
  \begin{itemize}
    \item $n \leq |T| + d$,
    \item $\pi_i$ is a simple path for all $0 \leq i \leq n$,
    \item $\chi_i$ is a simple cycle for all $1 \leq i \leq n$, and
    \item $p(\vec{x}) \xrightarrow{\tau} q(\vec{y'})$ for some
      $\vec{y'} \in \mathbb{Q}^d_{\geq 0}$ and $\tau \in \sigma$ such that
      $\llbracket \pi \rrbracket = \llbracket \tau \rrbracket$.
  \end{itemize}
\end{lemma}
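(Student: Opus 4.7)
\emph{Proof proposal.}
My plan is to construct $\sigma$ in three stages: (a) extract a simple-path skeleton from $p$ to $q$ inside the subgraph $H$ induced by $\llbracket\pi\rrbracket$; (b) cover the remaining transitions of $\llbracket\pi\rrbracket$ by at most $|T|$ simple cycles attached to the skeleton; and (c) adjoin at most $d$ further simple cycles to provide the ``priming'' required to execute the resulting LPS with nonnegative counters.

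For (a), since $\pi$ is a walk from $p$ to $q$ in $H$, there is a simple path $\hat\pi$ in $H$ from $p$ to $q$ of length at most $|Q|-1$; this will be the backbone, to be split into $\pi_0,\ldots,\pi_n$ at the anchor states of the cycles inserted later. For (b), for each $t\in\llbracket\pi\rrbracket\setminus\llbracket\hat\pi\rrbracket$ I would extract from $\pi$ a closed walk containing $t$ that shares a state with $\hat\pi$; such a walk exists because $\pi$ traverses $t$ and eventually reaches $q\in\hat\pi$. Decomposing this closed walk into simple cycles at repeated states yields a simple cycle through $t$ anchored on $\hat\pi$. Taking one such cycle per uncovered transition produces a set $\mathcal{C}_0$ of at most $|T|$ simple cycles satisfying $\llbracket\hat\pi\rrbracket\cup\bigcup_{\chi\in\mathcal{C}_0}\llbracket\chi\rrbracket=\llbracket\pi\rrbracket$.

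Stage (c) is the crux. Simply executing $\hat\pi$ may violate nonnegativity, because $\pi$ can rely on intermediate cycles to boost counters before critical decrements, and the cycles in $\mathcal{C}_0$ may not point in useful directions. I would add a set $\mathcal{C}_1$ of simple cycles extracted from $\pi$ and anchored on $\hat\pi$, whose generators jointly span the same linear subspace as the union of generators of all cycles used by $\pi$. Since that subspace lies in $\mathbb{Q}^d$, a Carath\'eodory-style argument bounds $|\mathcal{C}_1|\leq d$. By \autoref{thm:cycles} and \autoref{lemma:sumcones}, an appropriate positive scaling of the cycles in $\mathcal{C}_1$ reproduces the priming effect of the corresponding cycles in $\pi$, while the cycles in $\mathcal{C}_0$ can be taken with arbitrarily small positive scaling so as not to disturb nonnegativity. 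Setting $\mathcal{C}\coloneqq \mathcal{C}_0\cup\mathcal{C}_1$ and interleaving them with $\hat\pi$ at their anchor states yields $\sigma=\pi_0\chi_1^*\pi_1\cdots\chi_n^*\pi_n$ with $n=|\mathcal{C}|\leq|T|+d$; each $\pi_i$ is simple as a subpath of $\hat\pi$, each $\chi_i$ is simple by construction, and the run $\tau\in\sigma$ described above witnesses $p(\vec{x})\xrightarrow{\tau}q(\vec{y'})$ nonnegatively with $\llbracket\tau\rrbracket=\llbracket\pi\rrbracket$.

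The main obstacle is stage (c): the Carath\'eodory-style bound $|\mathcal{C}_1|\leq d$ must be realised by cycles anchored on $\hat\pi$, and the replay of the priming effect must use scalings in $(0,1]$. The freedom to pick any nonnegative target $\vec{y'}$ (rather than the specific $\vec{y}$) is essential here: it removes the endpoint constraint and lets the priming scalings be chosen independently of reaching $\vec{y}$ exactly, so that only the span of cycle effects along $\hat\pi$, and not a precise balance of them, needs to be realised.
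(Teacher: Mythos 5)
Your proposal takes a genuinely different route from the paper's: you try to \emph{construct} a short LPS from scratch (skeleton, covering cycles, priming cycles), whereas the paper takes an arbitrary valid decomposition of $\pi$ into simple paths and simple cycles, picks a minimal-length one, and \emph{shows it is already short}. Unfortunately, the constructive route has two genuine gaps.

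First, stage (b) does not go through. You insist that all cycles be anchored on a single fixed simple path $\hat\pi$, but that is both unachievable in general and unnecessary. Take $\pi = p\to a\to b\to c\to b\to a\to q$ and $\hat\pi = p\to a\to q$. The transition $(b,c)$ lies only on the simple cycle $b\to c\to b$, which does not meet $\hat\pi$; decomposing a closed walk containing $(b,c)$ and a state of $\hat\pi$ into simple cycles does not yield one cycle containing both $(b,c)$ and such a state. The lemma only requires each $\pi_i$ to be simple \emph{individually}; the concatenation $\pi_0\chi_1\pi_1\cdots$ is allowed to revisit states, so the backbone need not (and cannot always) be a single simple path. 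Dropping this requirement is exactly how the paper's minimal-decomposition argument stays flexible.

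Second, and more seriously, stage (c) is not a proof but a hope, as you acknowledge. A Carath\'eodory-style span bound controls the set of achievable \emph{total effects} of the chosen cycles, but nonnegative reachability is not a condition on total effects: it depends on the temporal \emph{ordering} of increments and decrements along the run, and on the dips \emph{inside} each cycle. Two cycles with the same effect vector can have wildly different nadirs, and a cycle that provides priming in $\pi$ may occur at a place in $\pi$ that your $\hat\pi$ skips entirely, so there is no anchor state at which to reinsert it before the decrements it was protecting. \autoref{thm:cycles} and \autoref{lemma:sumcones} are statements about $\mathbb{Q}$VASS reachability (no sign constraint) and do not control intermediate nonnegativity. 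The paper avoids this entirely: it starts from a \emph{concrete} nonnegative run $\rho$ lifted from the minimal LPS, identifies, for each counter $j$, the (at most one) cycle after which $j$ first becomes strictly positive, and then scales the suffix down by a tiny $\varepsilon$ so that every other cycle can be deleted without any counter going negative; a pigeonhole on transition supports then bounds the non-critical cycles by $|T|$. Keeping a witnessing run in hand throughout is what makes the ordering and the dips harmless, and is the key idea your construction is missing.
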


\begin{proof}
To prove the lemma, observe that for every path $\pi$ we can obtain an LPS $\sigma$ that satisfies all but the length requirement. Indeed, any decomposition of $\pi$ into simple paths and simple cycles satisfies the second and third constraints, and since $p(\vec{x}) \xrightarrow{\pi} q(\vec{y})$, then setting $\vec{y'}=\vec{y}$ and $\tau=\pi$ gives the third constraint. 
Now, assume $\sigma = \pi_0 \chi_1^* \dots \chi_n^* \pi_n$ is an LPS satisfying all but the length constraint, whose length $n$ is minimal. We show that $n\le |T|+d$. Consider a concrete run $\rho$ lifted from a path in $\sigma$, starting from $q_0(\vec{x})$. For every $1\le j\le d$, denote by $i_j$ the index of the transition in $\rho$ after which counter $j$ becomes strictly positive (or $\infty$ if there is no such prefix). E.g., if $\vec{x}=(4,0)$, and counter 2 becomes positive after the prefix of $\rho$ corresponding to $(q_0,q_1)(q_1,q_2)$, we set $i_1=0$ and $i_2=2$.

Observe that once counter $j$ becomes positive, we can scale down the suffix of $\rho$ after $i_j$ by $\varepsilon=(|\rho||M|)^{-1}$, where $|M|$ is the maximal absolute counter update appearing in the VASS. Then, the effect of the suffix cannot make counter $j$ become negative, even if we remove cycles in the suffix. Moreover, removing cycles before $j$ becomes positive will keep the value of $j$ at 0, hence we can also remove cycles in the prefix of $\rho$ before $i_j$ without causing counter $j$ to become negative. It follows that for each counter (separately), after the appropriate scaling, we can remove any cycle from $\rho$ while keeping a valid run, with the exception of the cycle on which counter $j$ becomes positive, if it exists.

Therefore, by applying this reasoning to each of the $d$ counters, we scale $\rho$ so that any cycle can be removed except at most $d$ cycles corresponding to the $i_j$ indices. Thus, if $\rho$ has more than $|T|+d$ cycles, we can remove cycles while maintaining $\llbracket \pi \rrbracket = \llbracket \rho \rrbracket$, which would mean $\sigma$ is not of minimal length.
\end{proof}


Below we show that $\pi_2^*$ also admits a short LPS. 

\begin{lemma} \label{lem:LPS-cycle}
  Let $\chi$ be an arbitrary cycle such that $p(\vec{x}) \xrightarrow{\chi} q(\vec{y})$.
  Then, there exists a linear path scheme
  $\sigma =  \pi_0 \chi_1^* \pi_1 \dots \chi_n^* \pi_n$ such that:
  \begin{itemize}
    \item $n \leq 2(|T| + d + 1) + 1$,
    \item $\pi_i$ is a simple path for all $0 \leq i \leq n$,
    \item $|\chi_i| \leq 2|Q|(d|T|^{d+1}+1)$ for all $1 \leq i \leq n$, and
    \item $\vec{y} \in \Reach^{\vec{x}}_{\geq 0}(\sigma)$.
  \end{itemize}
\end{lemma}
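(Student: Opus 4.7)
The strategy is to decompose $\chi$ via Lemma~\ref{lemma45}. Since $\chi$ is a cycle at $p = q$, this yields three cycles $\pi_1, \pi_2, \pi_3$ all based at $p$ and sharing the support $\llbracket \chi \rrbracket$: $\pi_1$ nonneg-reaches some $p(\vec{x'}) \ge \vec{0}$ from $p(\vec{x})$; $\pi_3$ nonneg-reaches $p(\vec{y})$ from some $p(\vec{y'}) \ge \vec{0}$; and $\pi_2$ reaches $p(\vec{y})$ from $p(\vec{x})$ in the $\mathbb{Q}$VASS semantics. I will replace each of these three pieces by a short sub-LPS and then splice them into a single LPS.

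For $\pi_1$, I apply Lemma~\ref{lem:LPS-cover} directly to obtain an LPS $\sigma_1$ with $n_1 \le |T|+d$ simple cycles and simple paths that nonneg-reaches some $p(\vec{x''}) \ge \vec{0}$ from $p(\vec{x})$. For $\pi_3$, since Lemma~\ref{lem:LPS-cover} only guarantees reaching \emph{some} nonnegative target rather than the specific $\vec{y}$, I use its natural time-reversed dual --- the scaling-and-cycle-removal argument of Lemma~\ref{lem:LPS-cover} applied right-to-left, tracking for each counter the suffix position after which it is strictly positive when reading the run backwards --- to obtain an LPS $\sigma_3$ with $n_3 \le |T|+d$ simple cycles and simple paths that nonneg-reaches $p(\vec{y})$ from some $p(\vec{y''}) \ge \vec{0}$.

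For $\pi_2$, I apply Theorem~\ref{prop:lp_scheme} (the short-LPS result in the $\mathbb{Q}$VASS setting) to obtain $\sigma_2 = \pi^2_0 (\chi^2_1)^* \pi^2_1 \dots (\chi^2_{n_2})^* \pi^2_{n_2}$ with $n_2 \le d|T|^{d+1}$ simple cycles and simple paths. Because $\pi_2$ is a cycle at $p$, the states match up so that I can flatten $\sigma_2$ by traversing each inner cycle exactly once, forming a single large cycle $\hat\chi \coloneqq \pi^2_0 \chi^2_1 \pi^2_1 \chi^2_2 \dots \chi^2_{n_2} \pi^2_{n_2}$ at $p$ of length at most $(2 n_2 + 1)|Q| \le 2|Q|(d|T|^{d+1}+1)$.

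Finally, I take $\sigma$ to be $\sigma_1 \hat\chi^* \sigma_3$, regarded as a single LPS whose cycles are the $n_1$ cycles of $\sigma_1$, then $\hat\chi$, then the $n_3$ cycles of $\sigma_3$, and whose paths are the simple paths of $\sigma_1$ followed by those of $\sigma_3$. The cycle count is $n_1 + 1 + n_3 \le 2(|T|+d)+1 \le 2(|T|+d+1)+1$, all paths are simple by construction, and the cycle-length bound is met since $\hat\chi$ is the only non-simple cycle. The main obstacle is verifying $\vec{y} \in \Reach^{\vec{x}}_{\ge 0}(\sigma)$: I plan to construct a nonneg run that uses $\sigma_1$ to reach a suitable $p(\vec{x''}) \ge \vec{0}$, then iterates $\hat\chi$ with small positive coefficients so as to shift the counters into the $p(\vec{y''})$ from which $\sigma_3$ is guaranteed to nonneg-reach $\vec{y}$, then follows $\sigma_3$ to $p(\vec{y})$. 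Theorem~\ref{thm:cycles} tells us that the continuous reachability set of $\hat\chi^*$ is the relative interior of the cone generated by $\hat\chi$'s labels, which contains the shift from $\vec{x}$ to $\vec{y}$ realized by $\pi_2$; the ``wiggle room'' afforded by Lemma~\ref{lemma45} is precisely what keeps the middle segment nonnegative, and the subtlest point will be to coordinate the scalings of $\hat\chi$'s iterations with the buffer provided by $\vec{x''}$ and $\vec{y''}$.
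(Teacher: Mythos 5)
Your decomposition-and-reassembly strategy mirrors the paper's: use Lemma~\ref{lemma45} to split $\chi$ into $\pi_1,\pi_2,\pi_3$, compress $\pi_1,\pi_3$ via Lemma~\ref{lem:LPS-cover} (and its time-reversed dual --- a detail the paper's proof actually elides, and which you identify correctly), compress $\pi_2$ via Theorem~\ref{prop:lp_scheme} and flatten it into a single long cycle $\hat\chi = \overline{\sigma_2}$, then splice. Up to that point the proposal matches.

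The gap is in the final verification that $\vec{y} \in \Reach^{\vec{x}}_{\geq 0}(\sigma_1 \hat\chi^* \sigma_3)$. Your sketch picks concrete $\vec{x''}$ (reachable via $\sigma_1$) and $\vec{y''}$ (from which $\sigma_3$ reaches $\vec{y}$) and hopes that iterating $\hat\chi$ takes $\vec{x''}$ to $\vec{y''}$ while staying nonnegative. But $\vec{x''}, \vec{y''}$ are specific points, and nothing forces the displacement $\vec{y''}-\vec{x''}$ to lie in the relative interior of the cone generated by $\hat\chi$'s labels --- Theorem~\ref{thm:cycles} only puts the \emph{original} displacement $\vec{y}-\vec{x}$ (realized by $\pi_2$) there, and in any case Theorem~\ref{thm:cycles} only describes the $\mathbb{Q}$VASS reachability set, so it says nothing about keeping counters nonnegative along the way. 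The paper avoids this entirely by \emph{re-invoking} the ``if'' direction of Lemma~\ref{lemma45} with the replacement pieces $\tau_1\in\sigma_1$, $\pi_2$, $\tau_3\in\sigma_3$ (which still satisfy items 1--4), obtaining $\vec{y}\in\Reach^{\vec{x}}_{\geq 0}(\tau_1\,\pi_2^*\,\tau_3)$, and then passing from $\pi_2^*$ to $\overline{\sigma_2}^*$ via Lemma~\ref{lem:flatten-lps}, which states precisely $\Reach^{\vec{z}}_{\geq 0}(\sigma^*)\subseteq\Reach^{\vec{z}}_{\geq 0}(\overline{\sigma}^*)$. That lemma is the genuinely new technical content (proved by halving coefficients across iterations and tracking drops/nadirs), and your proposal neither states it nor supplies a substitute. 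Without it, the claim that ``small positive coefficients'' on $\hat\chi$ can be coordinated with the buffers provided by $\vec{x''}$ and $\vec{y''}$ is unsupported.
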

\begin{proof}
We recall that the proof of the ``if'' direction of
  \autoref{lemma45} given in \cite{blondin2017logics} establishes
  that \autoref{itm:45itm1}--\autoref{itm:45itm4} together imply that $\vec{y} \in
  \Reach^{\vec{x}}_{\geq 0}(\pi_1 \pi_2^* \pi_3)$.
  By
  \autoref{lem:LPS-cover}, $\pi_1$ and $\pi_3$ admit short LPS $\sigma_1$ and $\sigma_3$. We show
  that $\pi_2^*$ also admits a short LPS of the form $\theta^*$, with $\theta$
  a cycle of length at most $2|Q|(d|T|^{d+1}+1)$. The result thus follows from
  \autoref{lemma45} and the length bounds from \autoref{lem:LPS-cover}.

  Let $\vec{z}$ be a vector reachable via the path $\pi_1$, i.e.,
  $q(\vec{x})\xrightarrow{\pi_1} q(\vec{z})$. By
  \autoref{prop:lp_scheme}, $\pi_2$ admits an LPS, so
  $\Reach^{\vec{z}}(\pi_2)\subseteq \Reach^{\vec{z}}(\sigma_2)$ for an LPS
  $\sigma_2=\tau_0 \gamma_1^* \tau_1\cdots \gamma_n^*\tau_n$, where $n \leq
  d|T|^{d+1}$, the $\tau_i$ are simple paths, and the $\gamma_i$ are simple
  cycles.  Assume w.l.o.g. that all the cycles in $\sigma_2$ are taken at
  least once in $\pi_2$ (otherwise we omit them from $\sigma_2$) and denote
  $\overline{\sigma_2}=\tau_0 \gamma_1 \tau_1\cdots \gamma_n\tau_n$. That is,
  $\overline{\sigma_2}$ is $\sigma_2$ where we take each cycle exactly once
  (note that this is a concrete path). We note that
  $\overline{\sigma_2}^*$ is an LPS comprising a single (not necessarily
  simple) cycle whose length is at most $2|Q|(d|T|^{d+1} +1)$ since it consists of
  at most $d|T|^{d+1}$ simple cycles and $d|T|^{d+1}+1$ simple paths. In order to conclude that $\vec{y} \in \Reach^{\vec{x}}_{\geq 0}(\sigma_1 \overline{\sigma_2}^* \sigma_3)$, it thus suffices to prove \autoref{lem:flatten-lps} below.
  \end{proof}
  
  \begin{restatable}{lemma}{lemflattenlps}\label{lem:flatten-lps}
    Let $\sigma$ be an LPS and $\vec{z} \in \mathbb{Q}^d$. Then,
  $\Reach^{\vec{z}}(\sigma^*)\subseteq
  \Reach^{\vec{z}}(\overline{\sigma}^*)$ and, moreover, $\Reach^{\vec{z}}_{\geq 0}(\sigma^*)\subseteq
  \Reach^{\vec{z}}_{\geq 0}(\overline{\sigma}^*)$.
  \end{restatable}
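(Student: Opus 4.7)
The plan is to leverage \autoref{thm:cycles} applied to the single cycle $\overline{\sigma}$ to characterize $\Reach^{\vec{z}}(\overline{\sigma}^*)$ geometrically, and then to show that every configuration reachable via the more liberal $\sigma^*$ already lies in this characterization, with the extra nonneg constraint handled by a careful scaling construction inside $\overline{\sigma}^K$ for large $K$.

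For the first (unconstrained) inclusion I first invoke \autoref{thm:cycles} on the single (possibly non-simple) cycle $\overline{\sigma}$, obtaining
\[
    \Reach^{\vec{z}}(\overline{\sigma}^*) = \vec{z} + \relint(\cone(\Sigma)) \cup \{\vec{0}\},
\]
where $\Sigma = \{\ell(t) \mid t \in \llbracket \overline{\sigma} \rrbracket\} = \{\ell(t) \mid t \in \llbracket \sigma \rrbracket\}$. Given any $\vec{w} \in \Reach^{\vec{z}}(\sigma^*)$ witnessed by a run along some $\pi \in \sigma^m$, the case $\vec{w} = \vec{z}$ is immediate via $\overline{\sigma}^0$. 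Otherwise, I would argue that, up to removing cycles of $\sigma$ whose transitions are unused in the witness (which only shrinks the right-hand side and therefore suffices), $\vec{w} - \vec{z}$ admits a strictly positive combination of every generator in $\Sigma$. Then \autoref{lem:simple-basis} applied to $\overline{\sigma}$ places $\vec{w} - \vec{z}$ inside $\Reach(\overline{\sigma}^k)$ for some $k \geq 1$, yielding $\vec{w} \in \Reach^{\vec{z}}(\overline{\sigma}^*)$.

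For the second (nonneg) inclusion I would start from a nonneg witness run $\rho$ for $\vec{w} \in \Reach^{\vec{z}}_{\geq 0}(\sigma^m)$ and construct a matching nonneg run $\rho'$ along $\overline{\sigma}^K$. The construction assigns each transition occurrence in $\overline{\sigma}^K$ a tiny baseline scaling $\epsilon > 0$ plus targeted ``boosts'' in selected iterations that faithfully replay the scaling sequence of $\rho$ in order. Since each transition $t$ appears $K \cdot c_t$ times in $\overline{\sigma}^K$ (with $c_t$ its count in $\overline{\sigma}$), there is ample budget; choosing $\epsilon$ of order $(K|\sigma|M)^{-1}$, with $M$ an upper bound on absolute label magnitudes, keeps the baseline perturbations to the counter trajectory arbitrarily small.

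The main obstacle is precisely this nonneg preservation: even with tiny baseline scalings, the fixed transition order inside $\overline{\sigma}^K$ may transiently pull a counter below its synchronized value in $\rho$. I plan to handle this through an inductive invariant bounding the deviation between $\rho'$ and $\rho$ at $\overline{\sigma}$-iteration boundaries, combined with a joint tuning of $K$ and $\epsilon$; the edge case where $\vec{z}$ has zero-valued coordinates is expected to require a short ``warm-up'' phase during the first few iterations of $\overline{\sigma}$, so that each counter becomes strictly positive before the boost-matched iterations begin.
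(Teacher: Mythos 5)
Your unconstrained inclusion takes a genuinely different, more geometric route than the paper (you invoke \autoref{thm:cycles} and \autoref{lem:simple-basis} on the single cycle $\overline{\sigma}$, whereas the paper just merges coefficients of repeated transitions), and that route is workable --- except for the parenthetical claim that discarding cycles of $\sigma$ unused in the witness ``only shrinks the right-hand side''. That is false: $\relint(\cone(\cdot))$ is not monotone under removing generators (e.g.\ $(1,0)\in\relint(\cone(\{(1,0)\}))$ but $(1,0)\notin\relint(\cone(\{(1,0),(0,1)\}))$), so pruning cycles from $\overline{\sigma}$ does not yield a subset of $\Reach^{\vec z}(\overline{\sigma}^*)$. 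Indeed the lemma itself fails for an LPS with a cycle that is taken zero times; it is only invoked (and only correct) under the assumption, made explicitly w.l.o.g.\ in the proof of \autoref{lem:LPS-cycle}, that every cycle of $\sigma$ is traversed at least once. You should adopt that assumption rather than argue around it.

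The genuine gap is the nonnegative inclusion, which is the entire point of the lemma. You correctly identify the obstacle --- replaying the coefficients of $\rho$ inside the rigid transition order of $\overline{\sigma}^K$ can transiently push a counter below its value in $\rho$ --- but the proposed remedy (``an inductive invariant bounding the deviation'', ``joint tuning of $K$ and $\epsilon$'', a ``warm-up phase'') is precisely the hard part and is never supplied; note that the dangerous deficits come from the boost coefficients you are forced to place out of order, not from the baseline, so shrinking $\epsilon$ alone cannot repair them. The paper sidesteps synchronization entirely with a halving observation: if $q(\vec u)\xrightarrow{\rho}q(\vec v)$ is a nonnegative run (so $\vec v\ge\vec 0$), then so is $(\frac{1}{2}\rho)^2$, where $\frac{1}{2}\rho$ halves every coefficient; this follows by comparing the drop (minimal partial sum) of $\frac{1}{2}\rho$, which is half that of $\rho$, against the intermediate configuration $\frac{1}{2}(\vec u+\vec v)$. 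Iterating the halving makes the total coefficient mass placed on each cycle $\gamma_i$ within one iteration of $\sigma$ at most $1$, after which those repeats fold into the single occurrence of $\gamma_i$ in $\overline{\sigma}$. Some such idea is needed; the simulation as sketched does not go through.
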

  
  \noindent
  Intuitively,
both hold since every iteration of $\sigma$
  repeats each cycle $\gamma_i$ a number of times $M_i$ with some
  coefficient $\alpha_i$. If $M_i\alpha_i\le 1$ for all $i$ in an
  iteration, this can be simulated in a single iteration of
  $\overline{\sigma}$ with the coefficient $M_i\alpha_i$ on $\gamma_i$.
  Otherwise, if $M_i\alpha_i> 1$ for some $i$, we can split that
  iteration into several ones of $\overline{\sigma}$ with all coefficients being halved, thus reducing
  the $M_i\alpha_i$ until all are at most $1$.

\begin{proof}[of \autoref{prop:LPS-nonneg}] 
We remark that an arbitrary path $\pi$ can always be decomposed into simple paths separated by at most $|Q|$ (not necessarily simple) cycles, each with different starting (and thus end) states. Now, for each cycle $\chi_i$, we apply \autoref{lem:LPS-cycle} to obtain a short LPS $\theta_i$ for it. Finally, by \autoref{lem:flatten-lps}, we can use $\overline{\theta_i}^*$ for it in our LPS. The bounds on the lenghts of the cycles then follow from the bounds in \autoref{lem:LPS-cycle}.
\end{proof}

In the next section, we turn to solving the reachability problem for LPS.

\subsection{Reachability for linear path schemes is tractable}
As in the $\mathbb{Q}$VASS case, here we are able to prove that determining whether a configuration is reachable via an LPS is decidable in polynomial time.
\begin{theorem}\label{thm:nonneg-lps-ptime}
    Given a linear path scheme $\sigma$ and $\vec{x},\vec{y} \in \mathbb{Q}_{\geq 0}^d$,
    determining whether $\vec{y} \in \Reach^{\vec{x}}(\sigma)$ is in \P.
\end{theorem}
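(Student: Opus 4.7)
The plan is to extend the polynomial-time CSH encoding developed for \autoref{thm:lpsreachp} with additional linear constraints that enforce nonnegativity throughout the run. Fix the input $\sigma = \pi_0 \chi_1^* \pi_1 \dots \chi_n^* \pi_n$. I will introduce fresh vector variables $\vec{u_i}, \vec{v_i} \in \mathbb{Q}_{\geq 0}^d$ for the configurations at the entry and exit of each $\chi_i^*$ respectively, together with coefficient variables for every transition of each $\pi_i$ and for every transition of a single iteration of each $\chi_i$. The boundary equalities then stipulate that $\pi_0$ starts at $\vec{x}$, that $\pi_{i-1}$ ends at $\vec{u_i}$ for $1 \leq i \leq n$, that $\pi_i$ starts at $\vec{v_i}$, and that $\pi_n$ ends at $\vec{y}$.

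The encoding will comprise two kinds of subsystems. For each path $\pi_i = t_1 \dots t_{m_i}$ I add prefix inequalities demanding that the configuration entering $\pi_i$ plus every partial coefficient-weighted sum of transition labels is nonneg, together with the equality for the full sum yielding the configuration exiting $\pi_i$; these are plain linear constraints. For each cycle $\chi_i = s_1 \dots s_{|\chi_i|}$ I use \autoref{lemma45} to decompose the condition $\vec{v_i} \in \Reach^{\vec{u_i}}_{\geq 0}(\chi_i^*)$ into a disjunction: either $\vec{u_i} = \vec{v_i}$ (zero iterations), or the conjunction of $\vec{v_i} - \vec{u_i} \in \relint(\cone(G(\chi_i)))$ (encoded as CSH clauses exactly as in \autoref{lem:enc-cones}), a \emph{wiggle-in} condition at $\vec{u_i}$ asking for coefficients $\beta_1, \dots, \beta_{|\chi_i|} \in (0,1]$ with $\vec{u_i} + \sum_{j' \leq j} \beta_{j'} \ell(s_{j'}) \geq \vec{0}$ for every $j$, and a symmetric \emph{wiggle-out} condition terminating at $\vec{v_i}$.

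All subsystems will be conjoined into a single master system of size polynomial in $|\sigma|$ consisting of CSH clauses over nonnegative variables, applying the difference trick from \autoref{lem:enc-cones} to represent any vector variable that would otherwise be sign-unrestricted. Feasibility is then decidable in polynomial time by the algorithm of Blondin and Haase~\cite{blondin2017logics,blondin20}. The forward direction of the cycle equivalence above is clear because the first and last iterations of any valid $\chi_i^k$ run in $\mathbb{Q}_{\geq 0}$VASS furnish witnesses for wiggle-in and wiggle-out; the backward direction, which I expect to require the most care, follows from the construction underlying \autoref{lemma45}: one \emph{sandwiches} the two wiggle iterations around many intermediate iterations of $\chi_i$ with very small coefficients that together realize the bulk of $\vec{v_i} - \vec{u_i}$, with continuous scaling keeping all intermediate configurations nonneg. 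Handling possibly non-simple cycles $\chi_i$ cleanly is where I anticipate spending the most effort; once this is in place, the rest of the proof is a routine assembly of the previously developed encodings.
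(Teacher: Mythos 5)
Your proposal follows the same route as the paper: decompose the LPS into path segments and cycle stars, encode path nonnegativity via partial-sum linear constraints, and handle each $\chi_i^*$ via \autoref{lemma45} with the cycle itself serving as both the ``wiggle-in'' ($\pi_1$) and ``wiggle-out'' ($\pi_3$) paths and the cone encoding of \autoref{lem:enc-cones} for the bulk $\mathbb{Q}$-reachability ($\pi_2$). The one point you need to tighten is how the zero-iteration case is expressed. You phrase the cycle encoding as a top-level disjunction ``$\vec{u_i}=\vec{v_i}$ \emph{or} (wiggle-in $\land$ cone $\land$ wiggle-out with all coefficients in $(0,1]$).'' That disjunction is not itself a CSH clause: a CSH clause permits only a single scalar linear literal on the left of the disjunction, not a conjunction of $d$ equalities on one side and nonnegativity/cone constraints on the other, so your master system as described would not be syntactically admissible for the Blondin--Haase algorithm. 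The fix the paper uses is to keep all linear relations (the cone combination equating to $\vec{v_i}-\vec{u_i}$, and the nonnegative partial-sum constraints for the two bounding copies of $\chi_i$) in the conjunctive part with coefficients ranging over $[0,1]$ rather than $(0,1]$, and to replace your disjunction by a single CSH conjunct enforcing that \emph{all} coefficients (cone plus both bounding paths) are simultaneously zero or simultaneously strictly positive. When they are all zero you automatically get $\vec{u_i}=\vec{v_i}$ and trivially satisfied nonnegativity, so the two branches of your disjunction are recovered without ever writing a non-CSH disjunction. Once you make that adjustment, the rest of your argument — including the ``sandwich'' justification via the \emph{if} direction of \autoref{lemma45} — matches the paper. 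Your worry about non-simple cycles is unfounded: the bounding copies of $\chi_i$ are encoded transition-by-transition as paths, so repeated transitions simply get their own coefficient variable, and the cone encoding only depends on the \emph{set} $G(\chi_i)$; neither step is affected by non-simplicity.
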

Once more, our argument relies on an encoding into a system of CSH clauses. However, in contrast to $\mathbb{Q}$VASS, the encoding is slightly less elegant. For paths, instead of encoding an (affine) zonotope into a system of linear inequalities, we focus directly on the path $\pi = (p_1,p_2) \dots (p_m,p_{m+1})$ that gives rise to it.

Let $\vec{b_i} \coloneqq \ell(p_i,p_{i+1})$ for all $1 \leq i \leq m$. Now, consider the system $\vec{A'} \vec{z} = \vec{a} \land \vec{B}\vec{z} \leq \vec{b}
\land \vec{C'}\vec{z} < \vec{c}$, where $\vec{A}$, $\vec{a}$, $\vec{B}$, $\vec{b}$, $\vec{C}$, and $\vec{c}$ as defined in \autoref{eqn:a-paths}, \autoref{eqn:b-paths}, and \autoref{eqn:c-paths}, 
and $\vec{A'} = \begin{pmatrix}\vec{A} & \vec{I}\end{pmatrix} \in \mathbb{Q}^{d \times m+2d}$, $\vec{C'} = \begin{pmatrix}\vec{C} & \vec{0} & \dots & \vec{0}\end{pmatrix} \in \mathbb{Q}^{d \times m + 2d}$
as defined just before \autoref{lem:enc-cones}. We extend this system by adding constraints to the effect of keeping all partial sums (i.e. all intermediate configurations) nonnegative. Concretely, to $\vec{A'}$ we add $md$ rows and columns to obtain a matrix $\vec{A''}$ additionally encoding
\(
    \bigwedge_{k=1}^d\bigwedge_{n=1}^m \left(\sum_{j=1}^n \vec{g_j} z_j\right) = z_{m+2d+nd+k}.
\)
We also pad the other matrices with $0$-columns to the right and the vectors with $1$-rows below for the dimensions to match again without adding more constraints. 

From the definitions, we immediately get the following result.
\begin{lemma}\label{lem:enc-nonnegpath}
    The system $\vec{A''z} = \vec{a'} \land \vec{Bz} \leq \vec{b} \land \vec{Cz} < \vec{c}$ of CSH clauses has a rational solution $(\vec{\alpha},\vec{y},\vec{x},\vec{\iota}) \in \mathbb{Q}_{\geq 0}^{m+2d+md}$ if and only if $p_1(\vec{x}) \xrightarrow{\pi} p_{m+1}(\vec{y})$.
\end{lemma}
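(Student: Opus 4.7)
The plan is to prove both directions by establishing a direct one-to-one correspondence between nonnegative solutions of the system and valid nonneg runs of $\pi$. Under this correspondence, the $m$ variables $\alpha_1,\dots,\alpha_m$ play the role of the run coefficients, the $d$-vectors $\vec{x}$ and $\vec{y}$ are the initial and final counter valuations, and the $md$ variables $\vec{\iota}_1,\dots,\vec{\iota}_m$ capture the intermediate configurations $\vec{x}+\sum_{j=1}^n\alpha_j\vec{g_j}$ reached after each transition. Once this dictionary is fixed, each clause of the system encodes exactly one clause of the definition of a nonneg run.

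For the $(\Leftarrow)$ direction, I would start from a run $p_1(\vec{x}_1)p_2(\vec{x}_2)\cdots p_{m+1}(\vec{x}_{m+1})$ lifted from $\pi$ with $\vec{x}_1=\vec{x}$, $\vec{x}_{m+1}=\vec{y}$, and coefficients $\alpha_i\in(0,1]$. Since the run is nonneg, every $\vec{x}_i\in\mathbb{Q}_{\geq 0}^d$. I set the $z$-variables to $(\vec{\alpha},\vec{y},\vec{x},\vec{\iota})$ with $\vec{\iota}_n:=\vec{x}_{n+1}$. The original rows of $\vec{A''}$ (inherited from $\vec{A'}$) record the endpoint equation $\vec{y}=\vec{x}+\sum_i\alpha_i\vec{g_i}$; the $md$ fresh rows record $\vec{\iota}_n=\vec{x}+\sum_{j=1}^n\alpha_j\vec{g_j}$; the constraint $\vec{Bz}\leq\vec{b}$ yields $\alpha_i\leq 1$; and $\vec{Cz}<\vec{c}$ yields $\alpha_i>0$. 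Nonnegativity of every coordinate follows from the nonnegativity of $\vec{x}_i$, $\vec{y}$, and $\alpha_i$.

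For the $(\Rightarrow)$ direction, given a solution $(\vec{\alpha},\vec{y},\vec{x},\vec{\iota})\in\mathbb{Q}_{\geq 0}^{m+2d+md}$, the inequality constraints immediately give $\alpha_i\in(0,1]$. The original equations in $\vec{A''z}=\vec{a'}$ yield $\vec{y}=\vec{x}+\sum_i\alpha_i\vec{g_i}$, and the newly added $md$ equations force $\vec{\iota}_n$ to equal the $n$-th intermediate configuration $\vec{x}+\sum_{j=1}^n\alpha_j\vec{g_j}$. Because $\vec{\iota}_n\geq\vec{0}$ by the nonnegativity of the solution, the sequence $p_1(\vec{x}),\,p_2(\vec{x}+\alpha_1\vec{g_1}),\,\dots,\,p_{m+1}(\vec{y})$ is a valid nonneg run in the $\mathbb{Q}_{\geq 0}$VASS lifted from $\pi$, establishing $p_1(\vec{x})\xrightarrow{\pi}p_{m+1}(\vec{y})$.

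The only non-routine aspect is bookkeeping: one must verify that padding $\vec{B}$, $\vec{C}$ with zero columns and $\vec{b}$, $\vec{c}$ with matching entries does not introduce any new constraints involving the $\vec{\iota}$-coordinates, so that the old inequalities apply solely to $\vec{\alpha}$ and the partial-sum equations are the unique constraints binding $\vec{\iota}$ to $\vec{\alpha},\vec{x}$. Once this is checked, the equivalence is immediate from inspection of each clause.
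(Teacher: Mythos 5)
Your proof is correct and matches the paper's approach: the paper declares the lemma ``immediate from the definitions'' and gives no further argument, and your proposal supplies exactly the routine translation between solutions of the system and runs lifted from $\pi$, checking each constraint against each clause of the definition of a nonnegative run. One remark: your reading $\vec{\iota}_n = \vec{x} + \sum_{j=1}^n \alpha_j \vec{g_j}$ is the one needed for the lemma to hold, even though the paper's displayed formula for the added rows literally omits the $\vec{x}$-term (and has an index offset); you have tacitly corrected what is evidently a typo.
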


We are now ready to prove the theorem.
\begin{proof}[\autoref{thm:nonneg-lps-ptime}]
By \autoref{lem:enc-nonnegpath}, it suffices to argue that cycles can also be encoded into a system of CSH clauses. For this, we make use of the ``if'' direction of \autoref{lemma45}. We use subsystems to encode \autoref{itm:45itm2} and \autoref{itm:45itm3}, just like we did for paths. Note that taking $\pi_2 = \pi_3 = \chi$ is sufficient in this case, so it is indeed an instance of reachability via a path. To encode \autoref{itm:45itm3}, we make use of a subsystem, like \autoref{eq:CSH_cones_pos}. Unfortunately, we cannot merely conjoin these three systems of CSH clauses as the cycle can also be taken $0$ times in $\chi^*$. Instead, we modify the right conjunct from \autoref{eq:CSH_cones} so that all coefficients from both paths as well as the cycle are either all $0$ or all of them are strictly positive. The encoding of $\pi_2$ and $\pi_3$ is adapted in the same way --- i.e. $\vec{Cz} < \vec{c}$ is replaced by the same modified conjunct as discussed previously.
For each cycle, we thus obtain a system of CSH clauses
that has a nonnegative rational solution $(\dots, \vec{y},\vec{x}, \dots)$ if and only if $p(\vec{x}) \xrightarrow{\chi^*} q(\vec{y})$.
Finally, \autoref{thm:nonneg-lps-ptime} holds since conjoining the encodings of all paths and cycles of the LPS yields a master system of CSH clauses of polynomial size.
\end{proof}

\section{The Complexity of Reachability with Zero Tests}\label{sec:zero-tests}

In this section, we argue that solving the reachability problem for continuous VASS with \emph{zero tests} is \NP-hard, already for LPS of dimension $2$. For convenience, we state the result for $\mathbb{Q}$VASS. However, we note that the same proof establishes the result for $\mathbb{Q}_{\geq 0}$VASS. Recall that from dimension $4$ onward, and for general continuous $\mathbb{Q}_{\geq 0}$VASS, the problem is known to be undecidable~\cite{blondin2017logics}.

We start by formally defining the model. A continuous VASS $\mathcal{V}$ of dimension $2$ with zero tests is a tuple $(Q,t,\ell,Z_1,Z_2)$, where $\mathcal{V}' = (Q,t,\ell)$ is a continuous VASS and $Z_i \subseteq Q$ for $i=1,2$. A run $\rho = q_1(\vec{x}_1) \dots q_n(\vec{x}_n)$ of such a VASS is a run of $\mathcal{V}'$ such that, additionally, for all $1 \leq i \leq n$ we have that if $q_i \in Z_j$, for some $j \in \{1,2\}$, then $(\vec{x_i})_j = 0$. That is, any run that reaches a state in $Z_j$ must be such that the the value of the $j$-th counter is $0$ then.

\begin{theorem}\label{thm:hard-zero}
    For every $d \in \mathbb{N}, d \geq 2$, given a $\mathbb{Q}$VASS (or $\mathbb{Q}_{\geq 0}$VASS) of dimension $d$, and two configurations $p(\vec{x})$ and $q(\vec{y})$, determining whether $\p(\vec{x}) \xrightarrow{*} q(\vec{y})$ is \NP-hard, even for linear path schemes of dimension $2$.
\end{theorem}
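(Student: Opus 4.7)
The plan is to establish hardness by polynomial-time reduction from SubsetSum with multiplicities, which is NP-complete even with binary-encoded inputs. An instance consists of positive integers $a_1, \dots, a_n, t$ and asks whether there exist $k_1, \dots, k_n \in \mathbb{N}$ with $\sum_i k_i a_i = t$. From such an instance I would construct, in polynomial time, a continuous VASS of dimension $2$ with zero tests on counter~$2$, together with an LPS $\sigma = \pi_0 \chi_1^* \pi_1 \cdots \chi_n^* \pi_n$, an initial configuration $p(\vec{0})$, and a target configuration $q(t, 0)$, so that $p(\vec{0}) \xrightarrow{*} q(t, 0)$ along $\sigma$ iff the SubsetSum instance is positive. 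The case of arbitrary $d \geq 2$ will then follow by padding with inactive counters.

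For each item $a_i$ I introduce a gadget consisting of a cycle $\chi_i$ anchored at a state $q_i$ that is zero-tested on counter~$2$ (i.e.\ $q_i \in Z_2$), together with a short connector path $\pi_i$ linking consecutive gadgets. The core transitions of $\chi_i$ are $q_i \to r_i$ labelled $(a_i, 1)$ and $r_i \to q_i$ labelled $(0, -1)$. The zero-test at $q_i$ forces, in each iteration of $\chi_i$, counter~$2$ to return to $0$; together with the coupling of counter~$1$ and counter~$2$ in the labels, this ties the two per-iteration scalings to one another. Writing $k_i$ for the number of iterations of $\chi_i$, the target is to make the gadget contribute exactly $k_i a_i$ to counter~$1$ while leaving counter~$2$ at $0$, so that the combined effect of the $n$ gadgets is $\sum_i k_i a_i$ on counter~$1$ and $0$ on counter~$2$; reachability of $q(t, 0)$ then coincides with $\sum_i k_i a_i = t$.

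The technical crux, and the main obstacle, is \emph{discretisation}: the basic gadget above only forces the per-iteration scalings of the two transitions to be equal, so the cumulative contribution of the gadget to counter~$1$ is any rational in $[0, k_i a_i]$ rather than an integer multiple of $a_i$. To pin down a per-iteration scaling of exactly $1$ on the $(a_i, 1)$ transition, I enrich $\chi_i$ with a further auxiliary state and transition whose label carries a strategically chosen counter-$2$ coefficient, combined with an additional zero-test state placed inside the cycle. The two zero-test equations inside $\chi_i$, together with the hard upper bound $\alpha \in (0,1]$ on individual scalings, leave exactly one feasible per-iteration scaling vector, namely the all-ones vector. Once this is verified, $k_i \in \mathbb{N}$ becomes the sole free parameter and the gadget contributes exactly $k_i a_i$ to counter~$1$, as desired.

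Finally, correctness in both directions is immediate from the gadget analysis: any accepting run extracts iteration counts $k_i \in \mathbb{N}$ witnessing the SubsetSum equation, and conversely any SubsetSum witness yields an accepting run with the corresponding $k_i$ and unit scalings. The size of the constructed LPS and VASS is linear in $n$ and in the binary length of the $a_i$'s and~$t$, so the reduction is polynomial. The same construction applies to $\mathbb{Q}_{\geq 0}$VASS by adding a uniform positive offset on counter~$1$ so that all intermediate configurations along the witness run remain nonnegative, adjusting the target accordingly.
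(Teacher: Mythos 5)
There is a genuine gap at exactly the point you flag as the ``technical crux'': the discretisation step cannot be carried out the way you describe. Inside a cycle $\chi_i$ whose anchor state is zero-tested on counter~$2$, counter~$2$ has value $0$ at the start of every iteration, so every zero-test you place inside the cycle imposes a \emph{homogeneous} linear equation on the per-iteration scaling coefficients $\alpha_1,\dots,\alpha_m$ (a vanishing partial sum $\sum_j \alpha_j c_j = 0$ of counter-$2$ contributions). The feasible set of such a system is a cone intersected with $(0,1]^m$: if $(\alpha_1,\dots,\alpha_m)$ is feasible, so is $(\lambda\alpha_1,\dots,\lambda\alpha_m)$ for every $0<\lambda\le 1$. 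The upper bound $\alpha_j\le 1$ only caps the coefficients from above and can never pin them from below, so no choice of auxiliary transitions and counter-$2$ labels singles out the all-ones vector. Consequently your gadget contributes an arbitrary value in $(0,k_ia_i]$ to counter~$1$, the union over $k_i$ of these contributions is all of $(0,\infty)$ (plus $0$ for $k_i=0$), and the constructed instance is a yes-instance whenever $t>0$, independently of the SubsetSum instance. The reduction therefore fails.

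The only way to break the homogeneity is to let the zero-test equations involve the \emph{current value of counter~$1$}, i.e.\ to zero-test counter~$1$ after a transition that consumes it. But then the per-iteration effect depends on that value, which is inherently multiplicative rather than additive: this is precisely why the paper abandons SubsetSum and instead reduces from \textsc{ExactCover} via a \textsc{PrimeCover} reformulation. Its multiplier gadget uses transitions $(-d^e,d^{e+1})$ and $(d^e,-d^e)$ with alternating zero tests on both counters, so that from $(x,0)$ the unique feasible coefficients are $x/d^e$ and $x/d^{e-1}$ and one traversal deterministically produces $(dx,0)$; the huge exponent $e$ then ensures each gadget's cycle can be taken at most once, and unique prime factorisation replaces the additive bookkeeping you were aiming for. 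If you want to salvage an additive reduction you would need a mechanism to hold a known nonzero constant for normalising the $\alpha$'s, which dimension $2$ does not leave room for.
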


We reduce the \textsc{ExactCover} problem, shown to be \NP-hard by Karp \cite{karp1972reducibility}, to our reachability problem. The \textsc{ExactCover} problem asks, given a set $X$ and a collection $C$ of subsets of $X$, whether there exists a subset $C'\subseteq C$ that is a partition of $X$, i.e. such that: 
$\bigcup_{c'\in C'} c' = X$, and $c'_1 \cap c'_2 = \emptyset$ for all $c'_1, c'_2 \in C'$.

\begin{proof}
We first reduce \textsc{ExactCover} to an intermediate problem. Consider a bijection from $X$ to the first $|X|$ prime numbers. To simplify notation, we henceforth assume the elements of $X$ are themselves prime numbers. Then, each element in $C$ corresponds to a set of primes. Now, the \textsc{PrimeCover} problem asks, given $T\in \mathbb{N}$, $X$ and $C$ as before, whether there exists a subset $C' \subseteq C$ such that $\prod_{c'\in C'}\prod_{p\in c'} p = T$. If $T = \prod_{p\in X} p$, then, by the prime factorization theorem, \textsc{PrimeCover} has a positive answer if and only if \textsc{ExactCover} does as well.

Now, we show how to encode an arbitrary instance of the \textsc{PrimeCover} problem into the reachability problem for $2$-dimensional LPS with zero tests. Let $d\in \mathbb{N}$ and $e\in\mathbb{N}$, we first introduce the \emph{multiplier} gadget depicted in \autoref{fig:mult-gadget}.
    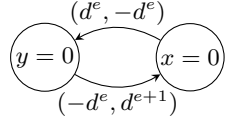
\begin{figure}[t]
        \centering
        \begin{tikzpicture}[inner sep=1pt, minimum size=0mm]
            \node (q0) [state] {$y = 0$};
            \node (q1) [state, right = of q0 ] {$x = 0$};
            \path [-stealth]
                (q0) edge [bend right] node [below] {$(-d^e, d^{e+1})$}   (q1)
                (q1) edge [bend right] node [above] {$(d^e, -d^e)$}   (q0);
        \end{tikzpicture}
        \caption{The multiplier gadget is shown; the state above is an element of $Z_1$ (noted by $x=0$) while the state below is an element of $Z_2$ (noted by $y=0$)}
        \label{fig:mult-gadget}
    \end{figure}
    We show that from all counter values $(x, 0) \in \mathbb{Q}_{\geq 0}^2$ it holds that if we follow the cycle in the gadget once then we reach $(dx, 0)$, as long as $x \leq d^{e-1}$. Indeed, to reach the right state, we can only choose the coefficient $\frac{x}{d^e}$, hence the counter values become $(0, \frac{xd^{e+1}}{d^e}) = (0, xd)$. Now, to get back to the left state, we can only choose the coefficient $\frac{x}{d^{e-1}}$, which results in the counter value $(xd, 0)$.
    
    Note that we have to make sure that both coefficients have to be at most $1$. This is ensured by our assumption that:
    \begin{equation}\label{eqn:ass-mult}
        x/d^{e-1} \leq 1 \iff x \leq d^{e-1}.
    \end{equation}
    Since the gadget is constructed in a way that for each transition there is only one feasible coefficient, if the assumption is not met, we cannot take the cycle. 
    
    Now, we can encode an instance of the \textsc{PrimeCover} problem as follows. For each set of primes $c \in C$ we create a multiplier gadget where $d = \prod_{p \in c} p$ and $e = \lceil \log_2(\prod_{c \in C}\prod_{p \in c} p) \rceil + 1$, and we link them in an LPS with transitions $(q_i,q_{i+1})$, for $1 \leq i \leq |C|$, labelled with $(0, 0)$ updates (see \autoref{fig:prime-cover}). We observe that all $d$'s and $e$ can be encoded in binary using a polynomial number of bits due to the prime number theorem. Further, we claim that \textsc{PrimeCover} has a positive answer if and only if $q_1(1,0) \xrightarrow{*} q_{|C|}(T,0)$ in the constructed LPS.
    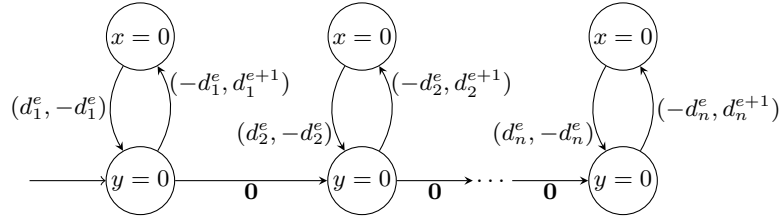
\begin{figure}[t]
        \centering
        
        \begin{tikzpicture}[yscale=1,inner sep=1pt, minimum size=0mm]
            \node (q0) [state, initial left,
                        initial text = {},
                        initial distance = 1cm,] {$y = 0$};
            \node (q1) [state, above = of q0 ] {$x = 0$};
            
            \node (q2) [state, right = 2cm of q0 ] {$y = 0$};
            
            \node (q3) [state, above = of q2] {$x = 0$};
            
            \node (ellipsis) [right = of q2] {$\dots$};
            
            \node (q4) [state, right = of ellipsis ] {$y = 0$};
            
            \node (q5) [state, above = of q4] {$x = 0$};
            
            \path [-stealth]
                (q0) edge [bend right] node [right,pos=0.8] {$(-d_1^e,d_1^{e+1})$}   (q1)
                (q1) edge [bend right] node [left] {$(d_1^e,-d_1^e)$}   (q0)
                (q0) edge node[below] {$\vec{0}$} (q2)
                (q2) edge [bend right] node [right,pos=0.8] {$(-d_2^e,d_2^{e+1})$}   (q3)
                (q3) edge [bend right] node [left,pos=0.8] {$(d_2^e,-d_2^e)$}   (q2)
                (q2) edge node[below] {$\vec{0}$} (ellipsis)
                (ellipsis) edge node[below] {$\vec{0}$} (q4)
                (q4) edge [bend right] node [right] {$(-d_n^e,d_n^{e+1})$}   (q5)
                (q5) edge [bend right] node [left,pos=0.8] {$(d_n^e,-d_n^e)$}   (q4);
        \end{tikzpicture}
        \caption{An LPS that encodes an instance of \textsc{PrimeCover} with $C = \{c_1, c_2, \dots, c_n\}$; recall that $d_i = \prod_{p \in c_i} p$}
        \label{fig:prime-cover}
    \end{figure}
    Indeed, if a subset $C' \subseteq C$ exists such that $\prod_{c' \in C'} \prod_{p \in c'} p = T$ then there is a run $\rho$ that takes the multiplier gadgets corresponding to all $C'$ exactly once. This is possible because any run that takes the multiplier gadgets at most once, up to the $n$-th gadget, does so with counter value $(x, 0)$ such that:
    \[
    \textstyle
    x \leq \prod_{c \in C}\prod_{p \in c} p \leq 2^{e-1} \leq p^{e-1} \text{ for all } p \in X
    \]
    because of our choice of $e$. Hence, we satisfy the assumption from \autoref{eqn:ass-mult} and $\rho$ can indeed take the aforementioned gadgets and reach $q_{|C|}$ with counter values $(T,0)$. Conversely, if such a run exists, by the prime factorisation theorem and the nature of the multiplier gadgets, we get that it takes a subset of the gadgets exactly once. The run thus induces a subset $C' \subseteq C$ as required.
\end{proof}

\section{Conclusion}
In this work, we give geometrical characterizations for the reachability sets of continuous VASS as well their flat and linear path scheme restrictions. Using these characterizations, we show that for any fixed dimension, polynomial-sized linear path schemes suffice as witnesses of reachability and that reachability in linear path schemes is tractable (even if the dimension is not fixed). In addition, sharpen hardness results for reachability for continuous VASS in the presence of zero tests: it is \NP-hard already for dimension two.

Two problems that remain open are the following. First, we do not know whether reachability is decidable for continuous VASS with zero tests of dimension $d =2,3$. In dimension $d=1$, the problem is known to be solvable in polynomial time~\cite{blmop21}. Second, we do not know what the exact complexity of the reachability problem is for fixed-dimension LPS with zero tests.

\subsection*{Acknowledgements}
We thank Michael Blondin, Filip Mazowiecki, and Igor Schittekat for useful conversations on the topics
of VASS and (continuous) variants thereof.

\bibliographystyle{splncs04}
\bibliography{bibliography}

\clearpage
\appendix
\section{Missing Proofs of Section~\ref{section:geometryRS}}
In this section, we provide the missing proofs of \autoref{section:geometryRS}.
Before the proofs, we introduce some notation not present in the paper. Let $G$ be the generator $G \coloneqq \{\ell(p_{i},p_{i+1}) \mid 1 \leq i \leq m \}$, then we also write $G \coloneqq \{\vec{g_1}, \vec{g_2}, \dots, \vec{g_n}\}$ to identify the generating vectors.

\thmcycles*
\begin{proof}
    We start by noting that $\{\vec{0}\}$ is included by definition in $ \relint(G) \cup \{0\}$ as well as $\Reach(\chi^*)$. This follows from the fact that the Kleene star allows us to not follow the cycle at all, thus reaching $\vec{0}$. Additionally, $\vec{0}$ is included in $ \relint(G) \cup \{0\}$ by definition.
    For the remainder of the proof, we focus on proving that $\Reach(\chi^*)\setminus \{\vec{0}\}$ is $\relint(G)$.  
    
    We first prove the $\subseteq$ direction. Here, we want to prove that for any $\vec{x} \in \Reach(\chi^*)\setminus\{\vec{0}\}$ it holds that $\vec{x} \in \relint(G)$. Since $\vec{0} \in \cone(G)$, we have $\aff(\cone(G)) = \linspan(G)$
    %
    and it thus suffices to show that there exists some $\varepsilon > 0$ s.t.
    $B_\varepsilon(\vec{x}) \cap \linspan(G) \subseteq \cone(B)$.
    Note that for any $\vec{x} \in \Reach(\chi^*)\setminus \{0\}$ there exists $\vec{\alpha} \in \mathbb{Q}^{|G|}_{>0}$ s.t. $\vec{x} = \sum_{i=1}^{|G|} \alpha_i \vec{g_i}$. It follows that for some $\delta > 0$ we have that both $\vec{x} + \delta \vec{g_i}$ and $\vec{x} - \delta \vec{g_i}$ are in $\cone(G)$. Let $H$ denote the convex hull of all such points, that is $H \coloneqq \convexhull(\{\vec{x} \pm \delta \vec{g_i} \mid \vec{g_i} \in G\})$. From our previous arguments, and by convexity of $H$ and, $\cone(G)$ we have that $H \subseteq \cone(G)$. It will thus suffice to prove that for some $\varepsilon > 0$ we have $B_\varepsilon(\vec{x}) \cap \linspan(G) \subseteq H$.
    
    Now, consider some \emph{basis} $B \subseteq G$ of $\linspan(G)$. That is, let $B$ be a maximal subset of $G$ such that all its elements are linearly independent. Further, consider a superset $D$ of $B$ such that $|D| = d$ and $D$ spans $\mathbb{Q}^d$. Then, we can set up the classical change-of-basis formula:
    \(
    \vec{z} = \vec{A} \vec{y},
    \)
    where $\vec{z},\vec{y} \in \mathbb{Q}^d$, $\vec{A} \in \mathbb{Q}^{d \times d}$ is a matrix whose columns correspond to the elements of $D$. Essentially, for any $\vec{z} \in \mathbb{Q}^d$, the unique values for $\vec{y}$ which make the formula true correspond to the (coefficients of the) representation of $\vec{z}$ with respect to the basis $D$ instead of the standard basis. In particular, for all $\vec{z} \in \linspan(G)$, since $B$ is a basis of it, $\vec{z}$ has a unique representation w.r.t. to it and thus all coordinates of $\vec{y}$ that correspond to elements of $D \setminus B$ are $0$. Now, since all columns of $A$ are linearly independent, the matrix has nonzero determinant and it is invertible. It follows that $\vec{A}^{-1}\vec{z} = \vec{y}$. Let $|M|$ denote the maximal absolute value of an entry of $\vec{A}^{-1}$ and choose $\varepsilon \coloneqq \sqrt{\delta/d|M|}$. It is easy to check that for any $\vec{z} \in \linspan(G)$, if $\lVert \vec{z} \rVert_2 < \varepsilon$ then every coordinate of $\vec{z}$ has absolute value strictly less than $d|M|\delta$. Now, from our choice of $|M|$, we have that every coordinate of $\vec{A}^{-1}\vec{z}$ is at most $\delta$. We thus conclude that $\vec{x} + \vec{z} \in H$, and since $\vec{z} \in \linspan(G), \lVert \vec{z} \rVert_2 < \varepsilon$ was arbitrary, we get that $B_\varepsilon(\vec{x}) \cap \linspan(G) \subseteq H$.
    

    Now, we prove the $\supseteq$ direction. Here we want to prove that there exists $\vec{\alpha}\in \mathbb{Q}^{n}_{>0}$ s.t. $\vec{p} = \sum_{i=1}^n \alpha_i \vec{g_i}$, given that $\vec{p}$ is in $\relint(G)$. Since $\vec{p} \in \relint(G)$ it holds that there exists some $\varepsilon\in \mathbb{Q}$ s.t. for all $\vec{p'} \in \linspan(G)$ we have $\vec{p'}\in \cone(G)$ if $\lVert \vec{p} - \vec{p'} \rVert_2 < \varepsilon$. Hence, there exists some $\delta \in \mathbb{Q}_{>0}$ such that $\vec{p''} = \vec{p}-\delta \sum_{i=1}^n \vec{g_i} \in \cone(G)$. 
    Moreover, since $\vec{p''} \in \cone(G)$, we have that there exists some $\vec{\beta}\in \mathbb{Q}^n_{\geq 0} $ such that:
    \[
        \vec{p''} = \sum_{i=1}^n \beta_i \vec{g_i}.
    \]
    By construction of $\vec{p''}$ we have that
    $\vec{p} = \vec{p''} + \delta\sum_{i=1}^n \vec{g_i}$.
    %
    Since $\beta_i \geq 0$ and $\delta_i > 0$ for all $1 \leq i \leq n$, we have that there exists $\vec{\alpha} \in \mathbb{Q}^n_{>0}$ (i.e. $\alpha_i = \beta_i + \delta$ for all $i$) s.t. $\vec{p} = \sum_{i=1}^n \alpha_i \vec{g_i}$. This concludes our proof.
\end{proof}

\simplebasis*
\begin{proof}
    Suppose there exists $(a_1,\dots,a_n) \in \mathbb{Q}_{>0}^n$ such that
    $\vec{x} = \sum_{i=1}^n a_i \vec{g_i}$. Let $A \coloneqq \max_{1\leq i
    \leq n} a_i$ and choose $D \in \mathbb{N}$ such that $1 \leq D$ and $A/D <
    1$. It remains for us to determine the value of $k$ and the corresponding
    coefficients. For the former, we set $k \coloneqq D$. For the latter, we
    will need some further notation: Let $M \coloneqq [\ell(p_i, p_{i+1}) \mid
    0 < i \leq n]$ be the multiset of all transition labels from $\chi$, and
    let $M(\vec{y})$ denote the multiplicity of $\vec{y}$ in $M$.
    For each $1 \leq i \leq nk$,
    we write $i'$ for the minimal natural number such that
    $i' \equiv i \pmod{n+1}$; and, $a(i)$ for the $a_j$ such that $\vec{g_j} =
    \ell(p_{i'},p_{i'+1})$.  Now, for each $1 \leq i \leq nk$, we set the
    coefficient $\alpha_i$ to the value:
    \[
        \frac{a(i')}{M(\ell(p_{i'},p_{i'+1}))D}.
    \]
    Hence, the constructed run ends with $p_1(\vec{x})$ as required.

    Conversely, suppose there exists a run from $p_1(\vec{0})$ to
    $p_1(\vec{x})$ with coefficients $\alpha_1, \alpha_2, \dots$ We define
    $(a_1,\dots,a_n)$ as required by summing the $\alpha_j$ as follows:
    \[
        a_i \coloneqq \sum_{j : \ell(q_j,q_{j+1}) = \vec{g_i}} \alpha_j,
    \]
    where $M$ once more denotes the multiset of all transition labels from
    $\chi$ and $M(\vec{x})$ the multiplicity of $\vec{x}$ in $M$. 
\end{proof}

\pathstozonotopes*
\begin{proof}

    By~\autoref{lem:simple-basis-path}, it follows trivially that no value outside $\zono(G_M)$ is reachable. It remains now to prove that all values in $\relint(G_M)$ are reachable, and all values in  $\boundary(\zono(G_M))$ are reachable iff they are in $\adj(G_M)$.

    We first show that for any point $\vec{p}\in \relint(G_M)$ there exists a vector $\vec{\alpha} \in \mathbb{Q}^n$ s.t. $0 < \alpha_i \leq 1$ for all $1 \leq i \leq n$ and $\vec{p} = \sum_{i=1}^n \alpha_i \vec{g_i}$. Since $\vec{p}\in \zono(G_M)$, we have that there exists a vector $\vec{\alpha} \in \mathbb{Q}^n$ such that $\vec{p}=\sum_{i=1}^n \alpha_i \vec{g_i}$ and $0\leq \alpha_i\leq 1$ for all $1 \leq i \leq n$. We call a vector $\vec{\alpha}$  a \emph{valid representation} of $\vec{p}$ if it satisfies the aforementioned constraints. Additionally, if $\vec{\alpha } > \vec{0}$ then the claim holds by \autoref{lem:simple-basis-path}. Henceforth, assume the complement holds and $\vec{p}$ has no strictly positive valid representation. Furthermore, we fix a valid representation $\vec{\alpha}$ such that the cardinality of the set $I \coloneqq \{ 1 \leq i \leq n \mid \alpha_i = 0\}$ is minimal among all representations of $\vec{p}$. It follows from our assumption that $I\neq \emptyset$. Since $\vec{p}\in \relint(G_M)$, we know there exists some $0 < \varepsilon < 1$ s.t. $\vec{p} + \varepsilon \vec{g}, \vec{p} - \varepsilon \vec{g} \in \zono(G_M)$ for all $\vec{g}\in G_M$. Hence, this property holds for all $\vec{g_k}\in G_M$ where $k\in I$. Let us focus on an arbitrary $k\in I$. First, we observe that $\vec{\alpha'} \coloneqq (\dots, \alpha_{k-1}, \varepsilon, \alpha_{k+1}, \dots)$ is a valid representation for $\vec{p} + \varepsilon \vec{g_k}$ and there exists some valid representation $\vec{\alpha''}$ for $\vec{p} - \varepsilon \vec{g_k}$. Then, we observe that $\vec{p}$ can be obtained as a convex combination of those two points:
    \[
        \vec{p} = \frac{1}{2}(\vec{p} + \varepsilon\vec{g_k}) +  \frac{1}{2}(\vec{p} - \varepsilon\vec{g_k}).
    \]
      
    Note that this new representation is valid since all coefficients are now at most $\frac{1}{2}$. Additionally, by construction of the representation of $\vec{p} + \varepsilon\vec{g_k}$ all coefficients that were non-zero are still non-zero. Finally, we have that the coefficient with index $k$ is also positive. It follows that the original representation for $\vec{p}$ was not minimal w.r.t. $I$. Hence, we arrive at our conclusion.

    Let $F$ be a face containing $\sigma_{G_M}$.
    Let $V$ be the set of vertices of $F$.
    Then, $\sigma_{G_M} \in F$ by definition of $\adj(G_M)$.
    Recall that $\aff(F)$ denotes the affine hull of $F$.
    Choose an arbitrary $\vec{x} \in \relint(F)$.
    Then, for $\lambda \in \mathbb{Q}$, define $\vec{y}_{\lambda} = \lambda \vec{x} + (1 - \lambda)\sigma_{G_M}$ and note that $\vec{y}_\lambda \in \aff(F)$.
    Furthermore, we have that $\lVert \vec{y}_{\lambda} - \vec{x}\rVert_2 \leq |1 - \lambda|(\lVert \vec{x}\rVert_2 + \lVert \sigma_{G_M}\rVert_2)$ by the triangle inequality of the norm $\lVert \cdot \rVert_2$.
    Since $\vec{x} \in \relint(F)$, using the definition of relative interior, we can say that there exists $\lambda > 1$ such that $\vec{y}_{\lambda} \in \relint(F)$.
    Note that $\vec{x} = \frac{1}{\lambda} \vec{y}_{\lambda} +  \left(1 - \frac{1}{\lambda}\right)\sigma_{G_M}$. Moreover,
    $\vec{y}_{\lambda} \in \relint(F) \subseteq F \subseteq \zono(G_M)$,
    so we can write $\vec{y}_{\lambda} = \sum_{\vec{g} \in G_M} r_{\vec{g}} \vec{g}$ for some $r_{\vec{g}} \in [0,1]$.
    Then
    \[
    \vec{x} = \sum_{\vec{g} \in G_M}
    \left(1 + \frac{r_{\vec{g}} - 1 }{\lambda}
    \right)\vec{g}
    \]
    Since $\lambda > 1 \geq r_{\vec{g}}$ for all $\vec{z} \in M$,
    we have $0 \leq \left(1 + \frac{r_{\vec{g}} - 1 }{\lambda}
    \right) \leq 1$.
    This implies $\vec{x} \in \Reach(\pi)$.
\end{proof}

\simplebasispath*
\begin{proof}
    Suppose there exists $(a_1,\dots,a_n) \in \mathbb{Q}^n$ as in the claim.
    We give coefficients to obtain a run lifted from $\pi = (p_1,p_2) \dots
    (p_m,p_{m+1})$. For this, recall that $M \coloneqq [\ell(p_i, p_{i+1})
    \mid 0 < i \leq n]$ is the multiset of all transition labels from $\pi$,
    and let $M(\vec{x})$ denote the multiplicity of $\vec{x}$ in $M$.
    Additionally, there is a mapping $\mu : \{\ell(p_i,p_{i+1}) \mid 1 \leq i
    \leq m\} \to G$ that essentially tells us how $G_M$ is obtained from $M$, i.e. $\mu(\vec{v})$ is the sum of all $\vec{v}$ from the multiset $M$ that have the same image per $\mu$.
    For each $1 \leq i \leq m$, we write $a(i)$ for the $a_j$ such that
    $\mu(\ell(p_i,p_{i+1})) = \vec{g_j}$; and, $N(i)$ for the cardinality of
    the set $\{\ell(p_i,p_{i+1}) \mid 1 \leq i \leq m,
    \mu(\ell(p_i,p_{i+1})) = \vec{g_j}\}$. Now, we set the coefficient
    $\alpha_i$ to the value:
    \[
      \frac{a(i)}{M(\ell(p_{i},p_{i+1}))N(i)}.
    \]
    By construction, $0 < a_i \leq 1$, for all $1 \leq i \leq n$, and
    the run ends with $p_m(\vec{x})$.

    We now prove the converse is also true. Consider any run $(p,\vec{0})
    \dots (q,\vec{y})$ lifted from $\pi$ with coefficients $a'_1,\dots,a'_m$
    such that $0 < a'_i \leq 1$ for all $1 \leq i \leq m$. For all $\vec{g_i}
    \in G$ and all $\vec{v} \in \mathbb{Q}^d$ such that $\mu(\vec{v}) =
    \vec{g_i}$, define $A^{(i)}_{\vec{v}}$ as the sum of all $a'_j$, for $1
    \leq j \leq m$, such that $\ell(p_j,p_{j+1}) = \vec{v}$.  (Note that the
    sum is nonempty because $\mu$ is surjective.) Recall that $\mu(\vec{v}) =
    \vec{g_i}$ implies that $\vec{v} \in \cone(\vec{g_i})$ and write
    $\beta^{(i)}_{\vec{v}}$ to denote the nonnegative value such that $\vec{v}
    = \beta^{(i)}_{\vec{v}} \vec{g_i}$. Now, let
    \(
        a_i \coloneqq \sum_{\vec{v} : \mu(\vec{v}) = \vec{g_i}}
        \beta^{(i)}_{\vec{v}}  A^{(i)}_{\vec{v}}.
    \)
    The following equalities hold.
    \begin{align*}
        \sum^n_{i=1} a_i \vec{g_i} ={} & \sum^n_{i=1} \left(\sum_{\vec{v} :
        \mu(\vec{v}) = \vec{g_i}} \beta^{(i)}_{\vec{v}}
        A^{(i)}_{\vec{v}}\right) \vec{g_i}\\
        {}={} & \sum^n_{i=1} \left(\sum_{\vec{v} : \mu(\vec{v}) = \vec{g_i}}
        (\beta^{(i)}_{\vec{v}}  \vec{g_i}) A^{(i)}_{\vec{v}}\right)\\
        {}={} & \sum^n_{i=1} \left(\sum_{\vec{v} : \mu(\vec{v}) = \vec{g_i}}
        \vec{ v} A^{(i)}_{\vec{v}}\right) & \text{def. of }
        \beta^{(i)}_{\vec{v}}\\
        {} = {} & \sum_{i=1}^m a'_i  \cdot \ell(p_i,p_{i+1}) = \vec{y} &
        \text{defs. of } \mu, A^{(i)}_{\vec{v}}
    \end{align*}
    It thus suffices to argue that $0 < a_i \leq 1$, for all $1 \leq i
    \leq n$. We observe that:
    \[
      \vec{g_i} = \sum_{\vec{v} : \mu(\vec{v}) = \vec{g_i}} M(\vec{v}) \vec{v}
      = \sum_{\vec{v} : \mu(\vec{v}) = \vec{g_i}} M(\vec{v})
      \left(\beta^{(i)}_{\vec{v}} \vec{g_i}\right)
    \]
    and thus $\sum_{\vec{v} : \mu(\vec{v}) = \vec{g_i}} M(\vec{v})
    \beta^{(i)}_{\vec{v}} = 1$. Finally, since we know $0 < a'_j \leq 1$, for
     all $1 \leq j \leq m$, then $0 < A^{(i)}_{\vec{v}} \leq M(\vec{v})$ and
    \(
        0 < \sum_{\vec{v} : \mu(\vec{v}) = \vec{g_i}} \beta^{(i)}_{\vec{v}}
        A^{(i)}_{\vec{v}} \leq 1.
    \)
\end{proof}

\geometrylps*
\begin{proof}
    The first summand follows directly from \autoref{thm:paths} and \autoref{lem:reachissum}. For the rest of the summands, we focus on an arbitrary $I_\ell$. Now, consider $\chi_i,\chi_j$ with $i,j \in I_\ell$. We have that the following equalities, where we write $\relint(G)$ instead of $\relint(\cone(G))$, hold.
    \begin{align*}
        & \Reach(\chi_i^*) + \Reach(\chi_j^*)\\
        {}={}& \relint(G(\chi_i)) \cup \{\vec{0}\} + \relint(G(\chi_j)) \cup \{\vec{0}\} & \text{by \autoref{thm:cycles}}\\
        {}={}&\left(\relint(G(\chi_i)) + \relint(G(\chi_j))\right) \\
        & {} \cup \relint(G(\chi_i)) \cup \relint(G(\chi_j)) \cup \{\vec{0}\} & \text{M. sum distributes over } \cup\\
        {}={}&\relint(G(\chi_i) \cup G(\chi_j)) \cup \{\vec{0}\} & \text{by \autoref{lemma:sumcones}} \\
    \end{align*}
    Note that the final equality uses the last part of \autoref{lemma:sumcones} which requires that $\linspan(G(\chi_i)) = \linspan(G(\chi_j))$. The summation grouping common linear subspaces thus follows by induction.
\end{proof}

\sumcones*
\begin{proof}
    We first note that $C+C' = \cone(G \cup G')$ follows immediately from the definition of convex cones. To prove that $\relint(C + C') = \relint(C) + \relint(C')$, we reuse the proof of \autoref{thm:cycles}. First, let us argue for the $\subseteq$-inclusion. Consider some $\vec{x} \in \relint(C + C')$. Then, from the first part of the proof of \autoref{thm:cycles}, we know that $\vec{x}$ has a representation with strictly positive coefficients w.r.t. to $G \cup G'$, i.e. $\vec{x} = \sum_{\vec{g} \in G \cup G'} \alpha_{\vec{g}} \vec{g}$ with $\alpha_{\vec{g}} > 0$ for all $\vec{g}$. Define: 
    \[
        \vec{c} = \sum_{\vec{g} \in G \cap G'} \frac{\alpha_{\vec{g}}}{2} \vec{g} + \sum_{\vec{g} \in G \setminus G'} \alpha_{\vec{g}} \vec{g}
    \]
    and similarly:
    \[
        \vec{c'} = \sum_{\vec{g} \in G' \cap G} \frac{\alpha_{\vec{g}}}{2} \vec{g} + \sum_{\vec{g} \in G' \setminus G} \alpha_{\vec{g}} \vec{g}.
    \]
    Note that $\vec{c} \in C$, $\vec{c'} \in C'$ and that, by construction, they both admit representations with strictly positive coefficients w.r.t. to $G$ and $G'$ respectively. It follows from the second part of the proof of \autoref{thm:cycles} that $\vec{c} \in \relint(C)$ and $\vec{c'} \in \relint(C')$. Since $\vec{x} = \vec{c} + \vec{c'}$ and $\vec{x}$ was arbitrary, the inclusion holds.
    
    Second, let us prove the $\supseteq$-inclusion is true as well. Consider $\vec{c} \in \relint(C)$ and $\vec{c'} \in \relint(C')$. Then, from the second part of the proof of \autoref{thm:cycles} we have that they admit a representation with positive coefficients w.r.t. to $G$ and $G'$ respectively. Clearly their sum $\vec{c} + \vec{c'} = \vec{x}$ admits such a representation w.r.t. $G \cup G'$. Hence $\vec{x} \in C + C'$. Moreover, because of the first part of the proof of \autoref{thm:cycles}, $\vec{x} \in \relint(C + C')$, thus concluding the argument.
    
    Now, suppose $\linspan(G) = \linspan(G')$. We show that $\relint(C) \subseteq \relint(C + C')$. The remaining inclusion is argued for identically.
    Let $V = \linspan(G) = \linspan(G') = \linspan(G \cup G')$ and note that, since $\vec{0} \in C \cap C' \cap
    \cone(G \cup G')$, we have:
    \[
    \aff(C) = \aff(C') = \aff(\cone(G \cup G')) = V.
    \]
    The following equalities thus hold.
    \begin{align*}
        & \relint(C)\\
        {}={} & \{\vec{c} \in C \mid \exists \varepsilon > 0, B_\varepsilon(\vec{c}) \cap V \subseteq C\}\\
        {}\subseteq{} & \{\vec{c} \in C \mid \exists \varepsilon > 0, B_\varepsilon(\vec{c}) \cap V \subseteq \cone(G\cup G')\} & C \subseteq \cone(G\cup G')\\
        {}\subseteq{} & \{\vec{c} \in \cone(G \cup G') \mid \exists \varepsilon > 0, B_\varepsilon(\vec{c}) \cap V \subseteq \cone(G\cup G')\} & \text{same as above}\\
        {}={} & \relint(\cone(G \cup G')) & \text{by def.}\\
        {} = {} & \relint(C + C')
    \end{align*}
    This concludes the proof.
\end{proof}

\section{Missing Proofs of Section~\ref{sec:qvass-reach}}

\cyclerep*

\begin{proof}
    By \autoref{thm:cycles}, it suffices to prove the following.
    \[
       \relint(\cone(G(\chi))) \cup \{\vec 0\} + \sum_{\theta \in C} \relint(\cone(G(\theta))) \cup \{\vec 0\} = \sum_{\theta \in C} \Reach(\theta^*)
    \]
    Now --- just like in the proof of \autoref{thm:geometry-lps} --- we have, by induction on \autoref{lemma:sumcones}, and because $\linspan(G(\theta_1)) = \linspan(G(\theta_2))$ for all $\theta_1,\theta_2 \in C \cup \{\chi\}$, that the following holds.
    \begin{equation}\label{eqn:ind-sumcones}
        \sum_{\theta \in C} \relint(\cone(G(\theta))) \cup \{\vec 0\} = 
        \relint\left(\cone\left(\bigcup_{\theta \in C} G(\theta)\right)\right) \cup \{\vec{0}\} 
    \end{equation}
    A second application of \autoref{lemma:sumcones} tells us the following.
    \begin{align*}
    & \relint(\cone(G(\chi))) \cup \vec{\vec{0}} +
    \relint\left(\cone\left(\bigcup_{\theta \in C} G(\theta)\right)\right) \cup \{\vec{0}\}\\
    {}  ={} & \relint\left(\cone\left(\bigcup_{\theta \in C \cup\{\chi\}} G(\theta)\right)\right) \cup \{\vec{0}\} 
    \end{align*}
    Observe that $\cone(\bigcup_{\theta \in C \cup \{\chi\}} G(\theta)) = \cone(\bigcup_{\theta \in C} G(\theta))$ because $\llbracket \chi \rrbracket \subseteq \bigcup_{\theta \in C} \llbracket \theta \rrbracket$ and therefore $G(\chi) \subseteq \bigcup_{\theta \in C} G(\theta)$. The result thus follows from \autoref{eqn:ind-sumcones} and \autoref{thm:cycles}.
\end{proof}

\section{Missing Proofs of Section~\ref{sec:qposreach}} 

\lemflattenlps*
\begin{proof}
    In this proof, for a run $\rho$, we write $p(\vec{x}) \xrightarrow q(\vec{y})$ to denote the fact that the first and last configurations of $\rho$ are $p(\vec{x})$ and $q(\vec{y})$, respectively. For brevity, we give the proof only for the second part of the claim. The first part of the claim is much easier and follows from a subset of the arguments given below.

      We start with the following observation: let $\vec{u},\vec{v}$ be two vectors such that $q(\vec{u})\xrightarrow{\rho} q(\vec{v})$ for some run $\rho$, then: \[q(\vec{u})\xrightarrow{(\frac12\rho)^2} q(\vec{v})\] where $\frac12\rho$ is the run obtained from $\rho$ by multiplying all coefficients by $\frac12$. Indeed, assume by way of contradiction this does not hold, then: \[q(\vec{u})\xrightarrow{\frac12\rho} q(2^{-1}(\vec{v}+\vec{u}))\] but at least one of the counters, w.l.o.g. counter 1, becomes negative when following $\frac12\rho$ from $\frac12(\vec{v}+\vec{u})$. Let $b<0$ be the minimal value of counter 1 when following $\frac12\rho$ from counter $(0,0)$. The value $b$ is commonly referred to as the \emph{drop} or \emph{nadir} of $\frac12\rho$. Then,  $b<-\frac12(\vec{v}_1+\vec{u}_1)$, so $2b<-(\vec{v}_1+\vec{u}_1)$. Since $\vec{v}_1\ge 0$, we have in particular that $2b<-\vec{u}_1$. 
  Now notice that the drop of counter $1$ in $\rho$ is $2b$, meaning that $2b\ge -\vec{u}_1$, as $\rho$ can be taken with nonnegative counters from $\vec{u}$, which is a contradiction.
\end{proof}

\section{Zero-test reachability for LPS is in NP}
Here we provide the encoding of zero-test reachability for linear path schemes similar to the one given in \autoref{sec:qvass-reach}. We $\mathbb{Q}$VASS with zero tests of dimension $d \geq 1$ with zero-tests on the states. Our encoding yields an existential formula from the first-order logic over the integers with addition and order --- i.e., the structure $FO(\mathbb{Z}; +, 0, 1,\leq)$, a.k.a. Presburger arithmetic. It is well known that the satisfiability problem for the latter is in \NP{} (see, e.g.,~\cite{haase18}).

\begin{theorem}\label{theorem:reachminsky}
    Given a linear path scheme $\sigma$ and $\vec{x}, \vec{y} \in \mathbb{Q}^d$,
    determining whether $\vec{y} \in \Reach^{\vec{x}}(\sigma)$ is in \NP.
\end{theorem}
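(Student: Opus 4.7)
The plan is to encode the reachability question into an existential Presburger arithmetic formula whose satisfiability will yield the desired \NP{} upper bound via~\cite{haase18}. The encoding builds on the one developed in the proof of \autoref{thm:lpsreachp}, which already captures plain reachability for LPS: a rational coefficient variable in $(0, 1]$ is introduced for every transition appearing in the simple paths $\pi_0, \ldots, \pi_n$; for each cycle $\chi_i$, an iteration count $K_i \in \mathbb{N}$ together with an aggregated coefficient vector over its transitions is introduced, relying on the cone characterization of \autoref{thm:cycles}; and rational quantities are handled uniformly by clearing denominators. The usual linear equation then asserts that $\vec{x}$, augmented by the total path and cycle contributions, equals $\vec{y}$.

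The novelty lies in adding constraints for zero tests. A zero test at an intermediate state of some $\pi_i$ on counter $j$ is straightforward: the value of counter $j$ at that state is a known linear function of the coefficients chosen so far, so we simply assert that it equals zero. The delicate case is a zero test at a state $q_s$ inside some cycle $\chi_i$, since the constraint must hold at every one of the $K_i$ visits to $q_s$.

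My plan for cycle-internal zero tests is to partition each such cycle's iterations into three blocks, namely the first iteration, the last iteration, and an aggregated middle block, introducing dedicated coefficient variables $\beta_{i,t}$ and $\gamma_{i,t}$ for the first and last iterations while leaving the middle iterations aggregated. The zero-test constraint at every middle visit is then expressed by rotating $\chi_i$ to begin at $q_s$: this reframes the $K_i$ visits to $q_s$ as a first-iteration prefix followed by $K_i - 1$ traversals of a derived cycle from $q_s$ back to $q_s$ followed by a last-iteration suffix, and forces each of the derived traversals to have zero effect on counter $j$. Aggregating yields conditions in the formula of the form ``the prefix built from $\beta_{i,\cdot}$ cancels the initial value of counter $j$'', ``the aggregated middle block has zero effect on counter $j$'', and a matching compatibility equality involving $\gamma_{i,\cdot}$. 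Case analysis over $K_i \in \{0, 1, 2, \geq 3\}$ is encoded via Presburger disjunctions, and multiple zero tests within the same cycle are accommodated by conjoining the corresponding equalities.

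The main obstacle is proving completeness of this encoding: that any valid run of $\chi_i$ satisfying all cycle-internal zero tests projects to an assignment of the coefficient variables satisfying the aggregated formulation. This is where the flexibility of continuous semantics must be exploited to redistribute per-iteration coefficients while preserving both the total cycle effect and the zero-test invariants along the rotated cycle. Once soundness and completeness of the per-cycle encoding are in place and its constraints are combined with the straightforward path constraints, the resulting formula has size polynomial in $|\sigma|$ and $d$, and the \NP{} bound follows from the complexity of satisfiability for existential Presburger arithmetic.
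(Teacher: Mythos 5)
Your overall strategy is the same as the paper's: encode path zero tests as linear constraints on partial sums, observe that a zero test visited on consecutive iterations of a cycle forces the cycle, rotated to begin at the tested state, to have zero effect on the tested counter, and replace the unbounded iteration count by a small case analysis. The paper packages the key structural fact as \autoref{lemma:numberofcyclesminksy} (a cycle with zero tests can be taken never, exactly once, or arbitrarily many times, and in the last case its effect on every tested coordinate vanishes), so its encoding only needs reachability via $\chi$, via $\chi^2$, and via ``$\chi^2$ plus any multiple of the cycle effect.'' One important discrepancy: the paper's appendix argument is really carried out in the \emph{discrete} semantics (fixed, unscaled effects --- see the unscaled partial sums in the definition of ``can be taken from $\vec{x}$,'' the integer-valued solution vectors, and the footnote in the introduction attaching this encoding to the discrete column of \autoref{tab:summary}), whereas you work with per-transition coefficients in $(0,1]$. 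In the discrete setting each iteration has the same effect, the trichotomy is immediate, and the difficulty you identify as the ``main obstacle'' simply does not arise.

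In the continuous setting your three-block decomposition, as described, has a real hole exactly where you flag it. Writing $P_k$, $S_k$, $E_k$ for the prefix (up to the tested state $q_s$), suffix, and full effect of iteration $k$ on the tested counter $j$, the zero test at the second visit reads $x_j + E_1(\beta) + P_2 = 0$, and the rotated boundary constraints read $S_1(\beta) + P_2 = 0$ and $S_{K-1} + P_K(\gamma) = 0$; the quantities $P_2$ and $S_{K-1}$ are prefixes and suffixes of \emph{specific middle iterations} and are not linear functions of your aggregated middle block (dividing the aggregate prefix by the unknown number of middle iterations is not Presburger- or LP-expressible). You need either dedicated coefficient blocks for one or two representative middle iterations together with a normalization argument showing all middle traversals may be taken proportional to them, or the paper's route of handling $K\le 2$ as plain paths and characterizing $K\ge 2$ by ``the rotated cycle has zero effect on every tested coordinate.'' Finally, ``clearing denominators'' does not turn existentially quantified rational coefficients into Presburger variables; for the continuous version the right target is existential linear rational arithmetic (or the CSH systems of \autoref{thm:lpsreachp}), which still yields the \NP{} bound.
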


For the remainder of this section, we fix a $\mathbb{Q}$VASS with zero tests defined, using the notation from \autoref{sec:zero-tests}, as the tuple $(Q, T, \ell, Z_1, Z_2, \dots, Z_d)$.

Consider a path $\pi = (p_1, p_2) \dots (p_{n-1}, p_n)$, and $G(\pi) = \{\vec{g_1}, \vec{g_2}, \dots, \vec{g_n}\}$. We say that $\pi$ is \emph{can be taken from $\vec{x}$} if for all $1 \leq j \leq n$ we have that $p_j \in Z_k$ (for some $1 \leq k \leq d$) implies $(\vec{x})_k + \sum_{i = 1}^j (\vec{g_i})_k = 0$. It should be clear that $p(\vec{y})$ is
reachable from $q(\vec{x})$ via $\pi$ if and only if $\pi$ can be taken from $\vec{x}$ and $\vec{x} + \sum_{i = 0}^n \vec{g_i} = \vec{y}$. Further, note that being able to take a path from $\vec{x}$ can straightforwardly be encoded as a system of linear equalities over the integers in polynomial time.
\begin{lemma}\label{lemma:pathminsky}
    One can, in polynomial time, construct an existential Presburger sentence $\varphi$
    such that $\vec{z} \in \mathbb{Z}^{nd}$ is a solution of $\varphi$ if and only if $p_1(\vec{x}) \xrightarrow{\pi} p_n(\vec{y})$.
\end{lemma}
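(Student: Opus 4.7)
The plan is to encode the semantics of a run lifted from $\pi$ directly as a conjunction of linear (in)equalities over integer variables, which is immediately an existential Presburger sentence. Recall that a run from $p_1(\vec{x})$ to $p_n(\vec{y})$ along $\pi$ is determined by a sequence of rational coefficients $\alpha_1, \dots, \alpha_{n-1} \in (0,1]$, one per transition. To fit these inside an integer theory, I would clear denominators by introducing a fresh integer variable $D \geq 1$ together with integer variables $N_1, \dots, N_{n-1}$ subject to $1 \leq N_i \leq D$, so that $\alpha_i$ is represented as $N_i/D$. Writing $\vec{x}$ and $\vec{y}$ over a common denominator $E$ (computable in polynomial time from their representations) and letting $\vec{x}', \vec{y}' \in \mathbb{Z}^d$ be $E\vec{x}, E\vec{y}$, all further arithmetic becomes integral.

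Next, for each $0 \leq j \leq n-1$, I would introduce an integer $d$-tuple $\vec{c_j}$ representing $DE$ times the $j$-th intermediate counter vector along the run. The run is then encoded by the linear equalities $\vec{c_0} = D\vec{x}'$, $\vec{c_j} = \vec{c_{j-1}} + E N_j \vec{g_j}$ for $1 \leq j \leq n-1$, and $\vec{c_{n-1}} = D\vec{y}'$. For every position $j$ such that $p_j \in Z_k$ for some $k$, I would add the linear constraint $(\vec{c_j})_k = 0$; since $DE \geq 1$, this is equivalent to the zero-test condition on the unscaled counter values, because equality with zero is invariant under multiplication by the strictly positive factor $DE$.

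Conjoining all of the above constraints and existentially quantifying over $D$, the $N_i$'s, and all $\vec{c_j}$'s yields a Presburger sentence $\varphi$. Its size is polynomial in the encodings of $\pi$, $\vec{x}$, and $\vec{y}$, since the number of variables is $O(nd)$, the number of constraints is $O(nd)$, and all coefficients (including $E$ and the entries of $\vec{x}', \vec{y}'$, and the $\vec{g_i}$) have polynomial bit length. The bijection between models of $\varphi$ and rational coefficient sequences lifting $\pi$ into a valid run witnessing $p_1(\vec{x}) \xrightarrow{\pi} p_n(\vec{y})$ follows immediately from the definitions: a solution $(D, N_1, \dots, N_{n-1}, \vec{c_0}, \dots, \vec{c_{n-1}})$ gives the run with coefficients $\alpha_i = N_i/D$, and conversely any such run rescales (by clearing denominators of the $\alpha_i$) to a solution.

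The main subtle point is the interaction between rational coefficients and an integer theory; the standard common-denominator trick handles it cleanly, and the fact that zero tests survive positive rescaling is what makes the encoding work without any additional disjunctions. Everything else is a routine translation from the semantic definition of a run with zero tests to integer linear arithmetic.
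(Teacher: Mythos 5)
Your encoding is correct and follows the same basic strategy as the paper: translate the run semantics of $\pi$, including the zero tests, into a conjunction of integer linear constraints, which is directly an existential Presburger sentence of polynomial size. You are in fact more explicit than the paper's sketch about one point it glosses over --- representing the rational coefficients $\alpha_i$ inside an integer theory via a common denominator $D$ and numerators $N_i$ with $1 \leq N_i \leq D$, and observing that the zero tests are invariant under the resulting positive rescaling --- and this added care is welcome rather than a deviation.
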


To conclude, we need to be able to encode zero-test rechability via cycles. Let us fix a cycle $\chi = (p_1, p_2) \dots (p_{n-1}, p_1)$. Observe that cycles can only be taken once, never, or arbitrarily many times from a given counter value $\vec{x}$. 

\begin{lemma}\label{lemma:numberofcyclesminksy}
    For all cycles $\chi$, exactly one of the following holds:
    $\chi$ cannot be followed from $\vec{x}$, $\chi$ can be followed once from $\vec{x}$, or $\chi$ can be followed arbitrarily many times from $\vec{x}$. 
\end{lemma}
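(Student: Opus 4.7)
The lemma's three cases are trivially mutually exclusive, so the substantive claim is that if $\chi$ can be followed at least twice from $\vec{x}$, then $\chi$ can be followed any number of times from $\vec{x}$. My plan is to prove exactly this reduction.

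I first encode cycle iteration as a linear system. For $\chi = (p_1,p_2)\cdots(p_{n-1},p_1)$ with update labels $\vec{g}_1,\ldots,\vec{g}_n$, an iteration with coefficients $\alpha \in (0,1]^n$ from a counter value $\vec{y}$ brings the counters to $\vec{y} + \sum_i \alpha_i \vec{g}_i$ and is valid iff every zero-test state $p_j \in Z_k$ visited satisfies $(\vec{y})_k + \sum_{i<j}\alpha_i (\vec{g}_i)_k = 0$. Collecting these equations gives a matrix equation $A\alpha = -B\vec{y}$, where $B$ projects onto the zero-test counter coordinates. Let $\Phi\alpha := B\sum_i \alpha_i \vec{g}_i$ denote the net effect of one iteration on the zero-test coordinates. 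Then $k$ valid iterations from $\vec{x}$ with coefficient vectors $\alpha^{(1)},\ldots,\alpha^{(k)}$ are characterised by $A\alpha^{(t)} = -B\vec{x} - \sum_{s<t}\Phi\alpha^{(s)}$ for every $t \le k$.

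Assume $\alpha^{(1)},\alpha^{(2)}\in(0,1]^n$ witness two iterations, and set $\vec{y}_1 := \vec{x} + \sum_i \alpha^{(1)}_i \vec{g}_i$. The core step is to produce a coefficient vector $\alpha^*\in(0,1]^n$ that is (a) valid from $\vec{y}_1$, i.e.\ $A\alpha^* = -B\vec{y}_1$, and (b) has zero net effect on the zero-test coordinates, i.e.\ $\Phi\alpha^* = 0$. Given such $\alpha^*$, the run that follows $\chi$ once with $\alpha^{(1)}$ and then repeats $\alpha^*$ is valid for arbitrarily many iterations: by (b), after each repetition the counter value at $p_1$ coincides with $\vec{y}_1$ on every zero-test counter coordinate, so $\alpha^*$ remains valid indefinitely, yielding a run of any prescribed length.

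To produce $\alpha^*$, subtract the two validity equations to obtain the structural identity $A(\alpha^{(1)}-\alpha^{(2)}) = \Phi\alpha^{(1)}$. If $\Phi\alpha^{(2)}=0$, take $\alpha^* := \alpha^{(2)}$ and we are done; otherwise, search for $\alpha^* = \alpha^{(2)} + v$ with $v\in\ker(A)$ and $\Phi v = -\Phi\alpha^{(2)}$. The identity above is the algebraic relationship between $\ker(A)$ and the image of $\Phi$ needed to construct such a $v$. The main obstacle I anticipate is ensuring $\alpha^* \in (0,1]^n$ strictly, rather than merely in the affine solution set; resolving this will likely require first rechoosing $\alpha^{(1)}$ within its valid affine polytope $\alpha^{(1)}+\ker(A)\cap(0,1]^n$ so as to shift $\vec{y}_1$ into a region admitting a strictly positive $\alpha^*$, exploiting the convexity and boundedness of the polytope $A\cdot(0,1]^n$ (which contains both $-B\vec{x}$ and $-B\vec{y}_1$) to guarantee sufficient maneuvering room.
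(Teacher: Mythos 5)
Your proof attacks the wrong statement. This lemma sits in the appendix section on zero-test reachability, where ``$\pi$ can be taken from $\vec{x}$'' is defined immediately beforehand \emph{without} scaling coefficients: the condition is $(\vec{x})_k + \sum_{i=1}^{j}(\vec{g_i})_k = 0$ at each tested state, so every traversal of $\chi$ applies the same fixed update vectors. The paper's proof is then a two-line observation: let $I$ be the tested coordinates; if the net effect of one traversal is nonzero on some $i \in I$, the first $Z_i$-state of a second traversal forces $(\vec{x})_i = (\vec{x}^{(1)})_i$, a contradiction, so at most one traversal is possible; if the net effect is zero on all of $I$, then $\vec{x}$ and $\vec{x}^{(1)}$ agree on every tested coordinate and induction gives arbitrarily many traversals. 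By instead allowing per-iteration, per-transition coefficients $\alpha \in (0,1]^n$, you are trying to prove a claim that is in fact \emph{false}: the paper's own multiplier gadget (\autoref{fig:mult-gadget}) is a cycle with zero tests whose unique valid coefficient vector from $(x,0)$ yields net effect $((d-1)x,0) \neq \vec{0}$, yet the cycle can be followed exactly $k$ times from $(1,0)$ for some finite $k > 1$ and no more. So the trichotomy fails under your semantics, and no completion of your argument can exist.

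This falsity surfaces precisely at the step you flag as your ``main obstacle.'' Your plan hinges on producing $\alpha^*$ with $A\alpha^* = -B\vec{y}_1$ and $\Phi\alpha^* = 0$; in the multiplier gadget the solution set of $A\alpha = -B\vec{y}$ is a single point with $\Phi \neq 0$, so no such $\alpha^*$ exists, no rechoice of $\alpha^{(1)}$ helps, and the kernel manipulation via $A(\alpha^{(1)}-\alpha^{(2)}) = \Phi\alpha^{(1)}$ has nothing to work with. Even setting the semantics issue aside, the proposal is therefore incomplete at its decisive step rather than merely needing polish. The fix is not technical but interpretive: adopt the fixed-effect reading of ``can be taken'' that the appendix defines, under which the lemma is elementary.
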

\begin{proof}
    It is clear to see that if $\chi$ cannot be taken from $\vec{x}$, then we cannot take the cycle at all. Hence, we focus on the two other cases. Let $S \subseteq Q$, we write below $\chi \cap S$ to denote the set of states from $S$ that are reached along the cycle $\chi$.
    
    Let $\vec{x^{(1)}}$ be the counter values obtained after following the cycle once. Let $I \coloneqq \{i\in \mathbb{N} \mid \chi  \cap Z_i \neq \emptyset \}$, or the coordinates for which there is a zero test along $\chi$.
    We claim that if there exists some $i \in I$ s.t. $(\vec{x})_i \neq \vec{x^{(1)}}_i$, then we can follow the cycle at most once. Towards a contradiction, assume the complement holds and we can follow the cycle multiple times. Let $r$ be the first state along $\chi$ s.t. $r\in Z_i$. Since we can follow the cycle at least once, we know that $(\vec{x})_i + (\vec{g_1})_i + \dots + (\vec{g_r})_i = 0$, otherwise, the cycle cannot be taken once from $\vec{x}$. Similarly, we have that $(\vec{x}^{(1)})_i + (\vec{g_1})_i + \dots + (\vec{g_r})_i = 0$. But then $\vec{x}_i = \vec{x}^{(1)}_i$, which is in contradiction with our assumption. 
    
    Finally, we have to prove that in the remaining case we can take $\chi$ arbitrarily many times. This follows from the argument above and the fact that $\chi$ can be taken from $\vec{x}$. Since $(\vec{x})_i = \vec{x}^{(1)}_i$ for all $i\in I$, we have that the effect on all such coordinates is $0$ and the $\chi$ can be taken from $\vec{x}^{(1)}$. We can repeat this argument from $\vec{x}^{(1)}$ and thus, by induction, this concludes the proof.
\end{proof}

What remains is to argue that we can encode reachability via $\chi$ into an existential Presburger formula. By \autoref{lemma:numberofcyclesminksy}, it suffices to encode reachability via $\chi$ or $\chi^2$ as paths, using \autoref{lemma:pathminsky}. In the latter case, $\vec{y}$ can be $\vec{x}$ plus any factor of the effect of $\chi$. Finally, to know which case applies, we only need to check whether the effect of the cycle on certain coordinates (with zero tests) is $0$. This can clearly be encoded in the required fragment of Presburger arithmetic. 

\begin{lemma}
    One can, in polynomial time, construct an existential Presburger formula $\psi$ such that $\vec{z} \in \mathbb{Z}^{2nd}$ is a solution of $\psi$ if and only if $p_1(\vec{x}) \xrightarrow{\chi^*} p_n(\vec{x})$.
\end{lemma}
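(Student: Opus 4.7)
The plan is to leverage \autoref{lemma:numberofcyclesminksy}, which identifies exactly three regimes for reachability via $\chi^*$ from $\vec{x}$: $\chi$ can be taken zero times, exactly once, or arbitrarily many times. I will construct $\psi$ as a disjunction of three existential Presburger sub-formulas, one per regime.

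For the zero-iteration case, the sub-formula is simply $\vec{y} = \vec{x}$. For the single-iteration case, I apply \autoref{lemma:pathminsky} to $\chi$ viewed as a path, producing a polynomial-size formula $\varphi_\chi(\vec{x},\vec{y})$ whose solutions witness $p_1(\vec{x}) \xrightarrow{\chi} p_1(\vec{y})$.

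The arbitrarily-many-iterations case is the interesting one. Here I apply \autoref{lemma:pathminsky} to the doubled path $\chi\chi$, producing a formula $\varphi_{\chi\chi}(\vec{x}, \vec{w})$ in which $\vec{w}$ is an auxiliary variable encoding the configuration reached after two iterations. By the proof of \autoref{lemma:numberofcyclesminksy}, the existence of a valid two-iteration run precisely certifies that the effect of $\chi$ on every zero-tested coordinate vanishes, and hence that $\chi$ may then be iterated any number of times. It remains to express that $\vec{y}$ is reached after some total number of iterations, i.e.\ that $\vec{y} - \vec{x}$ can be written as a sum of valid single-iteration effects of $\chi$. This is equivalent to requiring $\vec{y} - \vec{x}$ to lie in the conic closure $\cone(A)$ of the polyhedron $A$ of valid single-iteration coefficient vectors of $\chi$ (defined by $\alpha_i \in (0,1]$ together with linear partial-sum equations at each zero-test state). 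A standard homogenization, introducing an auxiliary scaling variable $\lambda \geq 0$, yields a polynomial-size linear-arithmetic description of $\cone(A)$, keeping the whole encoding within existential Presburger.

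The main obstacle is precisely this arbitrarily-many-iterations case: a naive encoding of the form ``$\vec{y} = \vec{x} + k \vec{d}$ with $\vec{d}$ a variable effect and $k \in \mathbb{N}$'' would multiply two variables and thereby escape the Presburger fragment. The homogenization trick above circumvents this by describing the conic closure of valid effects directly instead of integer multiples of a single variable effect. Taking the disjunction of the three sub-formulas gives a formula $\psi$ of polynomial size, and correctness is immediate from \autoref{lemma:numberofcyclesminksy}: the solutions of $\psi$ are exactly the $\vec{y}$ reachable from $\vec{x}$ via zero, one, or arbitrarily many iterations of $\chi$.
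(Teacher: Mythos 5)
Your overall structure---a three-way disjunction driven by \autoref{lemma:numberofcyclesminksy}, with \autoref{lemma:pathminsky} applied to $\chi$ and to $\chi\chi$---matches the paper's. The problem is your treatment of the arbitrarily-many-iterations case, and it stems from a misreading of the semantics in play. This appendix encodes \emph{discrete} zero-test reachability (which is why the formula is over $\mathbb{Z}$ and why the notion of a path being ``taken from $\vec{x}$'' uses the unscaled labels $\vec{g_i}$, with no coefficients $\alpha_i$). In that setting one iteration of $\chi$ has a single \emph{constant} effect vector $\vec{d}=\sum_i \vec{g_i}$, so the obstacle you identify---that ``$\vec{y}=\vec{x}+k\vec{d}$'' multiplies two variables---does not arise: $k$ is the only variable and the constraint is already linear, hence existential Presburger. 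This is exactly what the paper does (``$\vec{y}$ can be $\vec{x}$ plus any factor of the effect of $\chi$'').

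Moreover, your replacement for this step is not merely unnecessary but incorrect. Passing to the conic closure of the set of valid single-iteration effects via a scaling variable $\lambda\ge 0$ over-approximates the union over $k$ of the $k$-fold iteration effects. In the discrete case it admits non-integer multiples of $\vec{d}$; and even under a continuous reading it fails, because zero tests can pin the coefficients to unique values (see the multiplier gadget of \autoref{fig:mult-gadget}, where ``for each transition there is only one feasible coefficient''), so that $k$ iterations reach exactly $\vec{x}+k\vec{d}$---a discrete set of multiples, not a cone; the cone of the single-iteration effect set is not the union of its $k$-fold Minkowski sums. The fix is to drop the homogenization and existentially quantify an integer $k\ge 0$ with $\vec{y}=\vec{x}+k\vec{d}$ for the constant $\vec{d}$, guarded by the $\chi\chi$-feasibility check you already have.
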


Combining the above with \autoref{lemma:pathminsky}, one obtains a (polynomial-sized) formula in the announced fragment of Presburger arithmetic for the whole LPS. This concludes the proof of \autoref{theorem:reachminsky}.

\end{document}